\newtheorem{theorem}{Theorem}
\newtheorem{lemma}{Lemma}
\newtheorem{prop}{Proposition}
\def\PP{\mathbf{P}}     % Probability (Louis)
\def\EE{\mathbf{E}}     % Expectation (Louis
\def\Var{\textnormal{Var}}
\def\Li{\textnormal{Li}}
\definecolor{Red}{rgb}{1,0,0}
\definecolor{Blue}{rgb}{0,0,1}
\definecolor{Pink}{rgb}{0,0,0}
\def\pink{\color{Pink}}
\newcommand{\rev}[1]{{\pink #1}}
\date{\today}
\author{
Brandon Legried\footnote{
School of Mathematics, Georgia Institute of Technology, Atlanta, GA.  Corresponding author.  Email:  blegried3@math.gatech.edu.
ORCID [0000-0003-2016-753X,\orcidlink{0000-0003-2016-753X}] }
}
\title{Anomaly zones for uniformly sampled gene trees under the gene duplication and loss model}
\begin{document}

\maketitle

\begin{abstract}
Recently, there has been interest in extending long-known results about the multispecies coalescent tree to other models of gene trees.  Results about the gene duplication and loss (GDL) tree have mathematical proofs, including species tree identifiability, estimability, and sample complexity of popular algorithms like ASTRAL.  Here, this work is continued by characterizing the anomaly zones of uniformly sampled gene trees.  The anomaly zone for species trees is the set of parameters where some discordant gene tree occurs with the maximal probability.  The detection of anomalous gene trees is an important problem in phylogenomics, as their presence renders effective estimation methods to being positively misleading.  Under the multispecies coalescent, anomaly zones are known to exist for rooted species trees with as few as four species.

The gene duplication and loss process is a generalization of the generalized linear-birth death process to the rooted species tree, where each edge is treated as a single timeline with exponential-rate duplication and loss.  The methods and results come from a detailed probabilistic analysis of trajectories observed from this stochastic process.  It is shown that anomaly zones do not exist for rooted GDL balanced trees on four species, but \rev{do} exist for rooted caterpillar trees, as with \rev{the} multispecies coalescent.
\end{abstract}

\section{Introduction}

The reconstruction of phylogenetic trees often begins with the analysis of molecular sequences of existing species.  Probabilistic and computational methods are used to establish rigorous convergence results as the amount of data goes to infinity.  The phylogenomic framework is a two-step approach.  First, molecular sequences are used to reconstruct gene phylogenies that depict the evolution of a locus within the genome, and the existing computational technology allows practitioners to collectively estimate many gene trees.  A simplifying assumption in the phylogenomic approach is that locus sequences are disjoint so that their evolutionary trajectories are roughly independent.  Then, the species tree originating the data is constructed from the many independent gene trees.  \rev{The gene trees might be assumed to be computed without gene tree estimation error (GTEE), which can happen if the sequences are relatively short.}

Even with these simplifying assumptions, species tree estimation is confounded by gene tree heterogeneity.  \rev{Heterogeneity is particular problematic for concatenation-based methods, as the species tree for the entire concatenated sequence can disagree with gene trees for particular loci, \cite{RochSteel:15}.} Common sources of heterogeneity include incomplete lineage sorting (ILS) \cite{rannala2003bayes}, horizontal gene transfer (HGT) \cite{roch12lateral}, and gene duplication and loss (GDL) \cite{ArLaSe:09}.  Many theoretical results, positive and negative, have been established when the only source of heterogeneity is ILS, see \cite{degnan2006}, \cite{allman2011identifying}, and \cite{astral}.  Incomplete lineage sorting is modeled by the \rev{multispecies} coalescent (MSC) model.

If the gene trees are assumed independent \rev{of each other}, then the \rev{``democratic vote''} estimator \rev{finds the species tree with the highest probability by counting the number of times each branching pattern appears in the list of gene trees.  As more independent gene trees are accumulated, the gene tree with the highest probability obtains the most votes almost surely.}  \rev{As in \cite{degnan2006}, this estimate of the species tree is simply the gene tree topology that occurs most often.}  Even under these ideal assumptions, there exist species trees for which the \rev{democratic vote of MSC gene trees} is positively misleading.  Such rooted trees exist when the number of species is as few as four.  Similarly, the \rev{democratic vote estimate} can be positively misleading for some unrooted trees with as few as five species.  Species trees are in the \textit{anomaly zone} if the gene tree with maximum probability is discordant from the species tree (\cite{degnan2006}).  However, by using supertree methods such as ASTRAL (\cite{astral}) on unrooted quartets, any unrooted species tree can be consistently estimated in \rev{a polynomial number of species and polynomial number of gene trees.  The ASTRAL suite has found extensive usage across many biological datasets.}  Finite sample guarantees have also been developed, see \cite{Shekhar_2018}.  \rev{This type of result assumes some level of error tolerance $\epsilon$, then provides a minimum number of genes that are required to obtain a provable amount of error below the tolerance level.}

Much less is known about species tree estimation in the presence of GDL.  Recently, using probabilistic and cominbatorial arguments, it was shown that \rev{the ``democratic vote winner''} is a consistent estimator of unrooted quartets under GDL (\cite{legriedGDL2021}), so ASTRAL is also consistent when the input data are GDL trees rather than MSC trees.  \rev{A parallel result holds for rooted triples through a nearly identical analysis.} The result was generalized further when gene trees come from the DLCoal model, where GDL and ILS occur simultaneously, see \cite{eulenstein2020unified} and \cite{hill2022species}.  \rev{Both of these results are counter-intuitive, as the ASTRAL pipeline was not developed with the GDL model in mind.} Finite sample guarantees for ASTRAL have been proven, showing a sufficient amount of data needed to obtain high probability results \cite{hill2022species}.  

In this paper, the distribution of gene trees is described further for gene trees generated under GDL.  With this further information, we describe when anomaly zones can exist for gene trees generated under GDL for rooted species trees on either three or four species.  As with anomalous gene trees in the multispecies coalescent model, the lengths of interior edges of the species tree are important.  \rev{As the interior branch lengths in the species tree grow to infinity, the probability that the gene tree topology coincides with that of the species tree goes to $1$.  The discordant gene trees have less probability.  Similarly for GDL}, species trees with longer interior edges have lower probabilities of discordant gene trees.  However, the parameters governing birth and death are also relevant.  \rev{As observed in \cite{hill2022species},} when the per-capita birth rate is high, the number of edges is high and the signal emitted by the species tree diminishes.  Conversely, when the birth rate is 0, \rev{every discordant gene tree has probability zero, for any setting of branch lengths in the species tree.}  Similar effects occur when the death rate is high enough to prevent excessive branching in the GDL process, \rev{but explicit quantitative results are required to understand this effect.}  This paper provides results that aid in intuiting the connection between the birth and death rates and gene tree discordance, but the focus is on the \rev{number of copies} in the ancestral population rather than the birth and death rates themselves.  The main results apply to any choice of birth and death rates and species trees with three or four leaves.

The format of this paper is as follows.  In Section 2, we make precise the definition of anomalous gene trees in the GDL context.  In Section 3, the population process is recalled, and a short result relating the arithmetic mean and quadratic mean of independent progenies is proved.  In Section 4, the relationship between the population process and the species tree is investigated.  In Section 5, it is shown that the rooted balanced quartet has no anomaly zones, in contrast to the MSC model.  In Section 6, it is shown that the rooted caterpillar quartet may or may not have anomaly zones, but that only balanced gene trees could possibly be anomalous.  

\section{Problem and model}

In this section, we describe the model and state the results.

Let $S$ be the collection of species and $\sigma = (\mathcal{T},\mathbf{f})$ be the species tree with topology $\mathcal{T} = (V,E)$ and branch lengths $\mathbf{f} = \{f_e\}_{e \in E}.$  The species tree topology $\mathcal{T}$ contains only the vertices $V$ and the edges $E = \{(u,v)\}_{u,v \in V}.$ The problem is to estimate $\mathcal{T}$ from a collection $\mathcal{G} = \{t_\iota\}_{\iota=1}^{K}$ of $K$ multi-labelled gene tree topologies.  A \textit{gene tree} is a depiction of the parental lineages of a gene or multiple gene copies from individuals across several species.  The gene tree is \textit{multi-labelled} in that each leaf is assigned exactly one label from $S$, multiple leaves may be labelled with the same species.  By contrast, a gene tree is \textit{single-labelled} if no species is used more than once in the labelling of leaves.  The process of gene duplication may create multiple copies of a gene within the same individual in a species. We refer to these duplicated genomic segments as \textit{gene copies} and we refer to collections of gene copies from different unrelated genes as \textit{gene families}.  Gene families depict the joint ancestral history of these related gene copies. In contrast, a \textit{species tree} is a depiction of the evolutionary relationships of a group of species.

Multiple leaves of a gene tree may be labelled by the same species; this corresponds to observing multiple paralogous copies (i.e., which have arisen from gene duplication) of a gene within the genome. In practice, gene trees are estimated from the molecular sequences of the corresponding genomic segments using a variety of phylogenetic reconstruction methods, see \cite{SempleSteel:03,felsenstein2003inferring,gascuel2005mathematics,yang2014molecular,steelbook2016,warnow2017book}. Here, we assume that gene trees are provided \textit{without estimation error} for a large number of gene families.
Our main modeling assumption is that these gene trees have been drawn independently from a distribution for which the species tree $\sigma$ is taken as a fixed parameter. We define the model more precisely next.

\paragraph{Model.} The gene trees are assumed to be independent and identically distributed. There are $\kappa$ gene trees given.  The process for generating a gene tree under GDL proceeds in two steps.  The process is repeated independently for each $\iota \in \{1,\dots,\kappa\}$.

Starting with a single ancestral copy of a gene at the root of $\mathcal{T}$, a tree is generated by a \rev{top-down birth-death process \textit{within} the species tree, ~\cite{ArLaSe:09}}.  On each edge in $\mathcal{T}$, each gene copy independently evolves.  It duplicates at exponential rate $\lambda \geq 0$ and is lost at exponential rate $\mu \geq 0$.  Each gene copy that survives to a speciation vertex in $\mathcal{T}$ undergoes a bifurcation into two child copies, one for each descendant edge of $\mathcal{T}$.  The process continues inductively up to the present.  The gene tree is then pruned of lost copies.  (These lineages cannot be observed.) Species labels are assigned to each leaf from $S$.  Each bifurcation in the gene tree arises through duplication or speciation.  An intuitive way to view this process is that $\sigma$ is depicted using a fat tree.  This tree constrains the linear birth-death process so that it contains a skinny tree.  A sampled realization of the GDL process on a tree with three species is given in Figure \ref{fig:GDL_three_leaves}.  \rev{Each branch of the gene tree is associated to an evolving sequence.  The results of this paper hold regardless of the choice of evolutionary model, so we have no need to specify any here.  Common choices would be the Jukes-Cantor model or general time-reversible (GTR) model, see \cite{tavare1986seq} for a full definition and discussion.}

\begin{figure}

\centering
\includegraphics[scale=0.3]{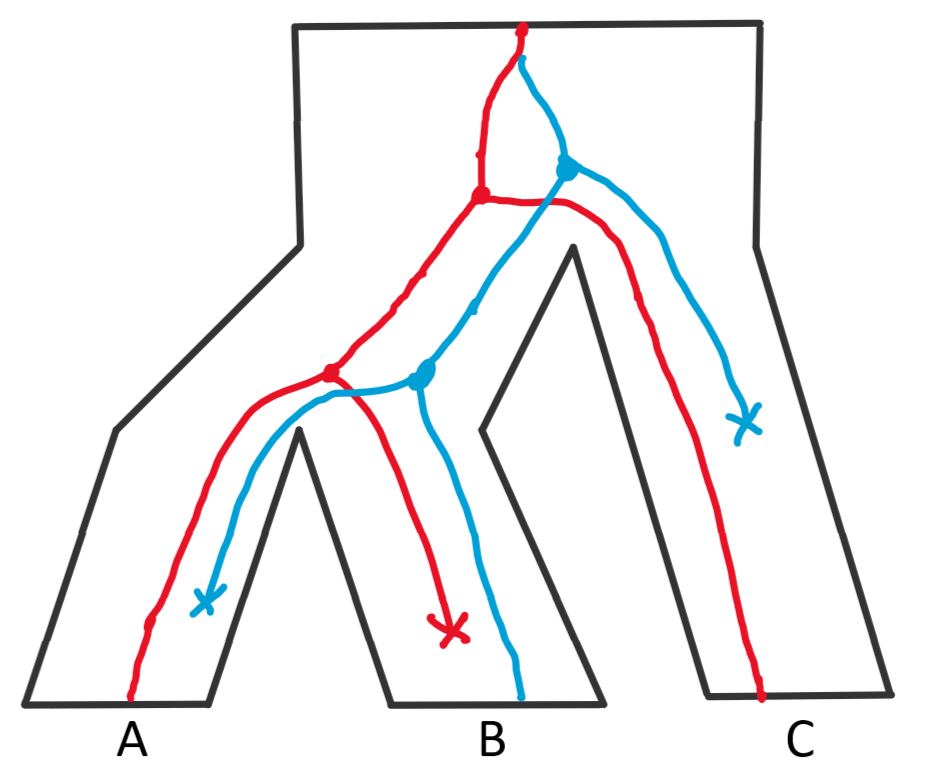}

\caption{The fat tree is the species tree on the species $A$,$B$,$C$.  The skinny tree (colored red and blue) indicates a possible gene tree where no species goes extinct. Note the topology of the pruned tree (nodes labelled ``X'' died so they are removed from the tree) puts $A$ and $C$ more closely related than $A$ and $B$, showing how discordance can arise under gene duplication and loss.}
\label{fig:GDL_three_leaves}
\end{figure}

It is assumed that there is one and only one copy from each species in the gene tree topology $t = t_{\iota}$.  Such gene trees generated by GDL have been called ``pseudoorthologs'', e.g. \cite{smith2022}.  In the next few lemmas, we analyze the likelihood function of $t$ under these assumptions.   Throughout, the species tree $\sigma = (\mathcal{T},\mathbf{f})$ is assumed to have no more than four leaves.  The edge lengths $\mathbf{f}$ are set so that the tree is ultrametric, meaning all leaves have the same distance to the root.

Gene trees under GDL are multi-labelled, in contrast to singly-labelled MSC gene trees, but the singly-labelled GDL gene trees described in the previous paragraph are still highly useful in biological problems.  In the estimation methods for species tree reconstruction (\cite{rabiee2019multi}, \cite{legriedGDL2021}, \cite{yan2022}), multi-labelled gene trees are pre-processed into a collection of singly-labelled trees.  We will need only trees generated for ASTRAL-one in this paper.  Conditioned on no species going extinct in the \rev{multispecies} linear birth-death process (this conditioning is acknowledged by $\mathbf{P}'$), ASTRAL-one selects one gene copy from each species in the gene family uniformly at random and removes the other lineages from the gene tree.  The result is a singly-labelled tree $(u(t),\mathbf{f}(t))$, which we denote $U(t)$.  Realizations $U(t)$ are called \textit{uniformly sampled gene trees}.  These singly-labelled gene trees may then be evaluated to find support for different hypotheses of the species tree.  In this paper, we describe the distribution of uniformly sampled gene trees, as they have shown to be informative to estimating the species tree.  \rev{Alternate methods to ASTRAL-one are proposed in \cite{legriedGDL2021} and \cite{yan2022} and should be considered, even though they are not studied here.  ASTRAL-only is a method that takes only singly-labelled gene trees as input, meaning the input consists entirely of orthologs and pseudoorthologs.  ASTRAL-all or ASTRAL-multi takes each gene tree and extracts \textbf{all} singly-labelled trees obtained by selecting a copy from each species.  Another twist on ``one'' and ``all'' is to construct estimates without the whole gene tree -- that is, use standard methods to reconstruct phylogenies from sampled copies only.  Other methods for processing multi-labelled gene trees are given in \cite{molloy2020} and \cite{zhang2020}.}

\section{Structured numbers of copies under the GDL process}

The GDL process is a generalization of the linear birth-death process, and some of the basic results are recalled here.  \rev{The development of the linear birth-death process started in mathematics papers such as \cite{kendall1948}.}  In the species tree with stem edge of length $s$, let $N_s$ be the \rev{number of copies} at time $s$.  The probability mass function of $N_s$ is denoted $p_i(s) = \PP(N_s = i).$  Then for any $i \geq 0$:  $$\PP(N_s = i) = \begin{dcases}
\frac{\mu}{\lambda}q(s) & \textnormal{if} \ i = 0 \\
e^{-(\lambda - \mu)s}\left(1 - p_0(s)\right)^2 q(s)^{i-1} & \textnormal{if} \ i > 0,
\end{dcases}$$ where $$q(s) = \begin{dcases}
\frac{\lambda - \lambda e^{-(\lambda - \mu)s}}{\lambda - \mu e^{-(\lambda - \mu)s}} & \textnormal{if} \ \lambda \ne \mu \\
\frac{\lambda s}{1+\lambda s} & \textnormal{if} \ \lambda = \mu.\end{dcases}$$ The derivation of this \textit{modified geometric} random variable is provided in Chapter 9 of \cite{steelbook2016}.  \rev{It is modified geometric in the sense that given that the population does not go extinct at or before than time $s$ (i.e. $N_s > 0$), the population size at time $s$ is a non-modified geometric random variable with parameter $1-q(s)$.  That is, $$\mathbf{P}'(N_s = i) = \mathbf{P}(N_s = i|N_s > 0) = (1-q(s))q(s)^{i-1}$$ for any $i > 0$.  The canonical game for a geometric random variable is that $N_s$ represents the required number of independent trials to observe the first failure when the probability of success is $q(s)$.  Selected plots of $\textnormal{log}[q(t)]$ are provided in Figure \ref{fig:qplot}.  We choose to plot the log because the individual values of $q(t)$ are very small.  The log has the appearence of a logarithm function, meaning that $q(t)$ is approximately a linear function of $t$, whose slope is determined by $\lambda$ and $\mu$.  Note there is symmetry in $\lambda$ and $\mu$ in the sense that $q(t)$ is increasing in $t$, regardless of the sign.  Similar symmetric observations are observed in genetics, e.g. \cite{hill2022species} and \cite{legried2021birthdeath}.

\begin{figure}%
    \centering
    \subfloat[\centering]{{\includegraphics[width=6cm]{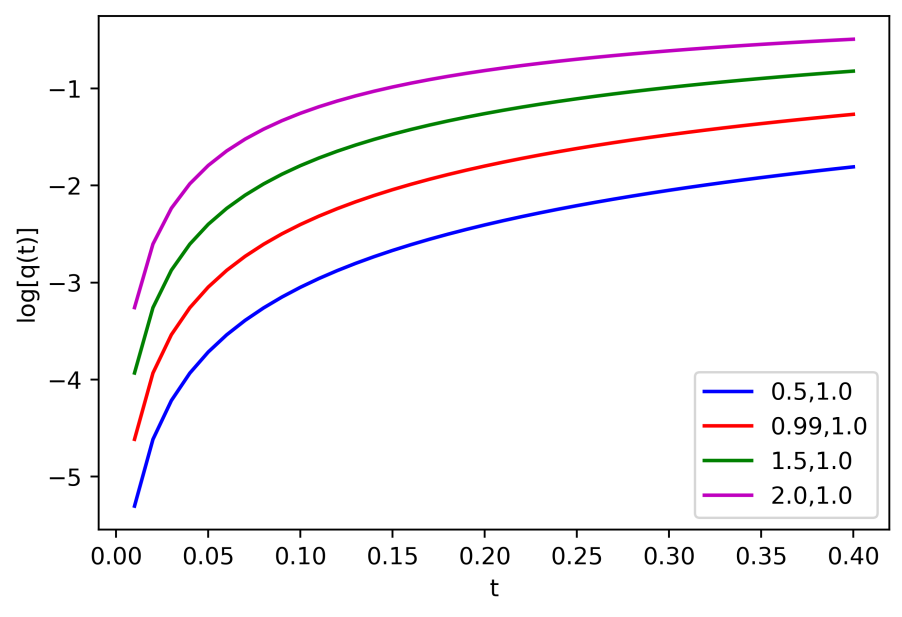} }}%
    \qquad
    \subfloat[\centering]{{\includegraphics[width=6cm]{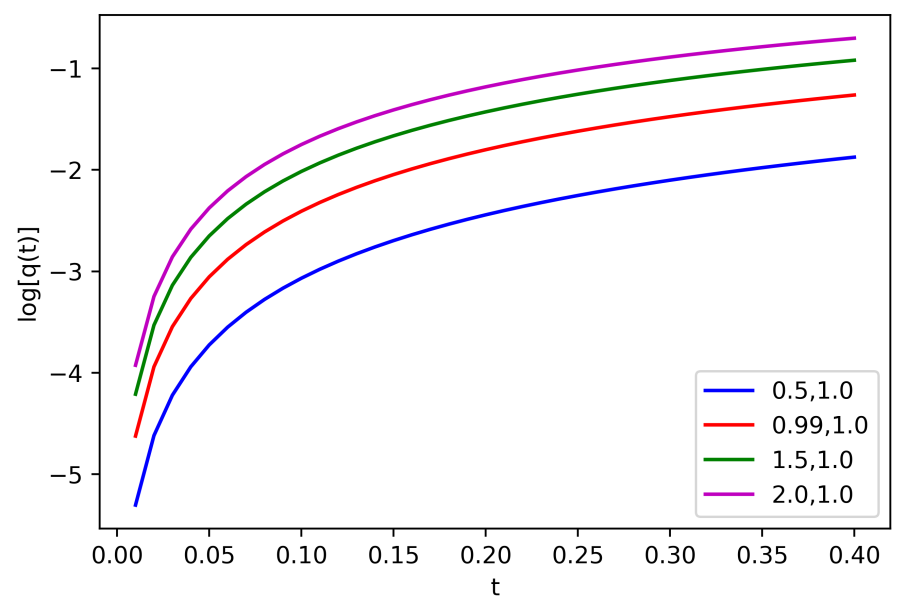} }}%
    \caption{(a) The plot of $\textnormal{log}[q(t)]$ as a function of $t$, for selected values of $\lambda$ and $\mu$.  The difference $\lambda - \mu$ is kept constant.  (b) The plot of $\textnormal{log}[q(t)]$ as a function of $t$, for selected values of $\lambda$, but keeping $\mu$ fixed.}%
    \label{fig:qplot}%
\end{figure}

There is a natural interpretation of $q(s)$ as the chance of success in a single trial.  In the simple case with no death (i.e. $\mu = 0$), the value simplifies to $q(s) = 1 - e^{-\lambda s}$, which is the probability that a single individual alive at time $0$ gives birth to at least one new offspring at or before time $s$.  Because the births occur in exponential time, the distribution of the total progeny $N_s$ assigns equal probability density to every tree relating the offspring.  As a result, any new offspring generated may be viewed as having an equal chance of having some offspring by time $s$ as if it existed at time $0$.  Eventually, some new offspring will fail to give birth to any further offspring, terminating the process.  In this setting, ``failure'' corresponds to a single individual having no new offspring.  For general $\mu > 0$, the interpretation is similar, though the success probability $1-q(s)$ additionally incorporates assumptions of survival.}  

\rev{We will utilize all components of the GDL model to perform species tree estimation.  As stated in the Introduction, the primary focus will be the observed \rev{numbers of copies} at speciation nodes in the species tree.  We review known results and provide new technical results in the Appendix.  The accumulation of these results is presented in the following Theorem.

The basic building block is the \textit{chain} species tree, shown in Figure \ref{figure:GDL2node}. The tree $\sigma$ has branching pattern $\mathcal{T}$ with a root node $R$, followed by a single interior node $I$, and one child $J$.  Any chain can be generalized to a species tree with branching (i.e. speciation) by appending a subtree to interior nodes like $I$. An example is shown in Figure \ref{figure:GDLuniformcopy}.  Let $N_I$ be the number of surviving copies to $I$.  If $N_I$ is known, then for each $j \in \{1,...,N_I\}$: one can let $N_{I,j}$ be the number of surviving copies to $J$ that descend specifically from $j$.  Of interest is the relationship between the arithmetic mean and geometric mean of the $N_{I,j}$ when the weight of the edge from $I$ to $J$ is $f > 0$.  For that, we use the moment generating function $M_f(\tau) = \mathbf{E}'[e^{\tau N_{I,1}}|N_I]$ of the progeny obtained by a single individual on the branch between $I$ and $J$.  We can use the moment generating function to make precise the relationship between the arithmetic mean and the quadratic mean of independent and identically distributed copies of the modified geometric.  \begin{theorem}
    Conditioned on the value of $N_I \geq 1$, let $N_{I,j}$ be the number of surviving progeny to $J$ for each $j \in \{1,2,...,N_I\}$.  Then \begin{equation*}
    \EE'\left[\frac{\sum_{j=1}^{N_I}N_{I,j}^2}{\left(\sum_{j=1}^{N_I}N_{I,j}\right)^2} \bigg| N_I \right] = N_I\int_{\tau=0}^{\infty}\tau M_{f}''(-\tau) M_{f}(-\tau)^{N_I - 1} \ d\tau.
\end{equation*} 
\end{theorem}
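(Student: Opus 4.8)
The plan is to convert the expectation of this ratio into an ordinary integral by means of the elementary identity
$$\frac{1}{x^{2}} \;=\; \int_{0}^{\infty} \tau\, e^{-\tau x}\, d\tau \qquad (x>0),$$
which is just $\Gamma(2)=1$ after the substitution $u=\tau x$. Write $T=\sum_{j=1}^{N_I} N_{I,j}$ for the total number of copies surviving to $J$. Since the $N_{I,j}$ are nonnegative integers, $\sum_{j=1}^{N_I} N_{I,j}^{2}\le T^{2}$, and whenever $T\ge 1$ the identity above gives
$$\frac{\sum_{j=1}^{N_I} N_{I,j}^{2}}{T^{2}} \;=\; \Bigl(\sum_{j=1}^{N_I} N_{I,j}^{2}\Bigr)\int_{0}^{\infty}\tau\, e^{-\tau T}\, d\tau ;$$
on the event $\{T=0\}$ the prefactor $\sum_{j=1}^{N_I} N_{I,j}^{2}$ vanishes, so with the usual convention that the ratio is $0$ there, the displayed equality in fact holds identically and no $0/0$ ambiguity enters the integrand. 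Thus the left side of the theorem equals $\EE'\bigl[\,(\sum_j N_{I,j}^{2})\int_{0}^{\infty}\tau e^{-\tau T}\, d\tau\;\big|\;N_I\,\bigr]$.

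The second step is to interchange the conditional expectation with the $d\tau$-integral. The integrand $\tau\,(\sum_j N_{I,j}^{2})\,e^{-\tau T}$ is nonnegative, so Tonelli's theorem applies with no integrability hypothesis and yields
$$\EE'\!\left[\frac{\sum_{j=1}^{N_I} N_{I,j}^{2}}{T^{2}}\;\bigg|\;N_I\right] \;=\; \int_{0}^{\infty}\tau\;\EE'\!\left[\Bigl(\sum_{j=1}^{N_I} N_{I,j}^{2}\Bigr)e^{-\tau T}\;\bigg|\;N_I\right]d\tau .$$
Now expand $\sum_j N_{I,j}^{2}$ and peel the $j$-th factor out of $e^{-\tau T}=\prod_{k=1}^{N_I} e^{-\tau N_{I,k}}$. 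Conditionally on $N_I$ the progenies $N_{I,1},\dots,N_{I,N_I}$ are independent, and each is distributed as the progeny of a single individual over a branch of length $f$; this is the branching property of the GDL process restricted to the edge $I\to J$, as recorded in the preceding development. Hence each summand factors:
$$\EE'\!\left[N_{I,j}^{2}e^{-\tau N_{I,j}}\;\big|\;N_I\right]\prod_{k\ne j}\EE'\!\left[e^{-\tau N_{I,k}}\;\big|\;N_I\right] \;=\; M_f''(-\tau)\, M_f(-\tau)^{N_I-1},$$
using $\EE'[e^{\theta N_{I,1}}\mid N_I]=M_f(\theta)$ and therefore $\EE'[N_{I,1}^{2}e^{\theta N_{I,1}}\mid N_I]=M_f''(\theta)$, evaluated at $\theta=-\tau$. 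Summing over the $N_I$ values of $j$ produces the factor $N_I$ and the claimed formula.

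It remains only to note that the integral $\int_{0}^{\infty}\tau\, M_f''(-\tau)\,M_f(-\tau)^{N_I-1}\, d\tau$ is finite, so that both sides are genuinely equal rather than both infinite: as $\tau\to\infty$ one has $M_f(-\tau)\le 1$ and $M_f''(-\tau)=\EE'[N_{I,1}^{2}e^{-\tau N_{I,1}}\mid N_I]=O(e^{-\tau})$, while as $\tau\to0^{+}$ the integrand is $\sim\tau\,\EE'[N_{I,1}^{2}\mid N_I]$, and both ends are integrable because the modified geometric law has a finite second moment. I expect the only mildly delicate points to be the behaviour on $\{T=0\}$ and the legitimacy of swapping expectation and integral; both are disposed of by nonnegativity, so the load-bearing input is really just the conditional independence of the $N_{I,j}$ given $N_I$ inherited from the independence of lineages in the GDL process.
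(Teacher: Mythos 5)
Your proposal is correct and follows essentially the same route as the paper's own argument: the identity $1/T^{2}=\int_{0}^{\infty}\tau e^{-\tau T}\,d\tau$, an interchange of expectation and integral, and the conditional i.i.d.\ structure of the $N_{I,j}$ given $N_I$ to factor the integrand into $M_f''(-\tau)M_f(-\tau)^{N_I-1}$ with the factor $N_I$ coming from symmetry over $j$. The only differences are cosmetic (you integrate before exploiting exchangeability, and you add explicit remarks on the $\{T=0\}$ event, Tonelli, and finiteness, which the paper leaves implicit).
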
 The specific form of $M_f(\tau)$ is outlined in the Appendix.  Unfortunately, this integral is difficult to compute analytically.  However, numerical methods could be useful to characterizing the expectation of this ratio.  This result could conceivably be used to give exact expressions for the probability of each tree topology, and an implicit formula is given in Proposition \ref{prop:3}.  Explicit tools are developed for rooted trees with three species in the Appendix, though the results do not find application beyond those of \cite{legriedGDL2021}.  In the rest of this paper, we focus on deriving simpler one-sided results to obtain relevant results for rooted trees with four species.

\begin{figure}
    \centering
    \includegraphics[scale=0.4]{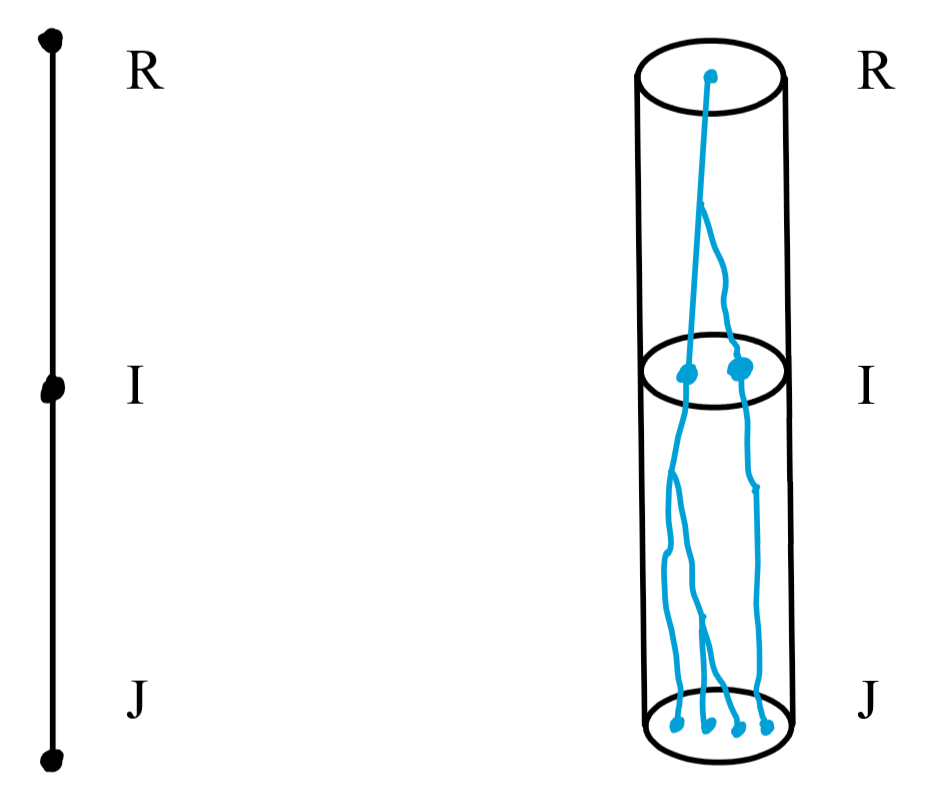}
    \caption{Left: The species tree with root $R$ and subsequent vertices $I$ and $J$.  Right:  A realization of the GDL process.  In the notation of Section 3, this example has $N_I = 2$ and $N_{I,1} = 3$ and $N_{I,2} = 1$.}
    \label{figure:GDL2node}
\end{figure}

}  

\section{Balanced tree on four species}

Throughout this section, the species tree $\sigma = (\mathcal{T},\mathbf{f})$ is assumed to have four leaf species $A,B,C,$ and $D$ and the topology has $A,B$ are siblings and $C,D$ are siblings.  \rev{A common way to represent this branching is through the Newick tree format; in this case $\mathcal{T}$ is equivalent to $((A,B),(C,D))$.  For four species, this branching pattern is \textit{balanced}.}

\rev{Next, we set out specific settings for the species tree for later reference.} The edge lengths are $s$ for stem edge, $f$ for the parent edge to $A$ and $B$, $g$ for the parent edge to $C$ and $D$, and $f_{n}', n \in \{A,B,C,D\}$ for the lengths of pendant edges incident to the leaves.  \rev{The most important species tree vertex to this analysis is the most recent common ancestor of the four species, which is labeled $I$.  Note that there is still a root edge and root vertex that is parent to $I$.  The root vertex is called $R$. The balanced tree topology with labelled internal vertices are depicted in the left frame of Figure \ref{figure:GDLspeciestrees}.  One realization of the GDL process applied to the balanced species tree is given in Figure \ref{figure:GDLuniformcopy}.}  %However, we focus on the internal vertices.  The root vertex is labeled $I$, the parent vertex to $A$ and $B$ is labeled $J_1$, and the parent vertex to $C$ and $D$ is labeled $J_2$.  
As suggested in Section 2, let $\PP'$ be the probability measure subject to the conditioning event where $N_A,N_B,N_C,$ and $N_D$ are positive.  Let $i_m, m \in \{a,b,c,d\}$ be the ancestor of sampled copy $m$ at $I$.  Conditioning further on $N_I$, the probability of each gene tree topology is written using $N_I$, using the probabilities $\rev{x = \PP'_{N_I}(i_a = i_b)},$ and $\rev{y = \PP'_{N_I}(i_c = i_d)}$.

\begin{figure}
    \centering
    \includegraphics[scale=0.5]{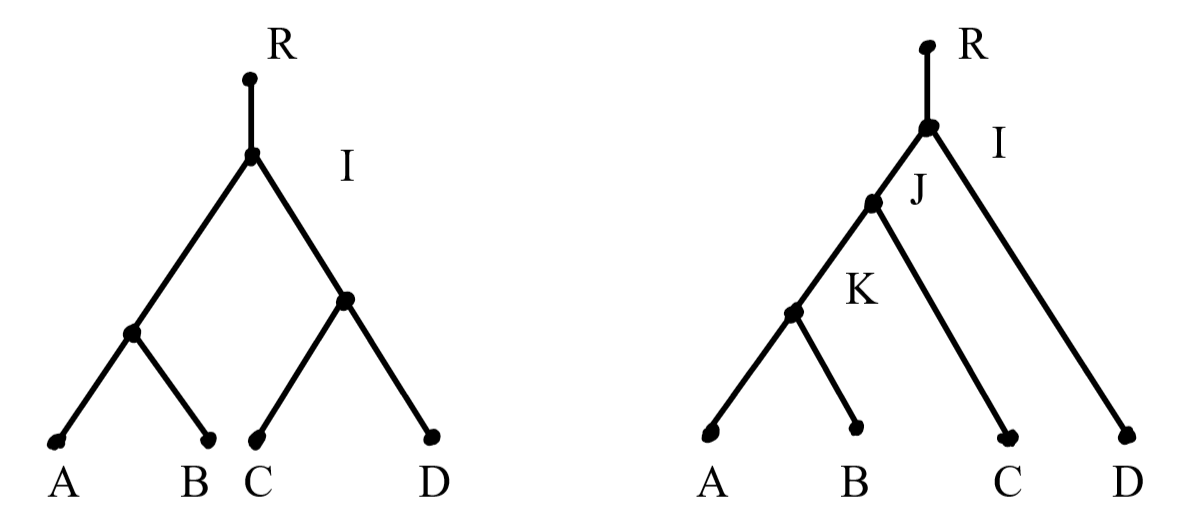}
    \caption{Left:  The balanced species tree with species $A,B,C,D$.  In Section 5, the length of the edge between $I$ and the parent of $A$ and $B$ is given by $f$, and the length of the edge between $I$ and the parent of $C$ and $D$ is given by $g$.  Right:  The caterpillar species with species $A,B,C,D$.  In Section 6, the length of the edge between $J$ and $K$ is given by $f$, and the length of the edge between $J$ and $I$ is given by $g$.}
    \label{figure:GDLspeciestrees}
\end{figure}

\begin{figure}
    \centering
    \includegraphics[scale=0.5]{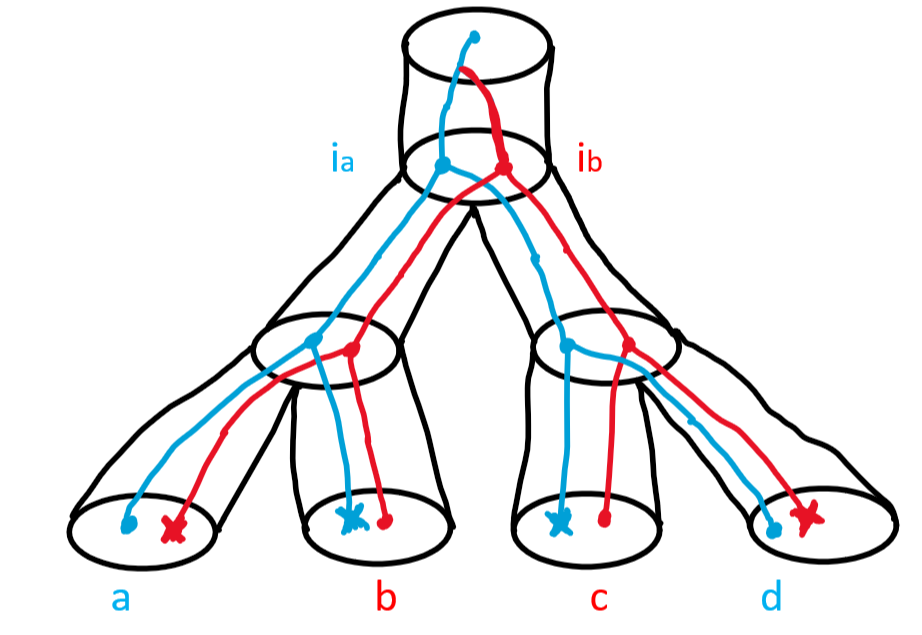}
    \caption{A realization of the GDL process on the balanced species tree $((A,B),(C,D))$ with ancestral copies.  The sampled copies are named $a,b,c,d$, respectively in the figure.  The left (blue) copy of $A$ was selected, so it is labeled $a$.  It descends from node $i_a$ at $I$.  Similarly, the right (red) copy of $B$ was selected, was labeled $b$, and descended from node $i_b$ at $I$.  The figure shows that $a$ and $d$ descend from the same individual at $I$, which is expressed as $i_a = i_d$.  Similarly, $i_b = i_c$ holds.}
    \label{figure:GDLuniformcopy}
\end{figure}

\rev{Here, we introduce notation defining the probability of observing a particular gene tree topology under uniform sampling.}  The values $h_w(x,y)$ each correspond to the probability of the $w$th gene tree topology as follows.  Let $w = 1$ correspond to the species tree topology.  Let $w = 2$ correspond to either of the two alternative balanced topologies, which have the same probability by exchangeability of sibling species. Let $w = 3$ correspond to any of the four caterpillar topologies in which the cherry is $\{a,b\}$.  Let $w = 4$ correspond to any of the caterpillar topologies in which the cherry is $\{c,d\}$.  Let $w = 5$ correspond to any of the remaining eight caterpillar topologies.

The main result of this section is that the species tree topology \rev{corresponds to the uniformly sampled gene tree with maximal probability.  The main implication of this result is that when more independent gene trees are given, the democratic vote estimator applied to the uniform sampled gene trees obtains the species tree topology with probability approaching one.}   Theorem \ref{thm:balanced} is \rev{a parallel result} to what is found in MSC.

\begin{theorem}
    \label{thm:balanced}
    Let $\sigma$ be a species tree on four species $A,B,C,D$ with the balanced topology.  Then $\rev{\mathbf{P}'(u(t) = \mathcal{T})}$ is maximized when $\mathcal{T}$ corresponds to the species tree topology.
\end{theorem}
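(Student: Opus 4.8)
The plan is to condition on $N_I$, prove that conditionally on $N_I$ the species topology has probability at least that of each competitor, and then average over $N_I$. Write $k=N_I$. If $k=1$ every sampled copy descends from the single copy at $I$, so $u(t)=\mathcal{T}$ almost surely and there is nothing to prove; assume $k\ge2$. The structural inputs I would establish first are: (a) conditioned on $N_I=k$ and on all four species surviving, the $k$ copies at $I$ are exchangeable, and the collection of subtrees they send down the edge of length $f$ (toward $A,B$) is independent of the collection sent down the edge of length $g$ (toward $C,D$), since the survival conditioning factors over the two sides; (b) the genealogy relating the $k$ copies at and above the root is independent of the sizes of those subtrees and carries the Yule topology distribution, so that among any $m$ of the copies the outgroup is uniform when $m=3$, and when $m=4$ each of the three balanced shapes has probability $\tfrac19$ and each of the twelve caterpillar shapes probability $\tfrac1{18}$; and (c) whenever $i_a=i_b$ the pair $\{a,b\}$ is forced to be a cherry of $u(t)$, because their most recent common ancestor lies strictly below $I$, and symmetrically for $\{c,d\}$.

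Next I would classify realizations by the set partition of $\{a,b,c,d\}$ induced by $(i_a,i_b,i_c,i_d)$ --- all four distinct, one merged pair, two disjoint merged pairs, a merged triple, or all four merged --- and, using (b) and the $f$-versus-$g$ asymmetry implicit in (c), read off which of the five topology classes occurs in each case, with the outgroup probabilities $\tfrac13$ or the shape probabilities $\tfrac19,\tfrac1{18}$ weighting the cases not already pinned down by the partition. Exchangeability collapses every partition probability to a polynomial in $x$, $y$ and $1/k$: for instance $\mathbf{P}'_{N_I}(i_a=i_c)=\mathbf{P}'_{N_I}(i_a=i_d)=\mathbf{P}'_{N_I}(i_b=i_c)=\mathbf{P}'_{N_I}(i_b=i_d)=1/k$, whereas $\mathbf{P}'_{N_I}(i_a=i_b,\,i_c=i_d)=xy$, $\mathbf{P}'_{N_I}(i_a=i_b=i_c=i_d)=xy/k$, and the remaining partition probabilities come out as expressions such as $x(1-y)\tfrac{k-2}{k}$, $\tfrac{x}{k}(1-y)$ and $\tfrac{(1-x)(1-y)}{k(k-1)}$. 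Writing $\pi_0:=\mathbf{P}'_{N_I}(i_a,i_b,i_c,i_d\text{ all distinct})$, assembling these gives $h_1=xy+\tfrac{k-2}{3k}(x+y-2xy)+\tfrac19\pi_0$ together with analogous closed forms for $h_2,\dots,h_5$, as functions of $x,y$ (and $k$), in which $\pi_0$ enters with coefficient $\tfrac19$ (in $h_2$) or $\tfrac1{18}$ (in $h_3,h_4,h_5$), never exceeding its coefficient in $h_1$.

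It then remains to verify $h_1\ge h_w$ for $w\in\{2,3,4,5\}$. After the $\pi_0$ terms are cancelled or discarded in $h_1$'s favour, each difference is bilinear in $(x,y)$: $h_1-h_3$ is a positive multiple of $(2k+5)xy-3x+(k-2)y$, $h_1-h_4$ is obtained by swapping $x$ and $y$, $h_1-h_2$ is a similar bilinear form, and $h_1-h_5\ge h_1-h_2$ since $h_5\le h_2$. A bilinear function on a rectangle is minimized at a corner, so it suffices to evaluate at the four corners of the admissible box, where each value is a simple expression in $k$ that is positive for every integer $k\ge2$ --- but this works only when the admissible box is $[1/k,1]^2$ rather than $[0,1]^2$, since for example $h_1-h_2$ is negative at $(0,0)$. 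The real obstacle, therefore, is the lower bound $x\ge1/k$ and $y\ge1/k$: two sampled copies of the sibling species $A,B$ must coincide at $I$ at least as often as a non-sibling pair does, the latter probability being exactly $1/k$. This is a positive-association fact for the descendant counts $(N^A_j,N^B_j)$ produced by one copy at $I$, which are coupled only through the shared number of survivors at the parent vertex of $A$ and $B$, propagated through the size-biased uniform sampling; I would prove it by conditioning on the count at that parent vertex and invoking exchangeability, or via an FKG-type inequality, and it is here that ultrametricity ($f'_A=f'_B$ and $f'_C=f'_D$) is used. Granting $x,y\ge1/k$ for every $k$, we obtain $h_1\ge h_w$ conditionally on $N_I$; averaging over $N_I$ then yields $\mathbf{P}'\big(u(t)=\mathcal{T}\big)\ge\mathbf{P}'\big(u(t)=\mathcal{T}'\big)$ for every topology $\mathcal{T}'$, with $\mathcal{T}$ the species topology, which is the theorem.
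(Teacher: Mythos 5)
Your proposal follows essentially the same route as the paper's proof: condition on $N_I$, derive the explicit topology probabilities $h_1,\dots,h_5$ as functions of $x$, $y$ and $N_I$ via the partition of $\{i_a,i_b,i_c,i_d\}$ plus the $1/3$, $1/9$, $1/18$ deep-coalescence weights (your closed forms agree with Proposition \ref{prop:7}), and then establish $h_1\ge h_w$ on the region $x,y\ge 1/N_I$ before averaging over $N_I$, with that key lower bound coming from the quadratic-mean inequality for the size-biased ancestor choice (which the paper imports from Lemma 1 of \cite{legriedGDL2021}). The only differences are cosmetic: you minimize the bilinear differences at corners of the box $[1/N_I,1]^2$, whereas the paper uses linearity to evaluate at $x=y=1/N_I$ for $h_1-h_2$ and compares like terms for the remaining cases.
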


To prove the result, we write explicit formulas for \rev{the values of $h_w(x,y)$.  Then, the result is equivalent to $h_1(x,y)$ being greater than $h_w(x,y)$ for $w \ne 1$.  Propositions \ref{prop:8}, \ref{prop:9}, and \ref{prop:10} contain more results than are necessary to prove the Theorem.  In particular, it is interesting to observe that $h_w(x,y) \geq h_5(x,y)$ for all $w$, even though it is apparent that $h_1(x,y) \geq h_5(x,y)$ without much analysis required.}

\rev{Before proving the Propositions, it will be useful to give an intuition behind the meaning of the results for phylogeneticists and other practitioners.  Before and after the uniform sampling step, the gene tree expresses meaningful signal about the overarching species tree.  Of course, if any species receives zero copies of a given gene, then the gene tree offers no information at all about that species.  We condition on survival of the gene in all extant species, so this does not occur.

However, signal is lost when $N_I$ is too large.  This is because when an individual from each extant species is chosen uniformly at random in ASTRAL-one, the individual's ancestors are chosen relative to the number of available choices.  The species tree signal is preserved only when the uniformly sampled copies from each species are chosen to have the most recent ancestry.  When $N_I$ is large, it is more likely that the sampled copies find common ancestry ``deeper'' in the tree, i.e. at a time before the speciation at vertex $I$.  As with the MSC, when four individuals in the same population have yet to ``coalesce'' going backward in time, they choose any of the three balanced topologies with probability $1/9$ and any of the twelve caterpillar (unbalanced) toplogies with probability $1/18$.  Proposition \ref{prop:6} provides the explicit formulas only.  Comparisons between $h_w(x,y)$ are shown in the later Propositions.  The fifteen possible uniformly sampled gene tree topologies with their associated $h_w$ numbers are given in Figure \ref{fig:GDLhtable}.}

\rev{Lastly, we introduce some new notation for this section and the next one only.  First, the notation $\mathbf{1}(E)$ is the indicator function of the event $E$ that may or may not occur for the gene tree.  For example $\mathbf{1}(N_I \geq 3)$ equals $1$ if $N_I \geq 3$ is true and equals $0$ if $N_I \geq 3$ is false.  The indicator function is used to give a single formula that applies for all cases of $N_I$.  Second, we have the commonly utilized binomial coefficient $$\mathcal{C}_{n,k} = \begin{pmatrix}
    n \\
    k
\end{pmatrix} = \frac{n!}{k!(n-k)!}, \ \ n! = n \times (n-1) \times (n-2) \times ... \times 2 \times 1.$$  For $n$ distinguishable objects, the number $\mathcal{C}_{n,k}$ corresponds to the number of distinct subsets of size $k$ that can be chosen from the set.}

\begin{prop}\label{prop:7}
(i) The balanced topologies have the following probabilities: \begin{align*}
    h_1(x,y) &= xy + \left(x(1-y) + (1-x)y \right) \frac{N_I - 2}{3N_I} \mathbf{1}(N_I \geq 3) \\
    &+ (1-x)(1-y) \frac{\begin{pmatrix}
    N_I \\
    2
    \end{pmatrix} - 2(N_I - 1) + 1}{9\begin{pmatrix}
    N_I \\
    2
    \end{pmatrix}} \mathbf{1}(N_I \geq 4). \\
    h_2(x,y) &= (1-x)(1-y) \bigg[ \bigg(\frac{1}{N_I (N_I - 1)} + \frac{N_I - 1}{3 \begin{pmatrix}
    N_I \\
    2
    \end{pmatrix}} \bigg)\mathbf{1}(N_I \geq 2) \\
    &+
    \frac{\begin{pmatrix}
    N_I \\
    2
    \end{pmatrix} - 2(N_I - 1) + 1}{9\begin{pmatrix}
    N_I \\
    2
    \end{pmatrix}}  \mathbf{1}(N_I \geq 4)  \bigg].
\end{align*} The caterpillar topologies have the following probabilities:
\begin{align*}
    h_3(x,y) &= x(1-y) \bigg[ \frac{1}{N_I} \mathbf{1}(N_I \geq 2) + \frac{N_I - 2}{3N_I} \mathbf{1}(N_I \geq 3)  \bigg] \\
    &+ (1-x)(1-y) \frac{\begin{pmatrix}
    N_I \\
    2
    \end{pmatrix} - 2(N_I - 1) + 1}{18\begin{pmatrix}
    N_I \\
    2
    \end{pmatrix}} \mathbf{1}(N_I \geq 4)  \\
    h_4(x,y) &= (1-x)y \bigg[ \frac{1}{N_I} \mathbf{1}(N_I \geq 2) + \frac{N_I - 2}{3N_I} \mathbf{1}(N_I \geq 3)  \bigg] \\
    &+ (1-x)(1-y) \frac{\begin{pmatrix}
    N_I \\
    2
    \end{pmatrix} - 2(N_I - 1) + 1}{18\begin{pmatrix}
    N_I \\
    2
    \end{pmatrix}}  \mathbf{1}(N_I \geq 4)  \\
    h_5(x,y) &= (1-x)(1-y) \left( \frac{N_I-2}{3N_I(N_I-1)} \mathbf{1}(N_I \geq 3) + \frac{\begin{pmatrix}
    N_I \\
    2
    \end{pmatrix} - 2(N_I - 1) + 1}{18\begin{pmatrix}
    N_I \\
    2
    \end{pmatrix}} \mathbf{1}(N_I \geq 4) \right).
\end{align*}
\end{prop}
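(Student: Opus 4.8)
The plan is to argue conditionally on $N_I$ throughout (so all probabilities below are taken under $\PP'$ further conditioned on $N_I$) and to realise the gene tree topology on $\{A,B,C,D\}$ as a deterministic function of two things: the coincidence pattern among the ancestors $i_a,i_b,i_c,i_d$ of the four sampled copies at the vertex $I$, and — given that pattern — the rooted topology of the genealogy, above $I$, of the distinct $I$-individuals that are ancestral to sampled copies. I would first record three structural facts. (a) \emph{Independence.} Each $I$-individual bifurcates at $I$ into an $AB$-copy and a $CD$-copy that then evolve in the two independent subtrees below $I$, and the survival conditioning splits as an $AB$-event $\{N_A,N_B>0\}$ times a $CD$-event $\{N_C,N_D>0\}$; hence $(i_a,i_b)$ is independent of $(i_c,i_d)$. (b) \emph{Exchangeability.} The $N_I$ individuals at $I$ are exchangeable, so conditional on $i_a=i_b$ the common value is uniform on $\{1,\dots,N_I\}$, conditional on $i_a\ne i_b$ the ordered pair $(i_a,i_b)$ is uniform over the $N_I(N_I-1)$ ordered pairs of distinct individuals, and likewise for $c,d$. (c) \emph{Uniform genealogy above $I$.} For any set of $k$ occupied $I$-individuals, the rooted topology of their genealogy above $I$ is, independently of which individuals are occupied and of everything below $I$, trivial for $k=2$, uniform over the three rooted triples ($1/3$ each) for $k=3$, and assigns $1/9$ to each of the three balanced labelled shapes and $1/18$ to each of the twelve caterpillars for $k=4$; this is supplied by the discussion preceding the statement and by Proposition~\ref{prop:6}, and for $k\le 3$ it follows from (b) alone while for $k=4$ it rests on the Yule/coalescent-type topology law of the birth--death genealogy recalled there.

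I would then split into the four regimes given by whether $i_a=i_b$ and whether $i_c=i_d$, using the dichotomy that $i_a=i_b$ forces $\{A,B\}$ to be a cherry (the copies $a,b$ merge strictly below $I$), whereas a cross-coincidence such as $i_a=i_c$ forces the mixed pair $\{A,C\}$ to be a cherry realised at $I$ itself. In the regime $i_a=i_b,\ i_c=i_d$ (probability $xy$) the induced topology is the species tree irrespective of whether the two common ancestors coincide, contributing $xy$ to $h_1$. In the regime $i_a=i_b,\ i_c\ne i_d$ (probability $x(1-y)$; needs $N_I\ge2$) the pair $\{A,B\}$ is a cherry, so the topology is determined by a rooted triple on $\{AB,C,D\}$; splitting on whether the $AB$-ancestor equals $i_c$, equals $i_d$, or neither (the last needing $N_I\ge3$ and then fact (c) for $k=3$), with probabilities $1/N_I$, $1/N_I$, $(N_I-2)/N_I$ by fact (b), one reads off the contributions to $h_1$ and to each caterpillar with cherry $\{A,B\}$ (these being $h_3$). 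The regime $i_a\ne i_b,\ i_c=i_d$ is the $A,B\leftrightarrow C,D$ mirror and feeds $h_1$ and $h_4$.

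The substantive regime is $i_a\ne i_b,\ i_c\ne i_d$ (probability $(1-x)(1-y)$; needs $N_I\ge2$), where by (a)--(b) the pairs $(i_a,i_b),(i_c,i_d)$ are independent uniform ordered distinct pairs, and I would subdivide on $|\{i_a,i_b\}\cap\{i_c,i_d\}|\in\{0,1,2\}$. If the two pairs cover the same two individuals (only $N_I\ge2$ needed), the matching and reversed orderings each yield an alternative balanced topology, each with probability $1/(N_I(N_I-1))$. If they overlap in exactly one individual ($N_I\ge3$), a single mixed cherry is formed at $I$ and a $k=3$ genealogy sits above it, producing a mixed-cherry caterpillar with probability $2/3$ and an alternative balanced topology with probability $1/3$; each of the four overlap patterns has probability $(N_I-2)/(N_I(N_I-1))$. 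If all four ancestors are distinct ($N_I\ge4$) — probability $\frac{(N_I-2)(N_I-3)}{N_I(N_I-1)}=\binom{N_I-2}{2}\big/\binom{N_I}{2}$, which I would rewrite via $\binom{N_I}{2}-\binom{N_I-2}{2}=2(N_I-1)-1$ to obtain the $\binom{N_I}{2}$-expressions appearing in the statement — fact (c) for $k=4$ distributes weights $1/9$ to the species tree, $1/9$ to each alternative balanced topology, and $1/18$ to each of the twelve caterpillars. Adding the contributions of the four regimes produces the five formulas; I would verify $h_1+2h_2+2h_3+2h_4+8h_5=1$ as a consistency check, which also furnishes an alternative route to the alternative-balanced probability $h_2$ by complementation, bypassing its most delicate bookkeeping.

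The main obstacle is exactly that last regime: tracking which of the fifteen rooted four-leaf topologies each sub-case maps to, getting the multiplicities feeding $h_2$ and $h_5$ right, and organising the algebra into the $\binom{N_I}{2}$ form. A secondary point needing care is the justification of fact (c) — that the above-$I$ genealogy of the occupied individuals has the stated uniform topology independently of the occupancy pattern and of the below-$I$ subtrees — which for $k=4$ relies on the Yule-type law of the birth--death genealogy rather than on exchangeability alone.
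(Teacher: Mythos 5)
Your plan follows the same route as the paper's (very terse) proof: condition on $N_I$, apply the law of total probability over the coincidence pattern of $(i_a,i_b)$ and $(i_c,i_d)$, use independence of the two subtrees below $I$ plus exchangeability of the $N_I$ individuals to make the ancestor choices uniform, count disjoint versus overlapping pairs among the $\binom{N_I}{2}$ subsets (your rewriting $\binom{N_I}{2}-\binom{N_I-2}{2}=2(N_I-1)-1$ is exactly the paper's counting note), and import the $1/3$, $1/9$, $1/18$ topology weights for the genealogy above $I$ as an external fact, just as the paper does in the paragraph preceding the statement. One small citation point: that $k=4$ topology law is not Proposition \ref{prop:6} (an appendix computation for three taxa); in both your argument and the paper's it is simply the asserted Yule-type fact from the discussion before the Proposition.

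There is, however, one concrete mismatch you should not gloss over: your (correct) overlap-one bookkeeping does not reproduce the printed $h_2$. Two of the four single-overlap patterns feed a given alternative balanced topology, each with probability $\frac{N_I-2}{N_I(N_I-1)}$ and then weight $1/3$, giving a middle term $\frac{2(N_I-2)}{3N_I(N_I-1)}=\frac{N_I-2}{3\binom{N_I}{2}}$ carrying $\mathbf{1}(N_I\geq 3)$, whereas the statement prints $\frac{N_I-1}{3\binom{N_I}{2}}$ under $\mathbf{1}(N_I\geq 2)$. Your own proposed consistency check settles this: with the printed $h_2$ the total $h_1+2h_2+2h_3+2h_4+8h_5$ exceeds $1$ (at $N_I=2$ the printed bracket already gives $5/6>1/2$, and at $N_I=3$, enumerating the $6\times 6$ equally likely ordered pairs conditional on $i_a\neq i_b$, $i_c\neq i_d$ gives $5/18$ per alternative balanced topology, not $7/18$), while your version sums to $1$. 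So if you carry out the plan as written, you will prove the Proposition with the corrected $h_2$ term, not the formula as printed; since $h_2$ only needs to be dominated by $h_1$ in the sequel, the downstream results survive, but your write-up must either flag the discrepancy or it will appear that "adding the contributions of the four regimes produces the five formulas" when for $h_2$ it does not.
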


\begin{proof}
The proof is a basic application of the law of total probability and counting  \cite{legriedGDL2021}, \cite{hill2022species}.  \rev{We make one note about the number of favorable choices among the total $\mathcal{C}_{N_I,2}$.  Provided there are at least four choices of individuals at $I$ and that $i_a \ne i_b$ and $i_c \ne i_d$ in the calculation of $h_1(x,y)$, exchangeability implies $i_a$ and $i_b$ is equally likely to be any of the $\mathcal{C}_{N_I,2}$ ancestor choices to occupy.  Independently, the pair $\{i_c,i_d\}$ must choose a disjoint subset from $\{i_a,i_b\}$ in order to obtain the $w = 1$ case.  There are $N_I-1$ ways to choose a subset of the form $\{i_a,\tau\}$ where $\tau \ne i_a$ and $N_1-1$ ways to choose a subset of the form $\{i_b,\tau\}$ where $\tau \ne i_b$, but counting both results in double-counting the subset $\{i_a,i_b\}$.  So there are $2(N_I - 1) - 1$ unfavorable subsets for obtaining $w = 1$.  A similar choice is made to set up the possibility of any other realization of $w$.}   
\end{proof}

The next Propositions provide a ranking of probabilities that hold for any $N_I$ and when $x,y \geq 1/N_I$.  \rev{In Proposition \ref{prop:8}, showing $h_j(x,y) \geq h_{j'}(x,y)$ amounts to showing that $h_j - h_{j'}$ is non-negative for all choices of $x$ and $y$ that satisfy $1/N_I$.  Note that $x = \mathbf{P}'_{N_I}(i_a = i_b)$ and $y = \mathbf{P}'_{N_I}(i_c = i_d)$ have conditioning on the observed value of $N_I$.  In Proposition \ref{prop:8}, the criteria that $x,y$ are both at least $1/N_I$ does not make sense for the measure $\mathbf{P}'$, as $N_I$ is still random.  Showing $h_1 \geq h_2$ requires minimization of a linear function as in \cite{legriedGDL2021}, but $h_2 \geq h_5$ is more apparent by simply comparing like terms.}

\begin{prop}\label{prop:8}
Suppose $x,y \geq 1/N_I$.  Then $h_1(x,y) \geq h_2(x,y) \geq h_5(x,y)$.
\end{prop}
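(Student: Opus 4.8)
## Proof Proposal

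The plan is to work directly from the explicit formulas in Proposition \ref{prop:7} and reduce each of the two claimed inequalities, $h_1 \geq h_2$ and $h_2 \geq h_5$, to an elementary comparison, splitting into the regimes $N_I \in \{1\}$, $N_I \in \{2,3\}$, and $N_I \geq 4$ so that the various indicator functions $\mathbf{1}(N_I \geq k)$ can be treated as constants in each regime. In the regime $N_I = 1$ both sides of everything collapse: $x = y = 1$ is forced, so $h_1 = 1$ and $h_2 = h_5 = 0$, and the inequalities are trivial. For $N_I \in \{2,3\}$ the $\mathbf{1}(N_I \geq 4)$ terms vanish and the surviving expressions are low-degree rational functions of $x$ and $y$ on the box $[1/N_I,1]^2$, which can be handled by hand.

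For the bulk regime $N_I \geq 4$, first I would observe that $h_2 \geq h_5$ is essentially termwise. Write $n = N_I$ and note $\binom{n}{2} - 2(n-1) + 1 = \binom{n-1}{2} - (n-2) = \binom{n-2}{2}$; call this common numerator $B_n \geq 0$ and set $\beta_n = B_n / (9\binom{n}{2})$. Then both $h_2$ and $h_5$ carry the summand $(1-x)(1-y)\beta_n$ (for $h_5$ after halving — actually $h_5$ has $B_n/(18\binom{n}{2}) = \beta_n/2$), so after subtracting the smaller of the two shared contributions it suffices to compare $(1-x)(1-y)\big[\frac{1}{n(n-1)} + \frac{n-1}{3\binom{n}{2}} + \frac{\beta_n}{2}\big]$ against $(1-x)(1-y)\frac{n-2}{3n(n-1)}$; since $(1-x)(1-y) \geq 0$ and each coefficient on the left dominates $\frac{n-2}{3n(n-1)} \leq \frac{1}{3n}$, this is immediate once one checks $\frac{1}{n(n-1)} \geq \frac{n-2}{3n(n-1)}$, i.e. $3 \geq n-2$, which fails for $n \geq 6$ — so instead one should keep the $\frac{n-1}{3\binom{n}{2}} = \frac{2}{3n}$ term, and then $\frac{2}{3n} \geq \frac{n-2}{3n(n-1)}$ reduces to $2(n-1) \geq n-2$, i.e. $n \geq 0$, always true. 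That settles $h_2 \geq h_5$.

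The substantive inequality is $h_1 \geq h_2$. Here I would follow the strategy flagged in the text: form the difference $D(x,y) = h_1(x,y) - h_2(x,y)$ and note it is affine-linear in each of $x$ and $y$ separately (all the $N_I$-dependent pieces are constants in this regime), hence by convexity $D$ attains its minimum over the rectangle $[1/n,1]^2$ at one of the four corners. So it suffices to verify $D \geq 0$ at $(x,y) \in \{(1/n,1/n),(1/n,1),(1,1/n),(1,1)\}$. At $(1,1)$, $D = 1 - 0 = 1 > 0$. At $(1,1/n)$ and $(1/n,1)$ (symmetric), the $(1-x)(1-y)$ terms in both $h_1$ and $h_2$ survive with a factor $(1-1/n)$, the $xy$ term of $h_1$ contributes $1/n$, and $h_2$ contributes its $(1-x)(1-y)$ block; the comparison becomes a single-variable rational inequality in $n$. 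The only genuinely delicate corner is $(1/n,1/n)$, where $xy$ is smallest and the cross terms $x(1-y)+(1-x)y$ are also small; one must check that $\frac{1}{n^2} + \frac{2}{n}\cdot\frac{n-1}{n}\cdot\frac{n-2}{3n} + \big(\frac{n-1}{n}\big)^2\beta_n$ dominates $\big(\frac{n-1}{n}\big)^2\big[\frac{1}{n(n-1)} + \frac{2}{3n} + \beta_n\big]$, and since the $\beta_n$ terms cancel this is a polynomial inequality in $n$ that I expect holds for all $n \geq 4$ after clearing denominators. I anticipate the main obstacle is purely bookkeeping: getting the binomial-coefficient simplifications and the corner evaluations exactly right, and confirming the small-$N_I$ cases $N_I \in \{2,3\}$ by hand rather than being caught by an indicator that is "off"; there is no conceptual difficulty beyond the per-coordinate linearity argument once the algebra is organized.
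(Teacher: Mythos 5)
Your overall strategy coincides with the paper's: the $\mathbf{1}(N_I\geq 4)$ blocks cancel in $h_1-h_2$, the difference is affine in each of $x$ and $y$, so its minimum on $[1/N_I,1]^2$ is attained at a corner, and since every term of $h_2$ carries the factor $(1-x)(1-y)$ the only nontrivial corner is $(1/N_I,1/N_I)$; the paper reaches that same point by checking the partial derivatives are nonnegative rather than by enumerating corners, and its $h_2\geq h_5$ argument is the same termwise comparison you give. The genuine gap is precisely the step you deferred (``a polynomial inequality in $n$ that I expect holds''): with the coefficient you carried over from the displayed formula for $h_2$, namely $\frac{1}{n(n-1)}+\frac{n-1}{3\binom{n}{2}}=\frac{2n+1}{3n(n-1)}$, that corner inequality is \emph{false}. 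Clearing denominators in exactly the comparison you wrote gives $3n+2(n-1)(n-2)-(n-1)(2n+1)=5-2n$, so the corner value of $h_1-h_2$ would be $(5-2n)/(3n^3)<0$ for every $n\geq 3$ (e.g. $-1/81$ at $n=3$, $-1/64$ at $n=4$), and the same failure would surface in your ``by hand'' case $N_I=3$. So as written the proposal does not establish $h_1\geq h_2$.

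What closes the argument --- and what the paper's own proof implicitly uses via the combined coefficient $\frac{2N_I-1}{3N_I(N_I-1)}$ --- is that the middle coefficient of $h_2$ should be $\frac{N_I-2}{3\binom{N_I}{2}}$, not $\frac{N_I-1}{3\binom{N_I}{2}}$: conditioned on $i_a\neq i_b$ and $i_c\neq i_d$, the two (uniform, independent) $2$-subsets of the $N_I$ ancestors share exactly one element with probability $2(N_I-2)/\binom{N_I}{2}$, only two of the four possible identifications are compatible with the target balanced topology, and each of those succeeds with probability $1/3$, giving $\frac{2(N_I-2)}{3N_I(N_I-1)}$ and hence the combined $\frac{1}{N_I(N_I-1)}+\frac{2(N_I-2)}{3N_I(N_I-1)}=\frac{2N_I-1}{3N_I(N_I-1)}$. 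With that coefficient the identical corner computation gives $3n+2(n-1)(n-2)-(n-1)(2n-1)=3$, i.e. $h_1-h_2\geq 1/N_I^3>0$ at $(1/N_I,1/N_I)$, and your corner argument (the corners with $x=1$ or $y=1$ are trivial because $h_2$ vanishes there, and $N_I\in\{1,2\}$ are immediate) then completes the proof. In short, the missing ingredient is not bookkeeping but correcting that one coefficient in $h_2$ before evaluating the corner; note also that $h_2\geq h_5$ survives either choice of coefficient, so only the $h_1\geq h_2$ half is affected.
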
 

\begin{proof}We start by showing $h_1 \geq h_2$.  The difference is \begin{align*}h_1(x,y) - h_2(x,y) &= xy + \left(x(1-y) + (1-x)y \right) \frac{N_I-2}{3N_I}\mathbf{1}(N_I \geq 3) \\
&- (1-x)(1-y) \frac{2N_I-1}{3N_I(N_I-1)}\mathbf{1}(N_I \geq 2).\end{align*} When taking the partial derivative in $x$, we split $1-2y$ into $1-y-y$.  Then the partial derivative in $x$ is \begin{align*}&y + (1-2y) \frac{N_I-2}{3N_I}\mathbf{1}(N_I \geq 3) + (1-y) \frac{2N_I - 1}{3N_I(N_I-1)}\mathbf{1}(N_I \geq 2) \\
&= y \left(1 - \frac{N_I-2}{3N_I}\mathbf{1}(N_I \geq 3)\right) + (1-y) \left( \frac{N_I - 2}{3N_I}\mathbf{1}(N_I \geq 3) + \frac{2N_I-1}{3N_I(N_I - 1)}\mathbf{1}(N_I \geq 2) \right)
\end{align*} The same holds when we compute the partial derivative in $y$.  The function $h_1 - h_2$ is linear in both $x$ and $y$, so the minimum value of $h_1 - h_2$ is obtained by substituting $x = y = 1/N_I$.  Then $$h_1(x,y) - h_2(x,y) \geq h_1(1/N_I,1/N_I) - h_2(1/N_I,1/N_I) = \begin{cases}
1/N_I^3 & N_I \geq 3 \\
1/N_I^2 & N_I = 1,2.
\end{cases}$$ So $h_1 \geq h_2$.

To show $h_2 \geq h_5$, we only need to subtract.  The difference is \begin{align*}
    &h_2(x,y) - h_5(x,y) \\
    &= (1-x)(1-y) \bigg( \frac{5}{6} + \left( \frac{1}{N_I(N_I-1)} + \frac{2}{3N_I} - \frac{N_I-2}{3N_I(N_I - 1)} \right) \mathbf{1}(N_I \geq 3) \\
    &+ \frac{\begin{pmatrix}
    N_I \\
    2
    \end{pmatrix} - 2(N_I - 1) + 1}{18\begin{pmatrix}
    N_I \\
    2
    \end{pmatrix}} \mathbf{1}(N_I \geq 4) \bigg) \\
    &= (1-x)(1-y) \bigg(\frac{5}{6} + \frac{N_I+3}{3N_I(N_I-1)} \mathbf{1}(N_I \geq 3) \\
    &+ \frac{\begin{pmatrix}
    N_I \\
    2
    \end{pmatrix} - 2(N_I - 1) + 1}{18\begin{pmatrix}
    N_I \\
    2
    \end{pmatrix}} \mathbf{1}(N_I \geq 4) \bigg),
\end{align*} which is clearly non-negative for all choices of $x$ and $y$.
\end{proof}

\rev{We now proceed to comparisons with $h_3(x,y)$ and $h_4(x,y)$.  By exchangeability, switching the roles of $A$ and $B$ yields the same probability distribution of gene trees, and the results contained in Propositions \ref{prop:9} and \ref{prop:10} differ only in this exchange.  So, we prove Proposition \ref{prop:9} only, with the understanding that the proof of Proposition \ref{prop:10} is largely identical.  Linear optimization is not required in these proofs -- only a (delicate) comparison of like terms is needed.}

\begin{prop}\label{prop:9} Suppose $x,y \geq 1/N_I$.  Then $h_1(x,y) \geq h_3(x,y) \geq h_5(x,y)$.
\end{prop}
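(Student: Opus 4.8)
The plan is to treat the two inequalities $h_1(x,y)\geq h_3(x,y)$ and $h_3(x,y)\geq h_5(x,y)$ separately. In each case I would form the difference of the two formulas from Proposition~\ref{prop:7}, cancel the terms that appear identically in both, and then bound what remains using only the hypotheses $1/N_I\leq x,y\leq 1$. It is convenient first to record the simplification $\binom{N_I}{2}-2(N_I-1)+1=\tfrac12(N_I-2)(N_I-3)$, which is $\geq 0$ whenever $N_I\geq 3$; consequently every term carrying the indicator $\mathbf{1}(N_I\geq 4)$ is automatically non-negative, and all the genuine work lies in the terms with the factors $1/N_I$ and $(N_I-2)/(3N_I)$.

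For $h_1(x,y)\geq h_3(x,y)$: subtracting the formulas, the term $x(1-y)\tfrac{N_I-2}{3N_I}\mathbf{1}(N_I\geq 3)$ is common and cancels, the term $(1-x)y\tfrac{N_I-2}{3N_I}\mathbf{1}(N_I\geq 3)$ survives and is non-negative, and the $(1-x)(1-y)$ contributions combine to $(1-x)(1-y)\tfrac{(N_I-2)(N_I-3)}{36\binom{N_I}{2}}\mathbf{1}(N_I\geq 4)\geq 0$. Discarding these non-negative pieces, it suffices to verify $xy-x(1-y)\tfrac{1}{N_I}\mathbf{1}(N_I\geq 2)\geq 0$. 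When $N_I=1$ this equals $xy\geq 0$, and when $N_I\geq 2$ it equals $\tfrac{x}{N_I}\big((N_I+1)y-1\big)$, which is non-negative because $y\geq 1/N_I$ forces $(N_I+1)y-1\geq 1/N_I>0$. (Alternatively one can note that $h_1-h_3$ is affine in each of $x$ and $y$ and minimise at the corner $x=y=1/N_I$, exactly as in Proposition~\ref{prop:8}, but the direct estimate avoids the optimisation.)

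For $h_3(x,y)\geq h_5(x,y)$: the $(1-x)(1-y)$ binomial term is common and cancels, leaving $(1-y)\big[\,x\big(\tfrac{1}{N_I}\mathbf{1}(N_I\geq 2)+\tfrac{N_I-2}{3N_I}\mathbf{1}(N_I\geq 3)\big)-(1-x)\tfrac{N_I-2}{3N_I(N_I-1)}\mathbf{1}(N_I\geq 3)\,\big]$. Since $1-y\geq 0$ and the bracket vanishes for $N_I\leq 2$, only $N_I\geq 3$ needs attention; clearing the positive factor $3N_I$, the desired inequality becomes $x\big(N_I^2+N_I-3\big)\geq N_I-2$, and substituting the bound $x\geq 1/N_I$ replaces the left side by $N_I+1-3/N_I$, which is at least $N_I-2$ precisely because $N_I\geq 1$. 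This establishes both chains.

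The computation is entirely elementary; the one point requiring care is the bookkeeping of the indicators $\mathbf{1}(N_I\geq k)$, so that the small cases $N_I\in\{1,2,3\}$ are never silently divided by $N_I-1$ or by $\binom{N_I}{2}$, and so that each retained $(1-x)$, $(1-y)$, or binomial factor is checked to carry the sign the argument needs. Proposition~\ref{prop:10} then follows from Proposition~\ref{prop:9} under the exchange $x\leftrightarrow y$, since $h_1$ and $h_5$ are symmetric in $(x,y)$ while $h_3$ and $h_4$ are interchanged by it.
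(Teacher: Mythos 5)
Your argument is correct and takes essentially the same route as the paper's proof: subtract the formulas of Proposition \ref{prop:7}, cancel the common $x(1-y)\tfrac{N_I-2}{3N_I}$ and binomial pieces, keep the manifestly non-negative terms, and absorb the single negative term using $x,y\geq 1/N_I$ (the paper likewise notes only a comparison of like terms, not optimization, is needed here). Two cosmetic slips do not affect validity: in the second inequality the factor you clear is $3N_I(N_I-1)$, not $3N_I$, and at $N_I=2$ the bracket equals $x/2\geq 0$ rather than vanishing.
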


\begin{proof}
We start by showing that $h_1 \geq h_3$.  The difference is \begin{align*}
    h_1(x,y) - h_3(x,y) &= xy + (1-x)y \frac{N_I - 2}{3N_I}\mathbf{1}(N_I \geq 3) \\
    &- x(1-y) \frac{1}{N_I}\mathbf{1}(N_I \geq 2) \\
    &+ (1-x)(1-y) \frac{\begin{pmatrix}
    N_I \\
    2
    \end{pmatrix} - 2(N_I - 1) + 1}{18 \begin{pmatrix}N_I \\
    2
    \end{pmatrix}} \mathbf{1}(N_I \geq 4).
\end{align*} The only negative part is in the second line, and it can be split into $$-x(1-y) \frac{1}{N_I} \mathbf{1}(N_I \geq 2) = -x\frac{1}{N_I} \mathbf{1}(N_I \geq 2) + xy \frac{1}{N_I} \mathbf{1}(N_I \geq 2).$$ However, $y \geq 1/N_I$ and combining this negative term with the positive $xy$ term in the original expression implies $$xy  -x\frac{1}{N_I} \mathbf{1}(N_I \geq 2) \geq 0.$$ This dispenses with all negative terms, showing that $h_1 \geq h_3$.

Now we show that $h_3 \geq h_5$.  The difference is \begin{align*}
    h_3(x,y) - h_5(x,y) &= x(1-y) \bigg[ \frac{1}{N_I} \mathbf{1}(N_I \geq 2) + \frac{N_I - 2}{3N_I} \mathbf{1}(N_I \geq 3)  \bigg] \\
    &-(1-x)(1-y) \frac{N_I - 2}{3N_I(N_I - 1)}\mathbf{1}(N_I \geq 3).
\end{align*} Using $x \geq 1/N_I$ implies \begin{align*}
    &x(1-y) \bigg[ \frac{1}{N_I} \mathbf{1}(N_I \geq 2) + \frac{N_I - 2}{3N_I} \mathbf{1}(N_I \geq 3)  \bigg] - (1-y) \frac{N_I - 2}{3N_I(N_I - 1)}\mathbf{1}(N_I \geq 3) \bigg] \\
    &\geq (1-y) \bigg[ \frac{1}{2^2} + \frac{2N_I - 1}{3N_I(N_I-1)}\mathbf{1}(N_I \geq 3) \bigg] \geq 0.
\end{align*} We have covered all negative components of $h_3 - h_5$ with corresponding positive numbers, so we conclude $h_3 \geq h_5$.
\end{proof}

\begin{prop}\label{prop:10} Suppose $x,y \geq 1/N_I$.  Then $h_1(x,y) \geq h_4(x,y) \geq h_5(x,y)$.
\end{prop}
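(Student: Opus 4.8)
The plan is to deduce Proposition~\ref{prop:10} from Proposition~\ref{prop:9} via the exchange symmetry between the cherries $\{a,b\}$ and $\{c,d\}$, rather than rerunning the case analysis. Inspecting the explicit expressions of Proposition~\ref{prop:7}, the polynomials $h_1$ and $h_5$ are each symmetric in $x$ and $y$: every monomial occurs either in the symmetric products $xy$, $(1-x)(1-y)$, or in the combination $x(1-y)+(1-x)y = x+y-2xy$, each of which is invariant under $(x,y)\mapsto(y,x)$. In particular $h_1(x,y)=h_1(y,x)$ and $h_5(x,y)=h_5(y,x)$. The expressions for $h_3$ and $h_4$ differ only in that the bracketed coefficient $x(1-y)$ in $h_3$ is replaced by $(1-x)y$ in $h_4$, the common $(1-x)(1-y)$ term being unchanged; hence $h_4(x,y)=h_3(y,x)$. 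This is exactly the concrete manifestation of the exchangeability noted before Proposition~\ref{prop:9}: swapping the roles of $A,B$ with $C,D$ transposes the two conditional collision probabilities $x=\PP'_{N_I}(i_a=i_b)$ and $y=\PP'_{N_I}(i_c=i_d)$.

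Given these identities the argument is immediate. The hypothesis region $\{x\geq 1/N_I,\ y\geq 1/N_I\}$ is invariant under $(x,y)\mapsto(y,x)$, so applying Proposition~\ref{prop:9} at the point $(y,x)$ gives $h_1(y,x)\geq h_3(y,x)\geq h_5(y,x)$; rewriting each side via the symmetry relations $h_1(y,x)=h_1(x,y)$, $h_3(y,x)=h_4(x,y)$, $h_5(y,x)=h_5(x,y)$ yields $h_1(x,y)\geq h_4(x,y)\geq h_5(x,y)$, which is the claim.

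For completeness I would also note the direct route, which mirrors the proof of Proposition~\ref{prop:9} with the roles of $x$ and $y$ interchanged. In $h_1-h_4$ the only negative contribution is $-(1-x)y\,\tfrac{1}{N_I}\mathbf{1}(N_I\geq 2)$; writing it as $-y\,\tfrac{1}{N_I}\mathbf{1}(N_I\geq 2)+xy\,\tfrac{1}{N_I}\mathbf{1}(N_I\geq 2)$ and absorbing the first piece into the $xy$ term using $x\geq 1/N_I$ gives $h_1\geq h_4$. In $h_4-h_5$ the binomial-coefficient terms cancel, leaving the single negative term $-(1-x)(1-y)\,\tfrac{N_I-2}{3N_I(N_I-1)}\mathbf{1}(N_I\geq 3)$, which is dominated by the positive term $(1-x)y\bigl[\tfrac{1}{N_I}\mathbf{1}(N_I\geq 2)+\tfrac{N_I-2}{3N_I}\mathbf{1}(N_I\geq 3)\bigr]$ after invoking $y\geq 1/N_I$ together with $1-y\leq 1$. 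There is no genuine obstacle here; the only point requiring care is the indicator-function bookkeeping under the swap — each use of $x\geq 1/N_I$ versus $y\geq 1/N_I$, and the term it is applied to, must be the exact mirror of the corresponding step in Proposition~\ref{prop:9}. The main temptation to resist is rewriting the whole computation when the one-line symmetry reduction already suffices.
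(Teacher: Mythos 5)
Your symmetry reduction is correct and is exactly the argument the paper uses: its proof of Proposition~\ref{prop:10} simply says the proof of Proposition~\ref{prop:9} applies with the roles of $x$ and $y$ switched, which is what your identities $h_1(x,y)=h_1(y,x)$, $h_5(x,y)=h_5(y,x)$, and $h_4(x,y)=h_3(y,x)$ make precise. Your sketched direct verification is a faithful mirror of the Proposition~\ref{prop:9} computation and adds nothing beyond the one-line symmetry argument, so the proposal matches the paper's route.
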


\begin{proof}
The proof is identical to that of Proposition \ref{prop:9}, except we switch the roles of $x$ and $y$.
\end{proof}

\begin{figure}
    \includegraphics[scale=0.5]{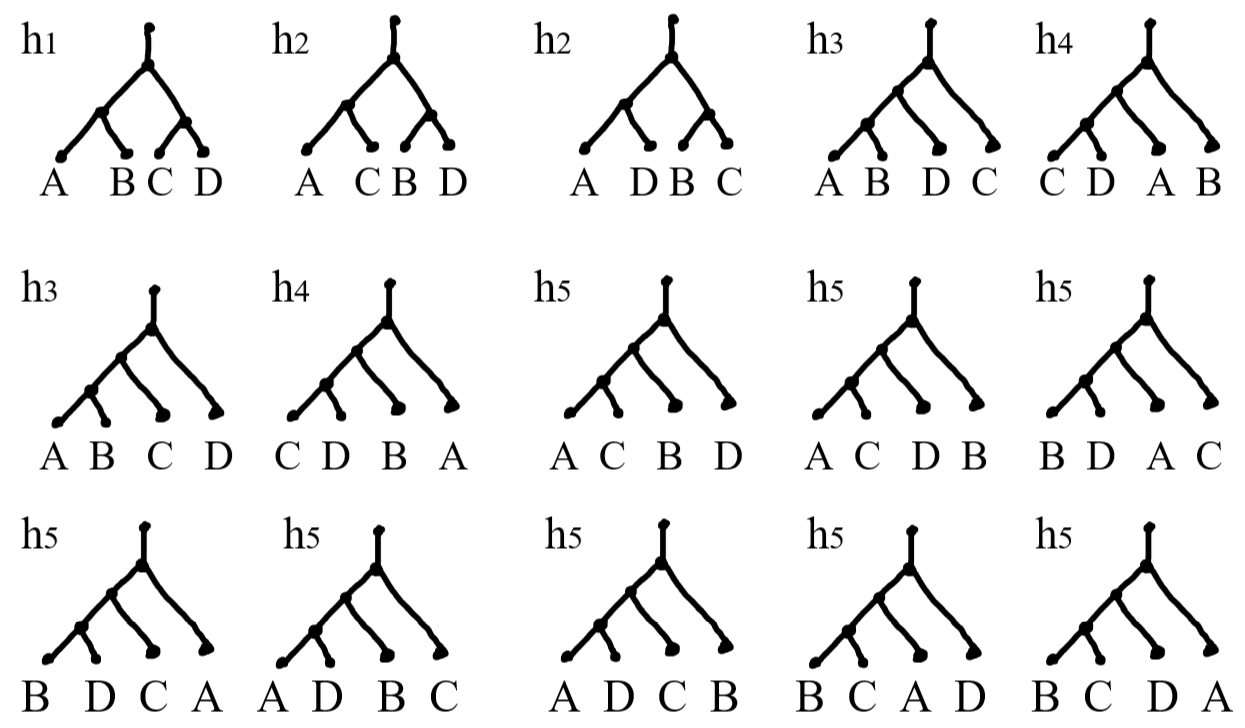}
    \caption{Labelling of the possible uniformly sampled gene trees by probability in the case of the balanced species tree.}
    \label{fig:GDLhtable}
\end{figure}

\section{Caterpillar topology on four leaves.}

In this section, the species tree $\sigma$ is assumed to have four leaf species $A,B,C,$ and $D$ and the \rev{branching pattern} has $A,B$ are siblings, followed by $C$ as an outgroup, and followed by $D$ as a further outgroup.  The Newick tree format is $(((A,B),C),D)$.  \rev{This branching pattern is \textit{unbalanced} or \textit{caterpillar}.}

The edge lengths are $s$ for the stem edge, $f$ for the parent edge to $A$ and $B$, $g$ for the parent edge to the most recent common ancestor of $A,B,$ and $C$, and $f_n', n \in \{A,B,C,D\}$ for the lengths of the pendant edges incident to the leaves.  The root vertex is labeled $I$, as in the previous sections.  Let $\PP'$ be the probability measure subject to the conditioning event where $N_A,N_B,N_C,$ and $N_D$ are positive.  Let $I$ be the most recent common ancestor to the four species; $J$ be the most recent common ancestor to $A,B,C$; and $K$ be the most recent common ancestor to $A$ and $B$.  \rev{The caterpillar topology with these labelled vertices is given in the right frame of Figure \ref{figure:GDLspeciestrees}.} The names of the ancestors of sampled copies are $k_{\ell}, \ell \in \{a,b\}$; $j_{\ell'}, \ell' \in \{a,b,c\}$; and $i_{\ell''}, \ell'' \in \{a,b,c,d\}$.  

Conditioning further on $N_I$, the probability of each gene tree topology is written using $N_I$, with the measure $\PP'_{N_I}$. \rev{Relevant conditional probabilities are \begin{align*}
w &= \mathbf{P}'_{N_I}(j_a = j_b) \\
x &= \mathbf{P}'_{N_I}(i_a = i_b = i_c|j_a = j_b) \\
y &= \mathbf{P}'_{N_I}(i_a = i_b = i_c, |\{j_a,j_b,j_c\}| = 3| j_a \ne j_b) \\
v &= \mathbf{P}'_{N_I}(i_a = i_b = i_c, j_a = j_c|j_a \ne j_b) = \mathbf{P}'_{N_I}(i_a = i_b = i_c,j_b = j_c|j_a \ne j_b) \\
z &= \mathbf{P}'_{N_I}(i_a = i_b \ne i_c|j_a \ne j_b) = \mathbf{P}'_{N_I}(i_a = i_c \ne i_b|j_a \ne j_b) = \mathbf{P}'_{N_I}(i_b = i_c \ne i_a|j_a \ne j_b).
\end{align*} We must isolate these particular quantities because there are more options for the ancestry of the sampled copies from $A$ and $B$.  With an intuition based in coalescent theory, we proceed from the leaves and work backwards.  We first check whether the copies from $A$ and $B$ find ancestry at $J$, which is given by $w$.  Conditioned on $j_a = j_b$, we no longer need to consider how the copies merged there -- all that matters is how this individual merges with the sampled copy from $C$.  In computing $x$, it is already known that $i_a = i_b$, so the random content is whether $i_a = i_c$.  On the other hand, if $j_a \ne j_b$, then there are three equivalent ancestral copies to those of $A,B,C$ in computing $y, v, $ and $z$.   

}

\rev{As in the previous Section, we provide notation for each possible uniformly sampled gene tree with a visual aid in Figure \ref{fig:GDLktable}.}  Throughout, we let $k_{w'}$ denote the probability of gene tree $t_{w'}$, with $w'$ indexed as follows. The species tree is listed first, followed by the possible anomalous gene trees.  Let $k_1$ denote the probability of caterpillar tree $(((a,b),c),d)$, let $k_2$ denote the probability of the balanced tree $((a,b),(c,d)$, and let $k_3$ denote the probability of each alternate balanced tree $((a,c),(b,d))$ and $((a,d),(b,c))$.  Then $k_{w'}, w' \in \{4,5,...,8\}$ represent the other caterpillar topologies.

The main theorem of this section establishes possible anomaly zones for the caterpillar tree.  \rev{That is, given any choice of birth rate $\lambda$ and death rate $\mu$, the caterpillar species tree with some choices of branch lengths produces a gene tree distribution where the uniformly sampled gene tree can have an alternate topology have maximal probability.  We show that such anomalous gene trees can correspond to only balanced quartets.  The species tree topology must have probability greater than any other caterpillar topology, for any choice of branch lengths.  Theorem \ref{thm:caterpillar} is also a parallel result to what is found in MSC.}

\begin{theorem}
    Let $\sigma$ be a species tree on four species $A,B,C,D$ with the caterpillar topology.  Then $\rev{\mathbf{P}'(u(t) = \mathcal{T})}$ is maximized when $\mathcal{T}$ corresponds to either the species tree or one of the balanced rooted quartets.
\end{theorem}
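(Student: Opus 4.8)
The plan is to follow the blueprint of Section 5, adapted to the longer backbone of the caterpillar. First I would condition on $N_I$ and, as in Proposition \ref{prop:7}, derive explicit formulas $k_1,\dots,k_8$ for the conditional probabilities $\mathbf{P}'_{N_I}(u(t)=t_{w'})$ in terms of $N_I$ and the conditional coalescence probabilities $w,x,y,v,z$ defined above. The bookkeeping is a law-of-total-probability computation over the ancestry configuration of the four sampled lineages at $K$, $J$, and $I$: once a set of lineages fails to coalesce by $I$, the uniform sampling in ASTRAL-one distributes them over the unresolved part exactly as in the MSC deep-coalescence regime (each balanced resolution with probability $1/9$, each caterpillar resolution with probability $1/18$), so the only source of species-tree signal is a coalescence of $a$ with $b$ at or below $J$ (governed by $w$) followed by a coalescence of the $ab$-lineage with $c$ at or below $I$ (governed by $x$), together with the residual terms $y,v,z$ that arise when $j_a,j_b,j_c$ are not all merged. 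Because $A$ and $B$ are exchangeable, the eleven non-species-tree caterpillars collapse into symmetry classes; I would record the representative probabilities as $k_4,\dots,k_8$, noting that two of the a priori distinct classes end up with equal probability.

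The core of the argument is then the pointwise inequality $k_1 \ge k_{w'}$ for $w'\in\{4,\dots,8\}$, valid for every value of $N_I$ under the natural lower bounds on $w,x,y,v,z$ (the analogues of $x,y\ge 1/N_I$: lineages that have not yet coalesced agree at the next speciation node with probability at least the reciprocal of the number of copies available there). As in Propositions \ref{prop:8}--\ref{prop:10}, I expect each such inequality to reduce, after subtracting, to matching every negative term of $k_1-k_{w'}$ against a strictly larger positive term, with at most an occasional auxiliary linear-minimization step of the kind used for $h_1\ge h_2$. The structural reason the inequalities hold is that the species-tree caterpillar is the unique caterpillar that collects the \emph{full} boost from both backbone coalescences ($a$ with $b$, then the $ab$-lineage with $c$), whereas any other caterpillar uses either the wrong cherry or the wrong outgroup ordering and hence inherits only a strict fraction of one of these boosts.

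Finally I would take $\mathbf{E}'$ over $N_I$: since $k_1\ge k_{w'}$ holds conditionally on every $N_I\ge 1$, it holds for the unconditional probabilities $\mathbf{P}'(u(t)=t_{w'})$, so no caterpillar other than the species tree can attain the maximum. The only remaining candidates for the argmax are the species-tree topology $k_1$ and the balanced quartets contributing to $k_2,k_3$, which is exactly the assertion of the theorem. I would stress that the proof deliberately does \emph{not} compare $k_1$ with $k_2$ or $k_3$; those comparisons are where the anomaly zone lives, and they are taken up separately in this section.

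The main obstacle I anticipate is Step 1 together with the term-matching in Step 2. Because the caterpillar has three relevant coalescence depths ($K$, $J$, $I$) rather than the two in the balanced case, there are more ancestry configurations for the copies from $A$ and $B$, the formulas for $k_{w'}$ carry more indicators $\mathbf{1}(N_I\ge m)$, and correctly attributing the various $\mathcal{C}_{N_I,2}$-type counts to the right topology is error-prone. The delicate point is to choose, for each negative term in $k_1-k_{w'}$, a dominating positive term that is not simultaneously needed to absorb a different negative term --- precisely the delicate comparison of like terms flagged before Proposition \ref{prop:9}.
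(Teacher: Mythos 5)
Your proposal follows essentially the same route as the paper: explicit conditional formulas $k_1,\dots,k_8$ given $N_I$ (the paper's Propositions \ref{prop:11}--\ref{prop:14}), pointwise term-matching comparisons $k_1 \geq k_{w'}$ for $w' \in \{4,\dots,8\}$ using lower bounds on the coalescence probabilities (Lemma \ref{lem:1}), and then taking $\mathbf{E}'$ over $N_I$, which leaves only the species-tree caterpillar and the balanced quartets as possible maximizers while deliberately leaving the $k_1$ versus $k_2,k_3$ comparison open. The only refinement worth noting is that the comparison with $(((a,c),b),d)$ is not handled by the $1/N_I$-type bounds you cite but by the separate inequality $wx \geq (1-w)v$ (the paper's Lemma \ref{lem:2}); this fits within your anticipated delicate term-matching step, so the approaches coincide.
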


\rev{As with the previous Theorem, the proof will proceed by writing explicit formulas for $k_{w'}(w,x,y,v,z)) = k_{w'}$, and show that $k_1 \geq k_{w'}$ for every $w' \in \{4,5,6,7,8\}$.  This is all that is necessary to prove Theorem \ref{thm:caterpillar}.  In this paper, it will not be possible to fully characterize the relationship between $k_1,k_2,$ and $k_3$.  Instead, we introduce a new framework for understanding anomaly zones under GDL and prove partial results in the next Section. It will be shown that $t_2$ has greater probability than $t_3$, implying three possible rankings of probability.  They are (I) $k_1 > k_2 > k_3$ (no anomalous gene trees); (II) $k_2 > k_1 > k_3$ ($t_2$ is anomalous); and (III) $k_2 > k_3 > k_1$ (all balanced trees are anomalous).  Any subsequent analysis of uniformly sampled GDL trees should consider whether a given caterpillar species tree belongs to (I), (II), or (III).}

\rev{The first result is a technical Lemma that establishes sufficient bounds on $w$ and $x$.  By stating that $w$ or $x$ is at least $1/N_I$, one is implicitly stating that the ancestors of sampled copies of $A$ and $B$ are ``positively correlated'' in the sense that they are more likely to choose the same ancestor relative to a random uniform sampling of individuals at $I$.}

\begin{lemma} \label{lem:1}
Conditioned on $N_I$, we have $w,x \geq \frac{1}{N_I}$.
\end{lemma}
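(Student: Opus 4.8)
The plan is to work with the \emph{family decomposition} of the GDL process at $I$. Conditioned on $N_I=n$, the portion of the process strictly below $I$ splits into $n$ independent, identically distributed \emph{families}: family $m$ is the collection of all surviving copies (of every species) descending from the $m$-th copy present at $I$, and is itself a GDL realization on the subtree rooted at $I$. Write $X_m^A,X_m^B,X_m^C$ for the numbers of surviving copies of $A,B,C$ in family $m$ and $S^A=\sum_m X_m^A$, etc. The event behind $\mathbf{P}'$ (all four species survive) and the value $N_I$ are both invariant under relabeling the families, so the families remain exchangeable under $\mathbf{P}'_{N_I}$. A first consequence I would record is that the ancestor at $I$ of any single uniformly sampled copy is uniform on $\{1,\dots,n\}$: conditionally on the realization, a copy drawn uniformly from $A$ lies in family $m$ with probability $X_m^A/S^A$, and since these sum to $1$, exchangeability forces $\mathbf{E}'_{N_I}[X_m^A/S^A]=1/n$; as $\{j_a=j_b\}$ references no family label, the same symmetry gives $\mathbf{P}'_{N_I}(i_a=m\mid j_a=j_b)=\mathbf{P}'_{N_I}(i_c=m\mid j_a=j_b)=1/n$ for every $m$.

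For $x$: since $\{j_a=j_b\}$ forces $i_a=i_b$, we have $x=\mathbf{P}'_{N_I}(i_c=i_a\mid j_a=j_b)$, and with $i_a$ and $i_c$ each uniform on the $n$ families under the conditional law, the inequality $x\ge 1/n$ is \emph{equivalent} to $i_a$ and $i_c$ being non-negatively correlated there. I would obtain this from a Harris/FKG-type argument. Conditionally on the realization and on $\{j_a=j_b\}$, writing $\rho_m$ for the weight attached to the shared family of $a,b$ being $m$ (so $\sum_m\rho_m=1$, $\mathbf{E}'_{N_I}[\rho_m\mid j_a=j_b]=1/n$), both $\rho_m$ and $X_m^C/S^C$ depend on the family realizations with the same monotone pattern --- increasing in family $m$'s progeny, decreasing in every other family's --- hence are positively correlated, so $x=\sum_m\mathbf{E}'_{N_I}[\rho_m(X_m^C/S^C)\mid j_a=j_b]\ge\sum_m n^{-2}=n^{-1}$. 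The survival event is increasing and so preserves positive association; the point needing care is that conditioning additionally on the non-monotone event $\{j_a=j_b\}$ does not spoil it, which I would handle by keeping that event on the enlarged space that also carries the uniform sampling variables.

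For $w=\mathbf{P}'_{N_I}(j_a=j_b)$: since $\{j_a=j_b\}\subseteq\{i_a=i_b\}$, exchangeability gives $w=n\,\mathbf{P}'_{N_I}(j_a=j_b,\ i_a=i_b=1)$, so it suffices to show the last probability is at least $n^{-2}$. Letting $Z_{1,\ell}^A$ be the number of copies of $A$ in family $1$ descending from its $\ell$-th copy at $J$ (set to $0$ for $\ell$ beyond the number of such copies), this probability equals $\sum_\ell\mathbf{E}'_{N_I}[(Z_{1,\ell}^A/S^A)(Z_{1,\ell}^B/S^B)]$. Each summand is, by FKG, at least $\mathbf{E}'_{N_I}[Z_{1,\ell}^A/S^A]\,\mathbf{E}'_{N_I}[Z_{1,\ell}^B/S^B]$, since both factors increase in the $\ell$-th $J$-copy of family $1$ (through its number of copies at $K$, hence its copies of $A$ and of $B$) and decrease in every other copy, and $\sum_\ell\mathbf{E}'_{N_I}[Z_{1,\ell}^A/S^A]=\mathbf{E}'_{N_I}[X_1^A/S^A]=1/n$, likewise for $B$. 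Closing the gap between $\sum_\ell\mathbf{E}'_{N_I}[Z_{1,\ell}^A/S^A]\,\mathbf{E}'_{N_I}[Z_{1,\ell}^B/S^B]$ and $n^{-2}$ is where the within-family positive correlation between the copy counts of $A$ and $B$ (they share the edges from $I$ to $J$ and from $J$ to $K$) must be quantified; incidentally the special case $\mathbf{P}'_{N_I}(i_a=i_b)\ge 1/N_I$ of this reasoning is the ``$x,y\ge 1/N_I$'' hypothesis invoked for the balanced tree.

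The main obstacle is the association step. The relevant functionals are ratios with random numerators \emph{and} denominators; the natural partial order lives on whole GDL realizations rather than on a product of totally ordered coordinates; and the conditioning --- on survival of all four species for both $w$ and $x$, and additionally on $\{j_a=j_b\}$ for $x$ --- must be shown to keep the FKG structure intact. I would manage this by conditioning on the family realizations to reduce to a statement about exchangeable triples $(X_m^A,X_m^B,X_m^C)$ (refined, for $w$, to the per-$J$-copy counts inside family $1$) on a product space, using that enlarging one component's progeny raises that component's share of each species simultaneously while lowering everyone else's --- precisely the sign pattern FKG requires --- and absorbing the survival conditioning through the fact that it is an increasing event. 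The residual difficulties are the non-monotone conditioning on $\{j_a=j_b\}$ and the sharpening of the Harris/Cauchy--Schwarz slack in the bound on $w$, which is the heart of the Lemma.
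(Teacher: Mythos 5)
Your route (exchangeable family decomposition at $I$ plus Harris/FKG positive-association arguments applied to ratios) is genuinely different from the paper's, but, as you yourself flag, it does not close, and the places where it stalls are exactly where the proof has to happen. For $x$, no correlation inequality is needed at all. The paper conditions on the vector $(M_j)_{j=1}^{N_I}$ of numbers of copies at $J$ descending from each of the $N_I$ copies at $I$. Given this vector, the subtrees hanging below the copies at $J$ are i.i.d., the $C$-side and the $(A,B)$-side evolve conditionally independently, and the survival conditioning is symmetric across those subtrees; hence the $J$-ancestor of the sampled $c$ and the common $J$-ancestor of $a$ and $b$ (on the event $j_a=j_b$) are independent, each landing in the $j$-th group with conditional probability $M_j/\sum_{j'}M_{j'}$. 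Therefore $x=\mathbf{E}'_{N_I}\bigl[\sum_j M_j^2/(\sum_j M_j)^2\bigr]\ge 1/N_I$ by the quadratic-mean (Cauchy--Schwarz) inequality over the $N_I$ groups. Your FKG plan would additionally require positive association of within-family species counts on a non-product partial order and a way to survive the non-monotone conditioning on $\{j_a=j_b\}$; you name both obstacles but resolve neither, and neither is needed once the conditional independence above is exploited.

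For $w$ the gap is fatal rather than cosmetic: after Harris you are left with $w\ge n\sum_\ell \mathbf{E}'_{N_I}[Z^A_{1,\ell}/S^A]\,\mathbf{E}'_{N_I}[Z^B_{1,\ell}/S^B]$, and since the number of $J$-copies in family $1$ is unbounded, a sum of products of means whose totals are each $1/n$ can lie well below $1/n^2$; no soft positive-association statement recovers the missing factor, which is why you correctly call this step ``the heart of the Lemma.'' The paper's proof sidesteps the issue entirely: it groups the copies surviving to $K$ by their ancestor at $I$ and runs the same proportional-selection argument at that level, so the quadratic-mean inequality is again taken over exactly $N_I$ groups and yields the $1/N_I$ bound in one line (strictly, what that computation controls is the probability that $a$ and $b$ share an ancestor at $I$, which the paper identifies with $w$; your difficulty in pushing the bound down to coincidence at $J$ is a real one, not an artifact of your method). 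In short, the paper's mechanism is conditional independence of the two samples' choices given per-group sizes plus $\sum_k M_k^2\ge(\sum_k M_k)^2/N_I$; your proposal substitutes correlation inequalities for that conditional independence, which is unnecessary for $x$ and insufficient, by itself, for $w$.
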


\begin{proof} Both bounds follow mostly from Lemma 1 of \cite{legriedGDL2021}.
The bound on $w$ follows by conditioning on $N_K$, the total number of individuals surviving to $K$.  Letting $(M_k)_{k=1}^{N_I}$ be the descendants of each individual at $I$ surviving to $K$, the selection of $j_a$ and $j_b$ are independent.  We have \begin{align*}
w = \PP'_{N_K}(j_a = j_b) &= \EE'_{N_I}\left[ \frac{\sum_{k=1}^{N_I}M_k^2}{(\sum_{k=1}^{N_I}M_k)^2} \bigg| (M_k)_{k=1}^{N_I}\right] \\
&\geq \EE'_{N_I} \left[\frac{1}{N_I}\right] = \frac{1}{N_I}.
\end{align*} The inequality utilized is the quadratic mean inequality.

Now we consider the bound on $x$.  Conditioned on $j_a = j_b$, the event $i_a = i_b$ is known to occur.  So $x = \PP'_{N_I}(i_a = i_c|j_a = j_b)$.  Letting $(M_j)_{j=1}^{N_I}$ be the descendants of each individual at $I$ surviving to $J$, the selection of $j_a$ and $j_c$ are proportional to these weights and are independent.  Then \begin{align*}
\PP'_{N_I}(i_a = i_c|j_a = j_b)&= \EE'_{N_I} \left[ \frac{\sum_{j=1}^{N_I}M_j^2}{\left(\sum_{j=1}^{N_I}M_j\right)^2} \bigg| (M_j)_{j=1}^{N_I}\right],
\end{align*} and we obtain the lower bound of $1/N_I$.
\end{proof}

\rev{The next Lemma follows similarly by considering the case where $i_a = i_b = i_c$.  Knowing that the copies of $A,B,$ and $C$ find a common ancestor at $I$ rather than sometime between $I$ and $R$ is irrelevant to asking about the ancestry of the copies of $A$ and $B$ at $J$.

\begin{lemma} \label{lem:2}
    Conditioned on $N_I$, we have $wx \geq (1-w)v$.
\end{lemma}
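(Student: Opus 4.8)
The plan is to follow the genealogy of the GDL process strictly below the vertex $I$, express both $wx$ and $(1-w)v$ as averages over that random structure, and reduce the claim to a positive-association statement for the surviving progenies at leaves $A$ and $B$. Fix $N_I$. Label the lineages present at $J$ by $1,\dots,N_J$; for each lineage $j$ let $g(j)\in\{1,\dots,N_I\}$ be the individual at $I$ from which $j$ descends, set $G_i=\{j:g(j)=i\}$ and $P_i=|G_i|$, and let $\alpha_j,\beta_j,\gamma_j$ be the numbers of copies surviving from $j$ to leaves $A,B,C$; write $N_A=\sum_j\alpha_j$, $N_B=\sum_j\beta_j$, $N_C=\sum_j\gamma_j$ and $A_i=\sum_{j\in G_i}\alpha_j$, $B_i=\sum_{j\in G_i}\beta_j$, $C_i=\sum_{j\in G_i}\gamma_j$. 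Conditioned on the whole genealogy the uniform sampling makes $j_a,j_b,j_c$ independent with $\PP(j_a=j)=\alpha_j/N_A$, etc., and $i_a=i_b=i_c$ holds exactly when $j_a,j_b,j_c$ lie in one group $G_i$. Since $wx=\PP'_{N_I}(i_a=i_b=i_c,\ j_a=j_b)$ and $(1-w)v=\PP'_{N_I}(i_a=i_b=i_c,\ j_a=j_c,\ j_a\neq j_b)\le\PP'_{N_I}(i_a=i_b=i_c,\ j_a=j_c)$, it suffices to prove
\[
\PP'_{N_I}(i_a=i_b=i_c,\ j_a=j_b)\ \ge\ \PP'_{N_I}(i_a=i_b=i_c,\ j_a=j_c),
\]
which is the quantitative form of the remark preceding the lemma: given that all three copies coalesce at or above $I$, the sibling pair $\{A,B\}$, which additionally passes through $K$, coalesces at $J$ at least as often as $\{A,C\}$.

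Conditioning on the genealogy and summing over the shared lineage turns the two sides into $\EE'_{N_I}[\sum_j\alpha_j\beta_jC_{g(j)}/(N_AN_BN_C)]$ and $\EE'_{N_I}[\sum_j\alpha_j\gamma_jB_{g(j)}/(N_AN_BN_C)]$. Next condition only on the partition data $(P_i)_i$. The subtrees below the lineages at $J$ are i.i.d.; within each one $\gamma_j$ lives on the pendant edge $J\to C$ while $(\alpha_j,\beta_j)$ live below the edge $J\to K$, so $(\gamma_j)_j$ is independent of $(\alpha_j,\beta_j)_j$ (and the conditioning in $\PP'$ on $N_A,N_B,N_C>0$ respects this split, $\{N_A>0\},\{N_B>0\}$ depending only on the $\alpha,\beta$-side and $\{N_C>0\}$ only on the $\gamma$-side). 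Averaging first over $(\gamma_j)_j$, with $\EE[\gamma_j/N_C\mid\cdot]=1/N_J$ and $\EE[C_i/N_C\mid\cdot]=P_i/N_J$ by exchangeability, collapses the difference of the two sides to
\begin{align*}
\EE'_{N_I}\!\Big[\frac{1}{N_AN_BN_J}\sum_i\sum_{j\in G_i}\alpha_j(\beta_jP_i-B_i)\Big]
&=\EE'_{N_I}\!\Big[\frac{1}{2N_AN_BN_J}\sum_i\sum_{j,j'\in G_i}(\alpha_j-\alpha_{j'})(\beta_j-\beta_{j'})\Big],
\end{align*}
the equality being a symmetrization of the cross terms. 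Thus it remains to show $\EE'_{N_I}[(\alpha_j-\alpha_{j'})(\beta_j-\beta_{j'})/(N_AN_B)\mid(P_i)_i]\ge0$ for every pair $j,j'$ at $J$.

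For this, condition in addition on the numbers $(N_K^{(\ell)})_\ell$ of lineages surviving to $K$ below each lineage at $J$. Given this, $(\alpha_\ell)_\ell$ and $N_A$ are functions of the i.i.d.\ leaf-$A$ progenies of the individuals at $K$, and $(\beta_\ell)_\ell$ and $N_B$ of their independent leaf-$B$ progenies, so $(\vec\alpha,N_A)\perp(\vec\beta,N_B)$ and the conditional expectation factors as $\EE[(\alpha_j-\alpha_{j'})/N_A\mid\cdot]\cdot\EE[(\beta_j-\beta_{j'})/N_B\mid\cdot]$. Taking $N_K^{(j)}\ge N_K^{(j')}$ without loss of generality, write $\alpha_{j'}=X$ and $\alpha_j=X'+Y$, where $X,X'$ are the leaf-$A$ progenies of the first $N_K^{(j')}$ lineages of $j'$, respectively of $j$ (so $X\overset{d}{=}X'$, independent of each other, of $Y$ — the leaf-$A$ progeny of the remaining $N_K^{(j)}-N_K^{(j')}$ lineages of $j$ — and of $R:=N_A-\alpha_j-\alpha_{j'}$). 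Swapping $X\leftrightarrow X'$ under the $X\leftrightarrow X'$-symmetric event $\{N_A>0\}$ gives $\EE[(\alpha_j-\alpha_{j'})/N_A\mid\cdot]=\EE[Y/N_A\mid\cdot]\ge0$, and likewise $\EE[(\beta_j-\beta_{j'})/N_B\mid\cdot]\ge0$; hence the product is nonnegative, and averaging back over $(N_K^{(\ell)})_\ell$ and then over $(P_i)_i$ yields $wx-(1-w)v\ge0$.

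The routine parts are the bookkeeping of the conditional independences that make the products factor and the verification that conditioning on $N_A,N_B,N_C>0$ is harmless (each event is symmetric in the relevant subtrees, hence preserves exchangeability and the swap symmetry). The genuinely substantive step, and the one I expect to be the main obstacle, is the last one: that extra lineages surviving to $K$ inflate the leaf-$A$ and leaf-$B$ counts in the same direction \emph{even after dividing by} the random totals $N_A$ and $N_B$. This is exactly why one must descend to the level of $(N_K^{(\ell)})_\ell$ rather than stop at $(P_i)_i$, where $\alpha_j$ and $\beta_j$ are positively correlated only in the weak sense of covariance.
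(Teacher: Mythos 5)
Your proof is correct, but it takes a genuinely different (and more self-contained) route than the paper. The paper's argument is short: it rewrites $wx=\PP'_{N_I}(i_a=i_b=i_c,\ j_a=j_b)$ and $(1-w)v=\PP'_{N_I}(i_a=i_b=i_c,\ j_a=j_c\ne j_b)$, factors out $\PP'_{N_I}(i_a=i_b=i_c)$, and asserts that the difference of the two conditional probabilities given $i_a=i_b=i_c$ is nonnegative ``through a similar method to Lemma 1,'' i.e.\ by analogy with the quadratic-mean inequality. You start from the same rewriting of the two sides as joint probabilities, but instead of conditioning on the coalescence event (which perturbs the exchangeability and independence structure that the Lemma~1 argument relies on), you compare the joint probabilities directly: integrating out the $C$-side progenies reduces the difference to the expectation of a Chebyshev-type sum $\sum_i\sum_{j,j'\in G_i}(\alpha_j-\alpha_{j'})(\beta_j-\beta_{j'})$ normalized by $N_AN_B$, and you prove its nonnegativity by conditioning on the copy numbers at $K$, using the conditional independence of the $A$-side and $B$-side progenies to factor the expectation, and a swap-symmetry argument to show each factor has the same sign even after dividing by the random totals $N_A,N_B$ and conditioning on survival. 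What each approach buys: the paper's proof is two lines but leaves the key positivity as an assertion by analogy, while yours is longer but fills that step in rigorously, correctly identifying (as you note at the end) that the real difficulty is the division by random totals, which a bare covariance or quadratic-mean statement does not by itself resolve.
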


\begin{proof}
    We have \begin{align*}
        wx &= \mathbf{P}'_{N_I}(i_a = i_b = i_c, j_a = j_b) \\
        (1-w)v &= \mathbf{P}'_{N_I}(i_a = i_b = i_c, j_a = j_c \ne j_c).
    \end{align*} Then \begin{align*}&wx - (1-w)v \\
    &= \left(\mathbf{P}'_{N_I}(j_a = j_b|i_a = i_b = i_c) - \mathbf{P}'_{N_I}(j_a = j_c \ne j_b|i_a = i_b = i_c)\right)\mathbf{P}'_{N_I}(i_a = i_b = i_c).\end{align*} The factor contained in the parentheses is shown to be non-negative through a similar method to Lemma \ref{lem:1}.
\end{proof}

}

\rev{The next several Propositions are stated separately because }

\begin{prop} \label{prop:11}
The caterpillar topology $(((a,b),c),d)$ has the probability \begin{align*}
k_1 &= wx + w(1 - x) \left(1 - \frac{2}{N_I}\right) \frac{1}{3}\mathbf{1}(N_I \geq 3) \\
&+ (1-w)y \frac{1}{3} + (1-w) z \left(1 - \frac{2}{N_I}\right) \frac{1}{3}\mathbf{1}(N_I \geq 3) \\
&+ (1-w)(1 - y \rev{-2v} - 3z)\left(1 - \frac{3}{N_I}\right)\frac{1}{18}\mathbf{1}(N_I \geq 4).
\end{align*}
\end{prop}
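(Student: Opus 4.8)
The plan is to compute $k_1$ by conditioning on $N_I$ and applying the law of total probability over the ancestral configuration of the sampled copies $a,b,c$ at the species-tree vertices $J$ and $I$, together with the ancestor of $d$ at $I$. The organizing principle is that a pair or triple of sampled lineages that has already coalesced by the time it reaches a vertex of $\mathcal{T}$ contributes a locked clade to the gene tree, whereas lineages still distinct when they enter a common population segment --- the edge between $J$ and $I$, or the stem above $I$ --- resolve into each admissible subtree with the uniform within-population probabilities familiar from the coalescent: three lineages into each of the three rooted triples with probability $\frac{1}{3}$, and four lineages into each balanced topology with probability $\frac{1}{9}$ and each caterpillar with probability $\frac{1}{18}$. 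The five summands of the asserted formula correspond to the five groups of configurations that produce the topology $(((a,b),c),d)$.

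Concretely, first condition on $\{j_a = j_b\}$, which has probability $w$ and forces $i_a = i_b$, so $\{a,b\}$ is a locked cherry to which $c$ and $d$ can only attach from above. Within this event, split on whether $i_a = i_c$, i.e.\ whether $c$ joins $\{a,b\}$ by $I$, which has conditional probability $x$. If so, the gene tree is $(((a,b),c),d)$ no matter where $d$ attaches, giving $wx$. If not (probability $1-x$), the copies of $\{a,b\}$, of $c$, and of $d$ must be three distinct copies at $I$ and must then resolve in the stem with $c$ joining $\{a,b\}$ before $d$, giving $w(1-x)\bigl(1-\tfrac{2}{N_I}\bigr)\tfrac{1}{3}\,\mathbf{1}(N_I\geq 3)$. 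Next condition on $\{j_a\neq j_b\}$, probability $1-w$, and enumerate the configurations of $a,b,c$ at $J$ and $I$: (i) all three distinct at $J$ and all coincident at $I$ (probability $y$); (ii) $c$ already paired with $a$ or with $b$ at $J$, all three coincident at $I$ (probability $2v$); (iii) exactly two of $a,b,c$ coincident at $I$ (probability $3z$); (iv) $a,b,c$ three distinct copies at $I$ (probability $1-y-2v-3z$). In (i) the two merges happen inside the edge between $J$ and $I$ and give $((a,b),c)$ with probability $\frac{1}{3}$, after which $d$ attaches last regardless of $i_d$: contribution $(1-w)y\tfrac{1}{3}$. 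In (ii) the subtree on $\{a,b,c\}$ is $((a,c),b)$ or $((b,c),a)$, so there is no contribution. In (iii) only the sub-case $i_a=i_b\neq i_c$, in which $\{a,b\}$ forms a cherry inside the edge between $J$ and $I$, can yield the target, and then $d$ must avoid the copies of $\{a,b\}$ and of $c$ while the stem resolves with $\{a,b\}$ and $c$ as the cherry: contribution $(1-w)z\bigl(1-\tfrac{2}{N_I}\bigr)\tfrac{1}{3}\,\mathbf{1}(N_I\geq 3)$. In (iv) the copy of $d$ must be a fourth distinct copy and the four lineages must resolve in the stem as the caterpillar $(((a,b),c),d)$: contribution $(1-w)(1-y-2v-3z)\bigl(1-\tfrac{3}{N_I}\bigr)\tfrac{1}{18}\,\mathbf{1}(N_I\geq 4)$. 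Adding the five contributions gives the claimed identity.

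Two ingredients need to be stated carefully. The ``$d$-avoidance'' factors $1-\tfrac{2}{N_I}$ and $1-\tfrac{3}{N_I}$ follow from exchangeability: given $N_I$, the configuration of $a,b,c$ is measurable with respect to the part of the GDL process descending from $I$ on the side of $J$, hence independent of $i_d$, which depends only on the subpopulations descending from $I$ along the pendant edge to $D$; since those subpopulations are i.i.d.\ given $N_I$ and the measure $\mathbf{P}'$ conditions on $N_D\geq 1$ so the normalization never vanishes, $i_d$ is uniformly distributed over the $N_I$ copies at $I$, and therefore misses any prescribed set of $m$ of them with probability $(N_I-m)/N_I$. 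The uniform within-population resolution facts --- $\frac{1}{3}$ for three tracked lineages, and $\frac{1}{9}$, $\frac{1}{18}$ for four --- are the same ones recalled for the MSC in Section 4, and they apply verbatim on the edge between $J$ and $I$ and on the stem above $I$, because the GDL tree restricted to a fixed number of tracked lineages inside a single population segment induces the exchangeable coalescent topology.

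I expect the main obstacle to be the case bookkeeping rather than any single hard estimate: one must check, configuration by configuration, that the resulting gene tree is exactly $(((a,b),c),d)$ and not a competing topology --- in particular that the $v$-configurations are entirely lost, that among the $3z$ worth of ``two-coincide'' configurations only $i_a=i_b\neq i_c$ survives, and that attaching $d$ at a non-fresh copy of $I$ (creating a clade $\{a,d\}$, $\{b,d\}$, or $\{c,d\}$) always destroys the target topology --- and that the support conditions ($N_I\geq 3$ in the second and fourth terms, $N_I\geq 4$ in the last, none in the first and third) match the number of distinct copies at $I$ that each configuration requires. The exchangeability argument for $i_d$ is routine but worth writing out explicitly, since it silently uses the conditioning built into $\mathbf{P}'$.
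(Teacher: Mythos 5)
Your proposal is correct and takes essentially the same route as the paper: the paper's proof of this proposition is exactly the law-of-total-probability decomposition over the ancestral configurations of $a,b,c$ at $J$ and $I$ (with the uniform $1/3$ and $1/18$ within-population resolution probabilities and the $(N_I-m)/N_I$ avoidance factors for $i_d$), stated there in a single sentence. Your write-up just makes explicit the case bookkeeping, the loss of the $2v$ configurations, and the exchangeability/independence argument for $i_d$ under $\mathbf{P}'$, all of which the paper leaves implicit.
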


\begin{proof}
This follows from the law of total probability. \rev{Recall} that when $i_a,i_b,i_c,i_d$ are distinct, the probability of the correct caterpillar topology is $1/18$.
\end{proof}

Next, let $k_2$ be the probability of the balanced topology $((a,b),(c,d))$ and $k_3$ be the probability of the alternate balanced topology $((a,c),(b,d))$.  Exchangeability of $a$ and $b$ means that $((a,d),(b,c))$ also has probability $k_3$.

\begin{prop} \label{prop:12}
The balanced topology $((a,b),(c,d))$ has probability \begin{align*}
k_2 &= w(1-x) \left[ \frac{1}{N_I} \mathbf{1}(N_I \geq 2) + \left(1 - \frac{2}{N_I}\right) \frac{1}{3}\mathbf{1}(N_I \geq 3)\right] \\
&+ (1-w)z \left[ \frac{1}{N_I} \mathbf{1}(N_I \geq 2) + \left(1 - \frac{2}{N_I}\right) \frac{1}{3}\mathbf{1}(N_I \geq 3)\right] \\
&+ (1-w)(1- y \rev{-2v} - 3z)\left[ \frac{1}{3}\frac{1}{N_I} \mathbf{1}(N_I \geq 3) + \left(1 - \frac{3}{N_I}\right) \frac{1}{9}\mathbf{1}(N_I \geq 4)\right].
\end{align*} The balanced topology $((a,c),(b,d))$ has  probability \begin{align*}
k_3 &= (1-w)z \left[ \frac{1}{N_I} \mathbf{1}(N_I \geq 2) + \left(1 - \frac{2}{N_I}\right) \frac{1}{3}\mathbf{1}(N_I \geq 3)\right] \\
&+(1-w)(1- y \rev{-2v}- 3z)\left[ \frac{1}{3}\frac{1}{N_I} \mathbf{1}(N_I \geq 3) + \left(1 - \frac{3}{N_I}\right) \frac{1}{9}\mathbf{1}(N_I \geq 4)\right].
\end{align*}
\end{prop}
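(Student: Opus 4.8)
The plan is to derive both formulas from the law of total probability, conditioning first on $N_I$ and then, as in the proof of Proposition \ref{prop:11} and following \cite{legriedGDL2021,hill2022species}, on the pattern by which the sampled lineages coalesce. All probabilities below are under $\mathbf{P}'_{N_I}$. Two facts do the heavy lifting. First: conditioned on $N_I$, the ancestor $i_d$ of the sampled copy from $D$ is uniform on the $N_I$ individuals at $I$ and is independent of $(j_a,j_b,j_c,i_a,i_b,i_c)$. This is because at the speciation vertex $I$ each surviving copy independently spawns a $D$-side GDL subtree and a $J$-side GDL subtree, so the numbers of sampled-$D$ descendants of the $N_I$ individuals are exchangeable and independent of everything on the $\{A,B,C\}$ side; the conditioning defining $\mathbf{P}'$ does not break this, since $\{N_D>0\}$ and $\{N_A,N_B,N_C>0\}$ are events on disjoint sets of individuals. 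Second, inherited from the balanced analysis: when $m$ distinct sampled lineages remain unmerged at $I$, they merge on the way to the root according to the uniform labelled-history law --- forced for $m=2$; each rooted triple with probability $1/3$ for $m=3$; and each balanced quartet with probability $1/9$, each caterpillar quartet with probability $1/18$, for $m=4$.

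With these in hand I would split on $j_a=j_b$ (probability $w$) versus $j_a\neq j_b$ (probability $1-w$). In the first branch $i_a=i_b$ automatically and $\{a,b\}$ is forced to be a cherry in the resulting quartet, so nothing here contributes to $k_3$; refining by whether $i_c$ also equals this ancestor (probability $x$, which yields the species tree and hence nothing to $k_2$) or not (probability $1-x$), and then by the value of $i_d$, one sees that $((a,b),(c,d))$ occurs exactly when $i_a=i_b\neq i_c$ and either $i_d=i_c$ (probability $1/N_I$, requiring $N_I\ge2$) or $i_d\notin\{i_a,i_c\}$ followed by the three-lineage merge joining $\{c,d\}$ first (probability $(1-2/N_I)\tfrac13$, requiring $N_I\ge3$); this is the $w(1-x)[\,\cdots]$ term of $k_2$. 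In the second branch I would partition $\{a,b,c\}$ by its coalescent state at $I$: all three merged at $I$ (probability $y+2v$, subdivided into the $y$-case and the two $v$-cases, in all of which $d$ is an outgroup so only caterpillars arise), exactly one pair merged at $I$ (probability $3z$, one case per pair), or all three distinct at $I$ (probability $1-y-2v-3z$). Tracking $i_d$ and the merge law through the $z$-cases produces the $(1-w)z[\,\cdots]$ terms, with the pair $\{a,b\}$ feeding $k_2$ and the pair $\{a,c\}$ feeding $k_3$, and through the all-distinct case produces the $(1-w)(1-y-2v-3z)[\tfrac13\tfrac1{N_I}\mathbf{1}(N_I\ge3)+(1-\tfrac3{N_I})\tfrac19\mathbf{1}(N_I\ge4)]$ terms, according to whether $i_d$ lands on the one of $i_a,i_b,i_c$ that creates a three-lineage merge or on none of them (a four-lineage merge). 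Summing the surviving contributions gives the two displayed expressions; the identity $\mathbf{P}'(((a,d),(b,c)))=k_3$ then follows from the exchangeability of $A$ and $B$.

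The arithmetic of collecting terms is routine; the step that needs care is the case enumeration, and this is where I would concentrate. One must check that the named events --- $\{j_a=j_b\}$, and, given $j_a\neq j_b$, the $y$-event, the two $v$-events, the three $z$-events, and the residual all-distinct event --- genuinely partition the sample space, in particular that a $v$-event (which also demands $i_a=i_b=i_c$) is not conflated with the superficially similar $z$-events (where $j_a=j_c$ may hold while $i_b$ stays separate), and that at each leaf of the case tree the induced quartet topology is read off correctly, including the branches where a balanced quartet rather than a caterpillar or the species tree is produced. This is precisely the bookkeeping already carried out for Proposition \ref{prop:11}, so most of it can be reused verbatim; the only genuinely new verifications are the handful of branches in which $((a,b),(c,d))$ or $((a,c),(b,d))$ appears.
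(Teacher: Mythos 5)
Your proposal is correct and follows exactly the route the paper intends: the paper leaves Proposition~\ref{prop:12} unproved, treating it as the same law-of-total-probability-and-counting argument sketched for Proposition~\ref{prop:11} (and Proposition~\ref{prop:7}), and your case analysis --- splitting on $j_a=j_b$ versus $j_a\ne j_b$, partitioning the latter into the $y$-, $v$-, $z$-, and all-distinct events, using the uniformity and independence of $i_d$ to get the $1/N_I$ and $1-2/N_I$, $1-3/N_I$ factors, and applying the uniform merge law ($1/3$, $1/9$, $1/18$) above $I$ --- reproduces both displayed formulas, including the absence of any $w$-term and any $(y+2v)$-term in $k_3$. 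No gaps; your write-up simply makes explicit the bookkeeping the paper keeps implicit.
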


Now, we consider caterpillar topologies where the unrooted topology is the same as the species tree.  Let $k_4$ be the probability of $(((a,b),d),c)$ and $k_5$ be the probability of $(((c,d),a),b)$.  The topology $(((c,d),b),a)$ has the same probability $k_5$.

\begin{prop} \label{prop:13}
The caterpillar topology $(((a,b),d),c)$ has probability \begin{align*}
k_4 &= w(1-x) \left[ \frac{1}{N_I} \mathbf{1}(N_I \geq 2) + \left(1 - \frac{2}{N_I}\right)\frac{1}{3}\mathbf{1}(N_I \geq 3)\right] \\
&+ (1-w)z \left[ \frac{1}{N_I}\mathbf{1}(N_I \geq 2) + \left(1 - \frac{2}{N_I}\right) \frac{1}{3}\mathbf{1}(N_I \geq 3)\right] \\
&+ (1-w)(1-y - 3z) \left(1 - \frac{3}{N_I}\right) \frac{1}{18}\mathbf{1}(N_I \geq 4).
\end{align*} The caterpillar topology $(((c,d),a),b)$ has probability \begin{align*}
k_5 &= (1-w)(1-y \rev{-2v}-3z) \left[ \frac{1}{3} \frac{1}{N_I}\mathbf{1}(N_I \geq 3) + \left(1 - \frac{3}{N_I}\right) \frac{1}{18} \mathbf{1}(N_I \geq 4)\right].
\end{align*}
\end{prop}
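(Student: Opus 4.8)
The plan is to compute $k_4$ and $k_5$ the same way Propositions \ref{prop:11} and \ref{prop:12} are obtained: conditioning on $N_I$, apply the law of total probability by stratifying first on the coincidence pattern of the $J$-ancestors $\{j_a,j_b\}$ and then on that of the $I$-ancestors $\{i_a,i_b,i_c\}$, so the conditioning events are exactly those whose conditional probabilities are named $w,x,y,v,z$. Two structural facts do the work. First, the lineage of the sampled copy $d$ enters only at $I$, and conditionally on $N_I$ and on everything happening on the $J$-side of the tree (which is all that determines the $j$'s and $i_a,i_b,i_c$) the ancestor $i_d$ is uniform on the $N_I$ copies at $I$; this is because the $D$-branch subtrees of those copies are i.i.d.\ and independent of their $J$-branch subtrees, the exchangeability already used in Lemmas \ref{lem:1} and \ref{lem:2}. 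Second, once the configuration at $I$ is fixed, the remaining coalescences occur in the stem edge above $I$, where still-distinct lineages merge in uniformly random order, so $m$ distinct lineages at $I$ resolve into a prescribed rooted shape with probability $\tfrac13$ when $m=3$, and, when $m=4$, $\tfrac19$ per balanced shape and $\tfrac1{18}$ per caterpillar shape, as recalled in Section 5. Pruning does not alter topology because $\PP'$ conditions on survival in all four species.

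For $k_4$, whose tree $(((a,b),d),c)$ has $\{a,b\}$ as first cherry, I would split on whether $j_a=j_b$. The branch $j_a=j_b$ (probability $w$) forces $i_a=i_b$; on the complement of $i_a=i_b=i_c$ (probability $1-x$) the picture at $I$ is the merged pair at $i_a$, the singleton $c$ at $i_c\ne i_a$, and $d$ at $i_d$ — and using the uniformity of $i_d$, the target topology arises with probability $\tfrac1{N_I}$ (when $i_d=i_a$) plus $(1-\tfrac2{N_I})\tfrac13$ (when $i_d$ is fresh, three stem lineages), while $i_d=i_c$ yields the balanced $k_2$; if instead $i_a=i_b=i_c$, then $c$ is absorbed before $d$ enters and the contribution is to $k_1$. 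The branch $j_a\ne j_b$ (probability $1-w$) admits $\{a,b\}$ as first cherry only if $i_a=i_b$; the pattern $i_a=i_b=i_c$ again feeds $k_1$, so the only relevant two-way coincidence is $i_a=i_b\ne i_c$ — \emph{one} of the three configurations pooled in $z$ — with the same downstream analysis giving a $(1-w)z$ copy of the same bracket; and when $i_a,i_b,i_c$ are all distinct (the complementary probability $1-y-2v-3z$ given $j_a\ne j_b$) the tree is $(((a,b),d),c)$ only if $i_d$ is fresh and then with stem-probability $\tfrac1{18}$. Collecting the pieces, and tracking how many distinct copies each one requires (which produces the $\mathbf1(N_I\ge2),\mathbf1(N_I\ge3),\mathbf1(N_I\ge4)$ factors), gives the stated $k_4$.

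For $k_5$, whose tree $(((c,d),a),b)$ has $\{c,d\}$ as first cherry, the key point is that $c$ appears only at $J$ and $d$ only at $I$, so $c$ and $d$ can coalesce only in the stem; but $j_a=j_b$ would have $a,b$ coalesce at or below $J$, strictly earlier, so only $j_a\ne j_b$ (probability $1-w$) contributes. Given $j_a\ne j_b$, every coincidence among $i_a,i_b,i_c$ is incompatible with $\{c,d\}$ being the first cherry ($i_a=i_b=i_c$ buries $c$; $i_a=i_c\ne i_b$ or $i_b=i_c\ne i_a$ bury $c$ with $a$ or $b$; $i_a=i_b\ne i_c$ makes $\{a,b\}$ the first cherry), so I would restrict to $i_a,i_b,i_c$ all distinct, probability $1-y-2v-3z$. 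On that event $\{c,d\}$ is the first cherry exactly when $i_d=i_c$ (probability $\tfrac1{N_I}$, then $(((c,d),a),b)$ with stem-probability $\tfrac13$) or $i_d$ is fresh (probability $1-\tfrac3{N_I}$, then stem-probability $\tfrac1{18}$), while $i_d\in\{i_a,i_b\}$ gives the cherries $\{a,d\},\{b,d\}$ and is not counted; this reproduces the displayed $k_5$.

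The individual steps are elementary, so the main difficulty is the casework: for each target topology one must identify precisely which joint patterns of $(j_a,j_b)$, $(i_a,i_b,i_c)$ and $i_d$ are compatible with it, avoiding double counting and respecting the $a\leftrightarrow b$ exchangeability — this is why $i_a=i_b\ne i_c$ enters $k_4$ with multiplicity one while the fully-resolved case carries the whole probability $1-y-2v-3z$ weighted by the stem factor $\tfrac1{18}$. The one genuinely model-dependent ingredient, the uniformity of $i_d$ over the $N_I$ copies independently of the $J$-side configuration, is the thing to establish carefully, by the argument of Lemma \ref{lem:1}: the $D$-side and $J$-side evolutions of each copy at $I$ are independent, and the $\PP'$-conditioning restricted to $A,B,C$ concerns only the $J$-side, so the $D$-side stays exchangeable across the $N_I$ copies.
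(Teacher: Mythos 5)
Your proposal is correct and takes essentially the paper's own route: Proposition \ref{prop:13} is obtained, exactly as Propositions \ref{prop:11} and \ref{prop:12}, by the law of total probability over the coincidence patterns of $j_a,j_b$ and of $i_a,i_b,i_c$, with $i_d$ uniform over the $N_I$ copies (by exchangeability and independence of the $D$-side, as in Lemma \ref{lem:1}) and the residual stem resolutions $1/3$, $1/9$, $1/18$. One caveat: your casework (correctly) produces the all-distinct weight $1-y-2v-3z$ in the final term of $k_4$, whereas the Proposition as printed has $1-y-3z$ there; since every analogous term in $k_1,k_2,k_3,k_5,\dots,k_8$ carries the $-2v$, this is evidently a typographical omission in the statement rather than a flaw in your derivation, so do not claim literal agreement with the printed formula. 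A minor imprecision, harmless to the count, is your remark that on $j_a\ne j_b$ the event $i_a=i_b=i_c$ ``feeds $k_1$''; it can also feed $k_6$ and its mirror, but in any case it contributes nothing to $k_4$ or $k_5$, which is all you need.
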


Lastly, we consider four caterpillar topologies where the unrooted topology is $((a,c),(b,d))$.  Let $k_6$ be the probability of $(((a,c),b),d)$, $k_7$ be the probability of $(((a,c),d),b)$, $k_8$ be the probability of $(((b,d),a),c)$, and $k_9$ be the probability of $(((b,d),c),a)$.  The other caterpillar topologies with unrooted topology $((a,d),(b,c))$ are represented in this calculation.

\begin{prop}\label{prop:14}
The caterpillar topology $(((a,c),b),d)$ has probability \begin{align*}
k_6 &= \rev{(1-w)v} + (1-w)y \frac{1}{3} + (1-w)z \left(1 - \frac{2}{N_I}\right) \frac{1}{3}\mathbf{1}(N_I \geq 3) \\
&+(1-w)(1-y\rev{-2v}-3z) \left(1 - \frac{3}{N_I}\right)\frac{1}{18}\mathbf{1}(N_I \geq 4).
\end{align*} The caterpillar topology $(((a,c),d),b)$ has probability \begin{align*}
k_7 &= (1-w)z \left[ \frac{1}{N_I} \mathbf{1}(N_I \geq 2) + \left(1 - \frac{2}{N_I}\right)\frac{1}{3}\mathbf{1}(N_I \geq 3)\right] \\
&+ (1-w)(1-y\rev{-2v}-3z) \left[\frac{1}{3}\frac{1}{N_I}\mathbf{1}(N_I \geq 3) + \left(1 - \frac{3}{N_I}\right) \frac{1}{18}\mathbf{1}(N_I \geq 4)\right].
\end{align*} The caterpillar topologies $(((b,d),a),c)$ and $(((b,d),c),a)$ have probability \begin{align*}
k_8 &= (1-w)(1-y\rev{-2v}-3z) \left[\frac{1}{3}\frac{1}{N_I}\mathbf{1}(N_I \geq 3) + \left(1 - \frac{3}{N_I}\right) \frac{1}{18}\mathbf{1}(N_I \geq 4)\right].
\end{align*}
\end{prop}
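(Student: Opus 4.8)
The plan is to follow the template already used for Propositions~\ref{prop:11}, \ref{prop:12}, and \ref{prop:13}: apply the law of total probability under three nested layers of conditioning and then read off the coefficient of each of the three target topologies. First I would condition on $N_I$, and then on whether $j_a = j_b$. I claim that the event $\{j_a = j_b\}$, which carries probability $w$, contributes nothing to $k_6$, $k_7$, or $k_8$: once the sampled copies from $A$ and $B$ coalesce at or below $J$, their most recent common ancestor is strictly more recent than that of $a$ (or $b$) with $c$, whose lineage reaches back only to $J$, and strictly more recent than that of $b$ with $d$, whose lineage reaches back only to $I$; hence none of the clades $\{a,c\}$, $\{a,c,d\}$, or $\{b,d\}$ demanded by these three labelled topologies can be realised. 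This is why every term in the three formulas carries a factor $1-w$.

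On the event $\{j_a \ne j_b\}$ I would partition according to the coalescent pattern of $i_a$, $i_b$, $i_c$ at node $I$: (i) $i_a = i_b = i_c$ with $j_a$, $j_b$, $j_c$ pairwise distinct, probability $y$; (ii) $i_a = i_b = i_c$ with exactly one of $j_a = j_c$ or $j_b = j_c$ holding, probability $v$ for each variant; (iii) exactly two of $i_a$, $i_b$, $i_c$ agree, probability $z$ for each of the three patterns; (iv) $i_a$, $i_b$, $i_c$ pairwise distinct, probability $1 - y - 2v - 3z$. In each case I would condition further on which slot at $I$ contains $i_d$: by exchangeability of the $N_I$ individuals, $i_d$ is uniform, so it hits a prescribed occupied slot with probability $1/N_I$ and avoids all $m$ occupied slots with probability $1 - m/N_I$ (meaningful only when $N_I > m$, which the indicator functions record). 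Whenever two or more sampled lineages are still unmerged on entering the stem above $I$, their coalescent topology there is uniform over labelled histories --- the fact behind the constant $1/18$ in Proposition~\ref{prop:11} --- so three surviving lineages give each rooted shape probability $1/3$, and four surviving lineages give each caterpillar probability $1/18$ and each balanced tree probability $1/9$. The last structural input, already used in the $\{j_a = j_b\}$ step, is that $a$ and $c$ cannot coalesce earlier than node $J$, so the clade $\{a,c\}$ forces $j_a = j_c$ or $i_a = i_c$, whereas $d$ cannot coalesce with anyone before node $I$.

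Running through the resulting tree of cases, I would record in each leaf which of the fifteen labelled quartet topologies appears, sum the contributions, and invoke the species-tree symmetry $a \leftrightarrow b$ to merge $(((b,d),a),c)$ with $(((a,d),b),c)$ and $(((b,d),c),a)$ with $(((a,d),c),b)$ into the single value $k_8$; the three stated formulas then result. As sample checks: in branch (ii) with $j_a = j_c$, the copies of $A$ and $C$ coalesce at $J$, $b$ joins them strictly above $J$, and $d$ is then forced to be the outgroup regardless of $i_d$, so this branch contributes exactly $(1-w)v$ to $k_6$; in branch (iii) with $i_a = i_c \ne i_b$, the pair $\{a,c\}$ is a cherry and the placement of $i_d$ decides between $k_6$, $k_7$, and a balanced topology, which yields the $z$-terms of $k_6$ and $k_7$. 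The one genuine difficulty is the bookkeeping: there are on the order of a dozen leaves in the conditioning tree, and in each one must keep straight exactly when $\{a,c\}$ can form a clade, that $d$ attaches no earlier than $I$, and which labelled topology --- as opposed to a merely $a \leftrightarrow b$-equivalent one --- is actually produced. No optimization enters; the accounting is cleanest if organised as a table with one row per leaf of the conditioning tree.
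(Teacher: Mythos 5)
Your plan is the same argument the paper itself gestures at for Propositions \ref{prop:11}--\ref{prop:13} (law of total probability plus exchangeability counting), and its structural ingredients are sound: the $\{j_a=j_b\}$ branch indeed contributes nothing to $k_6,k_7,k_8$; conditioned on $N_I$, the slot $i_d$ is uniform over the $N_I$ copies and independent of the $\{a,b,c\}$-side events; and the constants $1/3$, $1/9$, $1/18$ are the right exchangeability weights. Your two sample checks are also correct. The difficulty is precisely the step you defer as ``bookkeeping'': carried out correctly, it does \emph{not} reproduce the printed formula for $k_7$. In the branch where $i_a,i_b,i_c$ are pairwise distinct, the topology $(((a,c),d),b)$ needs $\{a,c\}$ as a cherry, and since $i_a\ne i_c$ their divergence lies strictly above $I$; but if $i_d$ hits any occupied slot, then $d$ forms a cherry exactly at node $I$ (with $a$, $c$, or $b$), which rules out $(((a,c),d),b)$ in every sub-case. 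So that branch contributes only $(1-w)(1-y-2v-3z)\left(1-\frac{3}{N_I}\right)\frac{1}{18}\mathbf{1}(N_I\ge 4)$ to $k_7$, not the bracket $\left[\frac{1}{3}\frac{1}{N_I}\mathbf{1}(N_I\ge 3)+\left(1-\frac{3}{N_I}\right)\frac{1}{18}\mathbf{1}(N_I\ge 4)\right]$ appearing in the statement (the $z$-part of $k_7$, and all of $k_6$ and $k_8$, do come out as stated). Your proposal asserts that ``the three stated formulas then result,'' which your method, executed faithfully, will not deliver for $k_7$; you should either report the corrected coefficient or explicitly flag the discrepancy.

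A normalization check, which your write-up should include, makes this concrete: conditioned on $N_I$, on $j_a\ne j_b$, and on $i_a,i_b,i_c$ pairwise distinct, summing the stated coefficients over all fifteen labelled topologies (Propositions \ref{prop:11}--\ref{prop:14} together with their $a\leftrightarrow b$ images) gives $\frac{11}{3N_I}+\left(1-\frac{3}{N_I}\right)=1+\frac{2}{3N_I}>1$, whereas the corrected $k_7$ restores the total to $1$; every other formula matches the direct case analysis. Two smaller points. First, the two $k_8$ topologies $(((b,d),a),c)$ and $(((b,d),c),a)$ are not exchanged by the $a\leftrightarrow b$ symmetry (that symmetry maps them to the $(((a,d),\cdot),\cdot)$ pair), so their equality must come from the computation itself: each arises only via $i_d=i_b$ (then weight $1/3$ in the stem) or via four distinct slots (then $1/18$). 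Second, the factor $1/3$ multiplying $y$ comes from exchangeability of the join order of three distinct copies \emph{inside} the $I$--$J$ branch, not from lineages entering the stem, so you need that variant of the exchangeability fact in addition to the one you state.
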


We now rank these probabilities where possible and determine when the average $k_1$ is not the maximal probability.  \rev{These results follow from the two Lemmas and a similar comparison of like terms as in the previous Section.}

\begin{prop} \label{prop:15} We have $k_1 \geq k_4, k_5, k_6, k_7, k_8$.
\end{prop}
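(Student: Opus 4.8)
The plan is to substitute the explicit expressions from Propositions~\ref{prop:11}, \ref{prop:13}, and \ref{prop:14} into each difference $k_1 - k_{w'}$, $w'\in\{4,5,6,7,8\}$, cancel the contributions common to $k_1$ and $k_{w'}$, and show that what remains is nonnegative for every $N_I$ and every admissible $(w,x,y,v,z)$. Since $k_5$ and $k_8$ are given by the same formula, this reduces to the four inequalities $k_1\ge k_4$, $k_1\ge k_5$ ($=k_8$), $k_1\ge k_6$, and $k_1\ge k_7$. Besides the formulas, the only inputs I expect to use are Lemmas~\ref{lem:1} and \ref{lem:2}, the elementary facts $1-\tfrac{2}{N_I}\ge\tfrac1{N_I}$ and $1-\tfrac2{N_I}\ge\tfrac13$ for $N_I\ge3$, and a direct check of the cases $N_I=1,2$: there almost every term carries an indicator $\mathbf{1}(N_I\ge3)$ or $\mathbf{1}(N_I\ge4)$ and vanishes, $N_I=1$ forces $w=x=1$ with $k_1=1$, and $N_I=2$ forces $w,x\ge\tfrac12$ so that $wx\ge\tfrac14$ already dominates the few surviving terms of the competitors.

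Three of the four comparisons should be routine. For $k_1\ge k_6$, cancelling the terms common to both leaves $k_1-k_6 = wx-(1-w)v+w(1-x)\tfrac{N_I-2}{3N_I}\mathbf{1}(N_I\ge3)$, which is nonnegative by Lemma~\ref{lem:2}. For $k_1\ge k_5$ and $k_1\ge k_7$, cancelling the bulk terms (those proportional to $(1-w)(1-y-2v-3z)$ with weight $(1-\tfrac3{N_I})\tfrac1{18}\mathbf{1}(N_I\ge4)$, together with $(1-w)z(1-\tfrac2{N_I})\tfrac13\mathbf{1}(N_I\ge3)$) leaves, for $N_I\ge3$, a residual bounded below by $w\big[x+(1-x)\tfrac{N_I-2}{3N_I}\big]-(1-w)\tfrac1{3N_I}$ after discarding nonnegative leftover terms and using $1-y-2v-3z\le1$; applying $w\ge\tfrac1{N_I}$ to the positive bracket and $1-w\le1-\tfrac1{N_I}$ to the negative one, and minimizing the (increasing) bracket over $x\ge\tfrac1{N_I}$ at $x=\tfrac1{N_I}$, this is at least $\tfrac{N_I^2+2}{3N_I^3}-\tfrac{N_I-1}{3N_I^2}=\tfrac{N_I+2}{3N_I^3}>0$.

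The delicate case, and the one I expect to be the main obstacle, is $k_1\ge k_4$. After cancellation one finds, for $N_I\ge2$, $k_1-k_4 = wx+(1-w)\tfrac y3-w(1-x)\tfrac1{N_I}-(1-w)\tfrac z{N_I}$; writing $w(1-x)\tfrac1{N_I}=w\tfrac1{N_I}-wx\tfrac1{N_I}$ and using $x\ge\tfrac1{N_I}$ gives $k_1-k_4 = w\big(x-\tfrac1{N_I}\big)+\tfrac{wx}{N_I}+(1-w)\big(\tfrac y3-\tfrac z{N_I}\big)$, and then Lemma~\ref{lem:2} (in the form $\tfrac{wx}{N_I}\ge\tfrac{(1-w)v}{N_I}$) leaves the single requirement $v+\tfrac{N_I y}{3}\ge z$ --- a comparison between the ``triple-coalescence'' probability $y$ and the ``pairwise-coalescence'' probability $z$, scaled by $N_I$. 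This is \emph{not} a matching of like terms. My plan is to prove it by conditioning on the population structure between $J$ and $I$: writing $y+2v=\mathbf{P}'_{N_I}(i_a=i_b=i_c\mid j_a\ne j_b)$, $z$, and $y+2v+z=\mathbf{P}'_{N_I}(i_a=i_b\mid j_a\ne j_b)$ in terms of the progenies of the $N_I$ individuals of $I$ surviving to $J$, and then running a quadratic-mean/power-mean estimate of exactly the kind behind Lemmas~\ref{lem:1} and \ref{lem:2} --- with $p_1,\dots,p_{N_I}$ the associated selection probabilities one has $\sum_j p_j^3\ge(\sum_j p_j^2)^2\ge\tfrac1{N_I}\sum_j p_j^2$, so that, conditionally, $i_c$ joins $\{i_a,i_b\}$ with probability at least $\tfrac1{N_I}$. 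The complications I anticipate are that conditioning on $j_a\ne j_b$ reweights this structure, that $N_I=2,3$ (where the $v$-term and some indicators drop out) need a separate direct check, and that the coefficient $1-y-3z$ written for $k_4$ in Proposition~\ref{prop:13} should be reconciled with the $1-y-2v-3z$ appearing in the remaining $k_{w'}$; with that reading $k_1-k_4$ equals $(1-w)\big(\tfrac y3-\tfrac z{N_I}\big)$ plus the nonnegative $w\big(x-\tfrac1{N_I}\big)+\tfrac{wx}{N_I}$, and the whole proposition rests on the displayed inequality among $y$, $v$, and $z$.
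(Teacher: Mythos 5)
Your handling of everything except the $k_4$ comparison tracks the paper's proof: $k_1\ge k_6$ is exactly the paper's one-line appeal to Lemma~\ref{lem:2}, and your treatment of $k_5$, $k_7$, $k_8$ (cancel the $(1-w)(1-y-2v-3z)\frac{1}{18}$ bulk terms, bound the residual using $w,x\ge 1/N_I$, check $N_I\le 2$ separately) is the same strategy the paper uses, with your explicit estimates correct; your reading of Proposition~\ref{prop:13} (coefficient $1-y-2v-3z$, so that the bulk terms cancel) is also how the paper implicitly treats $k_1-k_4$.

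The gap is in $k_1\ge k_4$, precisely the case you flag as delicate. The paper never introduces $v$ here: it keeps the reduced expression $wx+(1-w)\frac{y}{3}-\bigl[w(1-x)+(1-w)z\bigr]\frac{1}{N_I}$ and argues directly from Lemma~\ref{lem:1}, pairing $wx$ against both $\frac{w(1-x)}{N_I}$ and $\frac{(1-w)z}{N_I}$ via $x\ge 1/N_I$ and $w\ge 1/N_I$. Your route is lossier: after discarding $w\bigl(x-\frac{1}{N_I}\bigr)$ and replacing $\frac{wx}{N_I}$ by $\frac{(1-w)v}{N_I}$ through Lemma~\ref{lem:2}, you are left needing $v+\frac{N_I y}{3}\ge z$, and this inequality is not deliverable by the argument you sketch, nor is it true in general. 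The quadratic-mean/power-mean estimate behind Lemmas~\ref{lem:1} and \ref{lem:2} gives, at best, $\mathbf{P}'_{N_I}(i_a=i_b=i_c\mid j_a\ne j_b)\ge\frac{1}{N_I}\mathbf{P}'_{N_I}(i_a=i_b\mid j_a\ne j_b)$, i.e. $z\le (N_I-1)(y+2v)$, whose coefficient on $v$ is $2(N_I-1)$ rather than $1$; it does not imply your inequality. Moreover the inequality itself fails in the natural regime where each copy at $I$ leaves many roughly equal-sized families at $J$: there $z\approx\frac{N_I-1}{N_I^2}$ and $y\approx\frac{1}{N_I^2}$ while $v=O(1/N_J)$ (it requires the exact event $j_c=j_a$), so $v+\frac{N_I y}{3}\approx\frac{1}{3N_I}<z$ for every $N_I\ge 2$. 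So the step ``the whole proposition rests on $v+\frac{N_I y}{3}\ge z$'' replaces the claim by a false sufficient condition; to close this case you must retain the $wx$ terms (and the slack $w(x-\frac{1}{N_I})$) in the comparison against $\frac{(1-w)z}{N_I}$, as the paper does, rather than trade them away through Lemma~\ref{lem:2}.
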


\begin{proof}
For $k_1 \geq k_4$, we eliminate the non-corresponding terms.  This leaves \begin{align*}
k_1 - k_4 &= wx + (1-w)y \frac{1}{3} - \left[w(1-x) + (1-w)z\right]\frac{1}{N_I}\mathbf{1}(N_I \geq 2).
\end{align*} Because $x \geq \frac{1}{N_I}$, we have \begin{align*}wx - w(1-x) \frac{1}{N_I} \mathbf{1}(N_I \geq 2) &= w \left[ x\left(1 + \frac{1}{N_I}\mathbf{1}(N_I \geq 2)\right) - \frac{1}{N_I} \mathbf{1}(N_I \geq 2) \right] \\
&\geq \frac{w}{N_I}\end{align*} Because $w \geq \frac{1}{N_I}$, we have \begin{align*}
\frac{w}{N_I} - (1-w)z \frac{1}{N_I}\mathbf{1}(N_I \geq 2) &= w \left(\frac{1}{N_I} + z\frac{1}{N_I}\mathbf{1}(N_I \geq 2)\right) - z \frac{1}{N_I}\mathbf{1}{N_I \geq 2} \\
&\geq \frac{1}{N_I^2}.
\end{align*} This covers all the negative terms, so $k_1 \geq k_4$, as needed.

We now show $k_1 \geq k_5$.  We first observe that $$wx + w(1-x)\left(1 - \frac{2}{N_I}\right) \frac{1}{3}\mathbf{1}(N_I \geq 3) = \frac{w}{3}\left(1 - \frac{2}{3N_I} + \frac{2x}{3}\right) \geq \frac{1}{3N_I}.$$ This is at least $$(1-w)(1-y\rev{-2v}-3z) \frac{1}{3N_I},$$ which is sufficient to conclude $k_1 \geq k_5$ by cancelling other terms of $k_5$.

\rev{That $k_1 \geq k_6$ follows directly from Lemma \ref{lem:2}.} The remaining claims follow by cancelling terms and using the inequality from the previous paragraph.
\end{proof}

We also have that $k_2 \geq k_3$, which is apparent.  This establishes the trichotomy stated in Theorem 3.

\begin{prop} \label{prop:16}
We have $k_2 \geq k_3$.
\end{prop}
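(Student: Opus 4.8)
The plan is to subtract the explicit formula for $k_3$ from that of $k_2$, both taken from Proposition~\ref{prop:12}, and read off the sign. Comparing the two displays term by term, the summand $(1-w)z\left[\frac{1}{N_I}\mathbf{1}(N_I \geq 2) + \left(1 - \frac{2}{N_I}\right)\frac{1}{3}\mathbf{1}(N_I \geq 3)\right]$ and the summand $(1-w)(1 - y - 2v - 3z)\left[\frac{1}{3}\frac{1}{N_I}\mathbf{1}(N_I \geq 3) + \left(1 - \frac{3}{N_I}\right)\frac{1}{9}\mathbf{1}(N_I \geq 4)\right]$ appear verbatim in both $k_2$ and $k_3$, so they cancel in the difference. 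What remains is
\begin{equation*}
k_2 - k_3 = w(1-x)\left[\frac{1}{N_I}\mathbf{1}(N_I \geq 2) + \left(1 - \frac{2}{N_I}\right)\frac{1}{3}\mathbf{1}(N_I \geq 3)\right].
\end{equation*}

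Next I would observe that every factor on the right-hand side is nonnegative: $w$ and $x$ are (conditional) probabilities, so $w \geq 0$ and $1 - x \geq 0$; inside the bracket, $\frac{1}{N_I}\mathbf{1}(N_I \geq 2) \geq 0$ trivially, and $\left(1 - \frac{2}{N_I}\right)\mathbf{1}(N_I \geq 3) \geq 0$ because the factor $1 - 2/N_I$ is only switched on when $N_I \geq 3$, where it is nonnegative. Hence $k_2 - k_3 \geq 0$, which is exactly the claim; this is why the statement is flagged as ``apparent.'' The only point requiring any care is the bookkeeping of which indicator-weighted terms coincide between the two formulas, but this is read directly from Proposition~\ref{prop:12}, so there is no genuine obstacle.

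If desired, I would also record the probabilistic interpretation as an alternative one-line argument: $k_2 - k_3$ is the probability of exactly those realizations in which the sampled copies of $A$ and $B$ share an ancestor at $J$ (probability $w$) but the three copies of $A,B,C$ do not all coalesce at $I$ (conditional probability $1-x$), a configuration that can yield the balanced topology $((a,b),(c,d))$ but can never yield $((a,c),(b,d))$ since $a$ and $b$ are already grouped. This makes $k_2 \geq k_3$ manifest without any algebra and also explains the precise form of the residual term above.
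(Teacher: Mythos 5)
Your proof is correct and follows essentially the same route the paper intends: the paper simply declares $k_2 \geq k_3$ ``apparent'' because, comparing the displays in Proposition~\ref{prop:12}, the $(1-w)z$ and $(1-w)(1-y-2v-3z)$ terms coincide and the difference reduces to the manifestly nonnegative term $w(1-x)\left[\frac{1}{N_I}\mathbf{1}(N_I \geq 2) + \left(1 - \frac{2}{N_I}\right)\frac{1}{3}\mathbf{1}(N_I \geq 3)\right]$, exactly as you write. Your closing probabilistic gloss (conditioned on $j_a = j_b$ the topology $((a,c),(b,d))$ is impossible while $((a,b),(c,d))$ retains positive probability) is a nice sanity check consistent with the paper's setup.
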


\begin{figure}
    \includegraphics[scale=0.5]{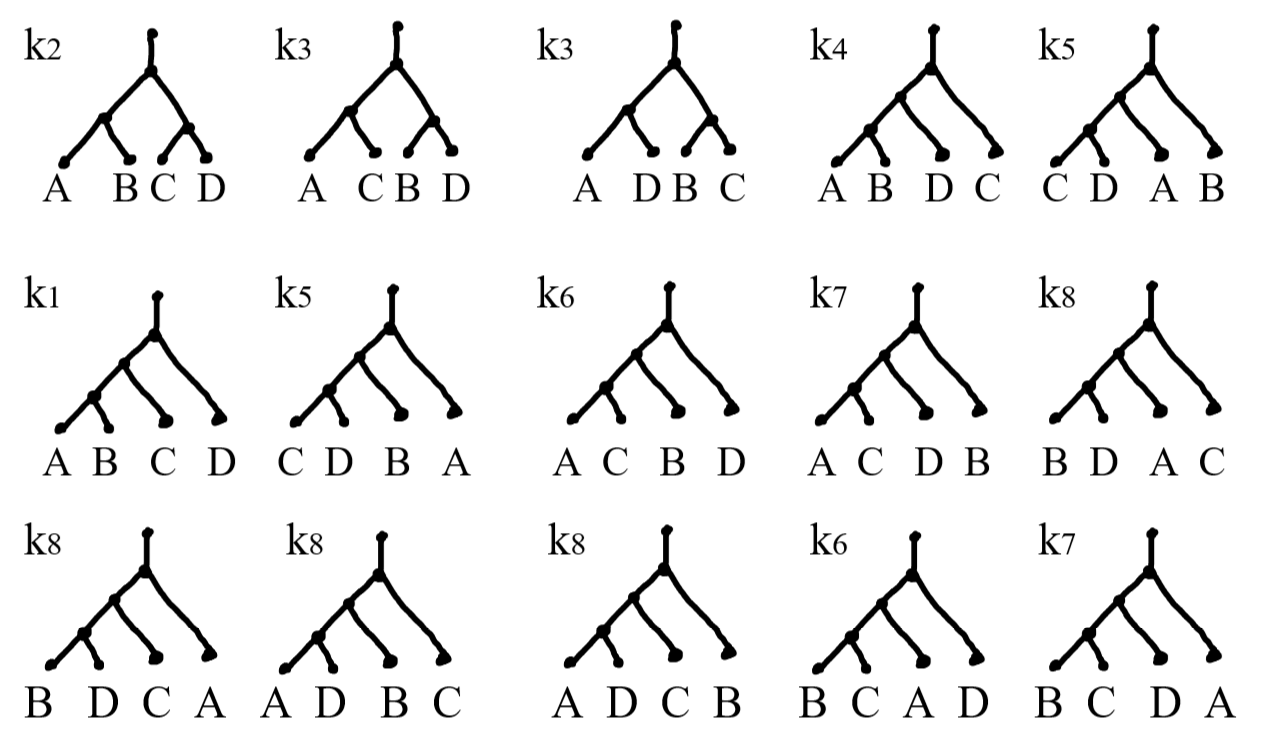}
    \caption{Labelling of the possible uniformly sampled gene trees by probability in the case of the caterpillar species tree.}
    \label{fig:GDLktable}
\end{figure}

\rev{\subsection{Existence of anomaly zones}

In this subsection, we consider some cases where anomaly zones arise.  Under MSC, this amounts to choosing a caterpillar species tree with small enough branch lengths.  In this paper, we consider branch lengths as well as the birth rate $\lambda$.  Theorem \ref{thm:caterpillar} alone says that there may be some branch length settings that put the species tree in the anomaly zone, but it does not actually prove that the anomaly zone exists.

In this Section, we introduce a limited setting where anomaly zones exist, namely when the interior branch lengths are vanishingly small.  One might expect that if the branch lengths are larger that there may be a particular setting of birth and death rates where anomaly zones exist.  In either case, the anomaly zone seems quite remote, as the birth rate needs to be fairly large.  Proving a negative result requires a different mode of analysis than used in the previous two Sections.  First, the probabilities $w,x,y,v,z$ must be computed exactly in the regime we consider.  Second, because they are computed exactly, their dependencies on $N_I$ must also be not so extreme to compute probabilities under the measure $\mathbf{P}'$ \textit{without} conditioning on $N_I$.  Because $N_I$ is a modified geometric random variable, the challenging infinite sums we consider are indeed summable.

Next, we introduce notation needed to prove the technical results of this Section.  Let $\eta_I$ be the length of the interior child edge of $I$ and $\eta_J$ be the length of the interior child edge of $J$.  Provided $\eta_I$ and $\eta_J$ are sufficiently small, there is an anomaly zone for some choices of $\lambda$ and $\mu$.  The explicit formulas are described in this section, and we provide figures to display the results for some chosen values of $\mu$ in Section 5.2.

\begin{theorem} \label{thm:caterpillar}
    Let $\sigma = (\mathcal{T},\mathbf{f})$ be a species tree on four species $A,B,C,D$ with the caterpillar topology $\mathcal{T}$ and let $\mu$ be fixed.  Let $\mathcal{T}'$ be any choice of balanced topology joining the uniformly sampled copies $a,b,c,d$.  Then as $\eta_I$ and $\eta_J$ converge to $0$, there exists a positive number $\Lambda$ such that $\mathbf{P}'(u(t) = \mathcal{T}') > \mathbf{P}'(u(t) = \mathcal{T})$ for all birth rate choices $\lambda \geq \Lambda$.
\end{theorem}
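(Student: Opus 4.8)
The plan is to pass to the limit $\eta_I,\eta_J\to 0$ inside the formulas of Propositions~\ref{prop:11} and~\ref{prop:12}, reduce the relevant gene-tree probabilities to explicit functions of $N_I$, and then average over the modified geometric law of $N_I$ under $\PP'$, using that $q(s)\to 1$ as $\lambda\to\infty$. In the double limit the gene-tree distribution will be the fully unresolved ``star quartet'' distribution on $A,B,C,D$, and the anomaly is visible there.

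\emph{Step 1: degeneration of the interior.} First I would show that as $\eta_I,\eta_J\to 0$ the vertices $I,J,K$ coalesce and $\mathcal{T}$ approaches the star tree on $A,B,C,D$; by ultrametricity the four pendant edges then share a common length $p$. On an interior edge of vanishing length a surviving copy has exactly one descendant with probability tending to $1$ (two or more costs a factor $O(\lambda\eta_I)$), so in the limit $j_a=j_b\iff i_a=i_b$ up to a null event; conditionally on $i_a\ne i_b$ the events defining $y,v,z$ become impossible, hence $y,v,z\to 0$ and $1-y-2v-3z\to 1$. Moreover, in the star tree the ancestors $i_a,i_b,i_c,i_d$ at $I$ are produced by progeny processes on \emph{independent} pendant edges and the $\PP'$-conditioning factors across species, so by exchangeability of the $N_I$ copies each copy receives mean mass $1/N_I$ from each species, and independence across species forces
\begin{equation*}
w=\PP'_{N_I}(i_a=i_b)=\frac{1}{N_I},\qquad x=\frac{\PP'_{N_I}(i_a=i_b=i_c)}{\PP'_{N_I}(i_a=i_b)}=\frac{1}{N_I}
\end{equation*}
\emph{exactly} in this limit, so Lemma~\ref{lem:1} is tight. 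Substituting $w=x=1/N_I$ and $y=v=z=0$ into Propositions~\ref{prop:11} and~\ref{prop:12} turns $k_1,k_2,k_3$ into explicit rational functions of $N_I$; for $N_I=n\ge 4$,
\begin{equation*}
k_3(n)-k_1(n)=\left(1-\tfrac{1}{n}\right)\frac{2}{3n^2}+\left(1-\tfrac{1}{n}\right)\left(1-\tfrac{3}{n}\right)\frac{1}{18}-\frac{1}{n^2},
\end{equation*}
with analogous, still bounded, expressions when $n\le 3$. In particular $n\mapsto k_3(n)-k_1(n)$ is bounded and tends to $\tfrac{1}{9}-\tfrac{1}{18}=\tfrac{1}{18}$ as $n\to\infty$; the same substitution gives $k_2(n)-k_1(n)\to\tfrac{1}{18}$, and $k_2\ge k_3$ by Proposition~\ref{prop:16}.

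\emph{Step 2: averaging over $N_I$.} By the law of total probability, for an alternate balanced quartet $\mathcal{T}'$ one has $\PP'(u(t)=\mathcal{T}')-\PP'(u(t)=\mathcal{T})=\EE'[k_3(N_I)-k_1(N_I)]$, while for $\mathcal{T}'=((a,b),(c,d))$ it equals $\EE'[k_2(N_I)-k_1(N_I)]\ge\EE'[k_3(N_I)-k_1(N_I)]$ by Proposition~\ref{prop:16}; so it suffices to make the last expectation positive. Under $\PP'$ the variable $N_I$ is modified geometric, and because $q(s)\to 1$ as $\lambda\to\infty$ its conditional law escapes to infinity, i.e.\ $\PP'(N_I\le M)\to 0$ for every fixed $M$. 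Since $n\mapsto k_3(n)-k_1(n)$ is bounded and converges to $\tfrac{1}{18}>0$, splitting the expectation at $N_I\le M$ and $N_I>M$ and sending first $M$ and then $\lambda$ to infinity gives $\EE'[k_3(N_I)-k_1(N_I)]\to\tfrac{1}{18}$; hence there is a positive $\Lambda$, depending only on $\mu$ and the fixed non-interior branch lengths, with $\PP'(u(t)=\mathcal{T}')>\PP'(u(t)=\mathcal{T})$ for every balanced $\mathcal{T}'$ whenever $\lambda\ge\Lambda$.

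\emph{Main obstacle.} The real work is in Step 1: identifying $w,x,y,v,z$ \emph{exactly} in the degenerate regime relies on the collapsed vertex $I=J=K$ carrying the caterpillar branching at its top while the four pendant subtrees remain independent — which is precisely what upgrades the bounds $w,x\ge 1/N_I$ to equalities — and then on controlling the dependence of these quantities (and of the vanishing remainders $y,v,z$, each $O(\lambda\eta_I)$) on $N_I$ well enough to interchange the $\eta_I,\eta_J\to 0$ limit with the infinite sum over $N_I$ and with the $\lambda\to\infty$ limit. That $N_I$ is modified geometric and the conditional probabilities are bounded by $1$ is what makes those sums summable and the bounded-convergence steps legitimate.
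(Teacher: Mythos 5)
Your proposal is correct and follows essentially the same route as the paper: your $\eta_I,\eta_J\to 0$ degeneration is Proposition~\ref{prop:17}, your limiting expressions for $k_1,k_2,k_3$ as functions of $N_I$ match the paper's, and your final step likewise exploits $q(s)\to 1$ as $\lambda\to\infty$ to push the $\PP'$-mass of $N_I$ to large values where the balanced topologies beat $k_1$, the only difference being that you conclude by a tightness/bounded-convergence argument (and reduce the $((a,b),(c,d))$ case to the alternate balanced case via Proposition~\ref{prop:16}) where the paper sums explicit geometric series using the tail bounds $-11/192$ and $-7/648$. One small imprecision to fix: under $\PP'$ the law of $N_I$ is the modified geometric reweighted by the survival factor $(1-\alpha^n)^4$, not the modified geometric itself, but since that factor is increasing in $n$ and tends to $1$ as $\lambda\to\infty$, the escape to infinity you invoke still holds.
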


Before establishing the Theorem, we first prove the convergence of $w,x,y,z$ in these limits.  Proposition \ref{prop:17} uses a technical probability result (the Portmanteau Theorem) to show that $w$ and $x$ essentially converge to a number indicating the underlying ancestral copies are ``uncorrelated'' in the uniform sample.  Similarly, $v,y,z$ converge to $0$ as the type of transition described is not permitted on a branch undergoing no evolution.

\begin{prop} \label{prop:17}
    Let $N_I$ and $w,x,y,z$ be defined as before.  Then as $\eta_I, \eta_J \rightarrow 0$, we have \begin{align*}
        w,x \rightarrow \frac{1}{N_I} \ \textnormal{and} \ (1-w)v,y,z \rightarrow 0.
    \end{align*}
\end{prop}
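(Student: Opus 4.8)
The plan is to exploit that the two interior edges whose lengths $\eta_I$ and $\eta_J$ tend to $0$ are ``degenerating.'' On an edge of length $\eta$, the number $L_\eta$ of surviving descendants of a single copy has $\mathbf{P}(L_\eta\geq 2)=e^{-(\lambda-\mu)\eta}(1-p_0(\eta))^2\,q(\eta)/(1-q(\eta))$, which tends to $0$ as $\eta\to 0$ since $q(0)=0$ and $p_0(0)=0$. Because $N_I$ is determined by the process above $I$ and is unaffected by $\eta_I,\eta_J$, conditioning on $N_I$ a union bound gives $\mathbf{P}_{N_I}(\text{some lineage at }I\text{ has}\geq 2\text{ surviving descendants at }J)\leq N_I\,\mathbf{P}(L_{\eta_I}\geq 2)\to 0$. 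The same holds under $\mathbf{P}'$, since $\mathbf{P}_{N_I}(N_A,N_B,N_C,N_D>0)$ converges (by continuity of the GDL likelihood in the edge lengths) to the strictly positive constant $\prod_{n}\bigl(1-p_0(f'_n)^{N_I}\bigr)$: in the limit the four pendant subtrees branch directly off the $N_I$ root lineages and survive independently with fixed positive probability. Denote this exceptional event by $E$, so $\mathbf{P}'_{N_I}(E)\to 0$.

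I would dispatch $y,v,z$ first. Each of the events defining them forces the sampled ancestor lineages of $A,B,C$ at $J$ to contain at least two (for $v$ and $z$) or three (for $y$) distinct lineages that share a common ancestor at $I$, hence forces $E$; therefore each of $y,v,z$ is at most $\mathbf{P}'_{N_I}(E)/(1-w)$. Since $1-w$ is bounded below when $N_I\geq 2$ (and $y=v=z=0$ by convention when $N_I=1$, as the conditioning event $j_a\neq j_b$ is then empty), this yields $y,v,z\to 0$, and a fortiori $(1-w)v\to 0$.

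For $w$ and $x$ I would argue by weak convergence (the Portmanteau theorem; equivalently, since the pruned gene tree is a discrete random object, pointwise convergence of its law upgrades to total-variation convergence). As $\eta_I,\eta_J\to 0$ the law of the pruned tree below $I$, conditioned on $N_I$, converges to the degenerate law in which each of the $N_I$ lineages at $I$ has exactly one descendant at $J$ and one at $K$, off of which a $D$-, a $C$-, and an $(A,B)$-subtree branch immediately; in this limit the four pendant subtrees are mutually independent, identically distributed across the $N_I$ lineages, and $\mathbf{P}'$ simply conditions each block independently on its own survival. Both $w$ and $x$ are bounded functionals of this law ($w$ is a probability; $x=\mathbf{P}'_{N_I}(i_a=i_c,\,j_a=j_b)/w$, a ratio whose denominator $w$ stays bounded away from $0$), so it suffices to evaluate them in the limit. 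Writing $\alpha_i,\beta_i,\gamma_i$ for the numbers of copies of $A,B,C$ below lineage $i$ and $N_A=\sum_i\alpha_i$, etc., the sampled ancestor assignments are conditionally independent with $\mathbf{P}(j_a=i\mid\text{subtrees})=\alpha_i/N_A$ and likewise for $b,c$, so $w=\mathbf{E}'\!\bigl[\sum_i\alpha_i\beta_i/(N_AN_B)\bigr]$ and $\mathbf{P}'_{N_I}(i_a=i_c,j_a=j_b)=\mathbf{E}'\!\bigl[\sum_i\alpha_i\beta_i\gamma_i/(N_AN_BN_C)\bigr]$. Conditioning on the $\alpha$'s and $\beta$'s and using independence of the $\gamma$-block together with exchangeability of the $\gamma_i$ under $\mathbf{P}'$, one gets $\mathbf{E}'[\gamma_i/N_C\mid\alpha,\beta]=1/N_I$ (and similarly $\mathbf{E}'[\beta_i/N_B\mid\alpha]=1/N_I$), so both $w$ and $x$ collapse to $1/N_I$ in the limit. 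Lemma \ref{lem:1} then records that this limit is approached from above.

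The step I expect to be the main obstacle is precisely this passage to the limit for $w$ and $x$: at any $\eta_I>0$ the $A$-, $C$- and $D$-subtrees of a fixed root lineage are \emph{not} independent, being coupled through the random number of its descendants at $J$, so the clean exchangeability computation is genuinely only available in the degenerate limit. One must therefore combine weak convergence of the (discrete but unbounded-support) tree law with the boundedness of the functionals and the strict positivity of the limiting conditioning probability, rather than attempt to compute $w$ and $x$ exactly at finite $\eta_I,\eta_J$. Everything then assembles into Proposition \ref{prop:17}.
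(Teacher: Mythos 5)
Your argument is correct in substance but takes a genuinely different route from the paper's. The paper proves the $w,x$ limits by reusing the exact finite-$\eta$ representation behind Lemma \ref{lem:1}: conditioned on $N_I$, both $w$ and $x$ are expectations of the ratio $\sum_j M_j^2/(\sum_j M_j)^2$ of the numbers $M_j$ of surviving descendants of the $N_I$ copies along the vanishing edges; each $M_j$ converges to $1$ in distribution, the ratio is a bounded functional continuous at $(1,\dots,1)$, and the Portmanteau theorem gives convergence of the expectation to $1/N_I$, while $(1-w)v,y,z\to 0$ is dispatched in one line by noting that $j_a\ne j_b$ precludes $i_a=i_b$ in the limit. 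You never use that finite-$\eta$ representation: you control $y,v,z$ by a union bound on the exceptional event that some lineage at $I$ leaves at least two survivors at $J$, and you obtain $w,x\to 1/N_I$ by passing to the limiting product law of the conditioned tree and running the sampling-weight/exchangeability computation there. Your route is more self-contained and makes explicit two points the paper glosses over (positivity of the limiting conditioning probability, and the legitimacy of passing conditional probabilities to the limit since $w\ge 1/N_I$), at the cost of having to justify convergence of the law of the whole conditioned discrete structure rather than of the finite vector $(M_1,\dots,M_{N_I})$; the paper's route is shorter precisely because Lemma \ref{lem:1} already supplies the needed representation.

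Two repairs are needed before your writeup is complete. First, your claim that for $N_I=1$ the conditioning event $j_a\ne j_b$ is empty is false: a duplication on the edge of length $\eta_I>0$ can still separate $j_a$ from $j_b$. Since $1-w\to 0$ in that case, your bound $y\le \mathbf{P}'_{N_I}(E)/(1-w)$ is vacuous for $N_I=1$, so you need the separate (easy) observation that, conditioned on at least one duplication on a vanishing edge, the conditional probability of a second duplication --- which three distinct ancestors $j_a,j_b,j_c$ require --- tends to $0$; the cases of $z$ and $v$ at $N_I=1$ are fine, since $i_a\ne i_c$ is then impossible and only $(1-w)v\to 0$ is claimed. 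Second, you invoke ``$1-w$ bounded below for $N_I\ge 2$'' before establishing $w\to 1/N_I$; either prove the $w,x$ limits first or phrase the bound as $\liminf(1-w)\ge 1-1/N_I$, which is legitimate because that limit is established independently of the $y,v,z$ step.
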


\begin{proof}
    As $\eta_I \rightarrow 0$, we have the offspring numbers $(M_j)_{j=1}^{N_I}$ each converge to $1$ in distribution as $\eta_I \rightarrow 0$.  Because the $M_j$ are independent, the vector $(M_1,...,M_I)$ converges in distribution to the vector $(1,...,1)$ as $\eta_I \rightarrow 0$.  The function $\mathcal{F}(y_1,...,y_{N_I}) = \sum_{j=1}^{N_I}y_j^2/(\sum_{j=1}^{N_I}y_j)^2$ is bounded and continuous around $(1,...,1)$, so the Portmanteau Theorem implies that $$\mathbf{E}'_{N_I}[\mathcal{F}(M_1(\eta_I),...,M_I(\eta_I)) \rightarrow \mathbf{E}'_{N_I}[\mathcal{F}(M_1(0),...,M_{N_I}(0))] = \frac{1}{N_I}$$ as $\eta_I \rightarrow 0$.  For the connections in probability, see \cite{durrett_2010}.

    Conditioned on $j_a = j_b$, the event $i_a = i_c$ is equivalent to the coalescence of a single independent pair, i.e. $x \rightarrow 1/N_I$ as $\eta_I \rightarrow 0$.  A similar conclusion holds for $w \rightarrow 1/N_I$ as both $\eta_I, \eta_J$ tend to $0$.  In the event that $j_a \ne j_b$, it is not possible for $i_a = i_b$ as $\eta_I \rightarrow 0$, so $(1-w)v,y,z \rightarrow 0$.
\end{proof}

Next, we consider the differences between $k_1,k_2,k_3$ in the case where $\eta_I$ and $\eta_J$ are in the limiting case of $0$.  We have \begin{align*}
    k_1 &\rightarrow \frac{1}{N_I^2} + \frac{1}{N_I}\left(1 - \frac{1}{N_I}\right)\left(1 - \frac{2}{N_I}\right) \frac{1}{3}\mathbf{1}(N_I \geq 3) \\
    &+ \left(1 - \frac{1}{N_I}\right)\left(1 - \frac{3}{N_I}\right)\frac{1}{18}\mathbf{1}(N_I \geq 4) \\
    k_2 &\rightarrow \frac{1}{N_I}\left(1 - \frac{1}{N_I}\right) \left[\frac{1}{N_I} \mathbf{1}(N_I \geq 2) + \left(1 - \frac{2}{N_I}\right)\frac{1}{3}\mathbf{1}(N_I \geq 3) \right] \\
    &+\left(1 - \frac{1}{N_I}\right) \left[\frac{1}{3} \frac{1}{N_I} \mathbf{1}(N_I \geq 3) + \left(1 - \frac{3}{N_I}\right)\frac{1}{9}\mathbf{1}(N_I \geq 4)\right] \\
    k_3 & \rightarrow \left(1 - \frac{1}{N_I}\right) \left[\frac{1}{3} \frac{1}{N_I} \mathbf{1}(N_I \geq 3) + \left(1 - \frac{3}{N_I}\right)\frac{1}{9}\mathbf{1}(N_I \geq 4)\right].
\end{align*} The differences are \begin{align*}
k_1 - k_2 &= \begin{cases}
    1 & \textnormal{if} \ N_I = 1 \\
    \frac{3}{8} & \textnormal{if} \ N_I = 2 \\
    \frac{1}{27} & \textnormal{if} \ N_I = 3 \\
    \frac{1}{N_I^3} + \frac{1}{6N_I^2} - \frac{1}{9N_I} - \frac{1}{18} & \textnormal{if} \ N_I \geq 4,
\end{cases} \\
k_1 - k_3 &= \begin{cases}
    1 & \textnormal{if} \ N_I = 1 \\
    \frac{1}{4} & \textnormal{if} \ N_I = 2 \\
    \frac{5}{81} & \textnormal{if} \ N_I = 3 \\
    \frac{2}{3N_I^3} + \frac{1}{6N_I^2} + \frac{2}{9N_I} - \frac{1}{18} & \textnormal{if} \ N_I \geq 4.
\end{cases}
\end{align*}

To prove the Theorem, we observe Proposition \ref{prop:17} and compute the boundary of the anomaly zone by computing $\mathbf{E}'[k_1 - k_2]$ and $\mathbf{E}'[k_1 - k_3]$ using the modified geometric distribution of $N_I$ described in Section 3.  First, in the limit as $\eta_I$ and $\eta_J$ converge to $0$, the topology of the tree is essentially a star tree.  A star tree has a single interior vertex adjacent to all other vertices in the tree.  In Proposition \ref{prop:18}, we first compute the probability that each species has at least one surviving gene copy.

\begin{prop} \label{prop:18}
    In the limit as $\eta_I$ and $\eta_J$ converge to $0$, the unconditional survival probability converges to $$e^{-(\lambda - \mu)s}(1 - p_0(s))^2 \left\{ \frac{\alpha^4}{1 - \alpha^4 \beta} - 4 \frac{\alpha^3}{1 - \alpha^3 \beta} + 6 \frac{\alpha^2}{1 - \alpha^2 \beta} - 4 \frac{\alpha}{1 - \alpha \beta} + \frac{1}{1 -  \beta}\right\},$$ where $\alpha = \frac{\mu}{\lambda}q(h)$ and $\beta = q(s).$
\end{prop}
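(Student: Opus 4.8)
The plan is to condition on $N_I$, the number of gene copies present at the node $I$ after the stem edge of length $s$ has been traversed, and to use the fact that as $\eta_I,\eta_J\to 0$ the caterpillar species tree degenerates into a star tree: the nodes $I$, $J$, $K$ collapse to a single point, above which sits the stem edge of length $s$ and below which hang four pendant edges whose lengths, by ultrametricity, all converge to the common value $h$. The law of $N_I$ does not depend on $\eta_I$ or $\eta_J$; by the formula recalled in Section 3 it is the modified geometric law with $\mathbf{P}(N_I=i)=p_i(s)=e^{-(\lambda-\mu)s}(1-p_0(s))^2\beta^{\,i-1}$ for $i\ge 1$, where $\beta=q(s)$. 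The first step is thus to show that the unconditional survival probability, viewed as a function of $(\eta_I,\eta_J)$, is continuous at the origin with the value it takes on the star tree. This is the only genuinely delicate point: one argues, by the same vanishing-edge reasoning as in the proof of Proposition \ref{prop:17}, that on each of the two shrinking interior edges no birth or death occurs with probability tending to $1$, so that in the limit each copy present at $I$ propagates, through the instantaneous bifurcations at $I$, $J$, $K$, to exactly one lineage at the top of each pendant edge; a dominated-convergence argument (the conditional survival probabilities are bounded by $1$ and $\sum_i p_i(s)\le 1$) then moves the limit inside the sum over the values of $N_I$.

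Granting the star-tree reduction, fix $i\ge 1$ and condition on $N_I=i$. Each of the $i$ copies at the collapsed internal node gives rise to one lineage entering each of the four pendant edges, and these $4i$ lineages then evolve as independent linear birth--death processes for time $h$. A single such lineage leaves no surviving descendant at its leaf with probability $p_0(h)=\tfrac{\mu}{\lambda}q(h)=\alpha$, so the probability that species $m\in\{A,B,C,D\}$ receives at least one copy is $1-\alpha^{i}$; furthermore the four species-level survival events are conditionally independent given $N_I=i$, since each is determined by the evolution on one of the four disjoint and independent pendant edges. Hence $\mathbf{P}(N_A,N_B,N_C,N_D\ge 1\mid N_I=i)=(1-\alpha^{i})^{4}$, a relation that also holds trivially at $i=0$, which contributes nothing to the survival probability.

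It remains to average over $N_I$. Writing
\[
\mathbf{P}(N_A,N_B,N_C,N_D\ge 1)\;=\;e^{-(\lambda-\mu)s}(1-p_0(s))^2\sum_{i=1}^{\infty}\beta^{\,i-1}(1-\alpha^{i})^{4},
\]
I would expand the fourth power as $(1-\alpha^{i})^{4}=\sum_{k=0}^{4}\binom{4}{k}(-1)^{k}\alpha^{ki}$ and sum the resulting five geometric series term by term via $\sum_{i\ge 1}\beta^{\,i-1}\alpha^{ki}=\alpha^{k}/(1-\alpha^{k}\beta)$, which is valid because $0\le\alpha,\beta<1$. Collecting these five contributions with the alternating coefficients $1,-4,6,-4,1$ reproduces precisely the bracketed expression in the statement, completing the proof. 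Apart from the continuity/interchange argument of the first paragraph, the whole derivation uses only the law of total probability, the independence of distinct lineages in the birth--death process, and the summation of geometric series.
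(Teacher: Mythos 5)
Your proof is correct and follows essentially the same route as the paper's: condition on $N_I$, use the conditional independence of the four species' survival to get $(1-\alpha^{N_I})^4$, apply the law of total probability with the modified geometric law of $N_I$, expand the fourth power, and sum the five geometric series. The only difference is your added paragraph justifying the interchange of limit and sum as $\eta_I,\eta_J\to 0$, which the paper takes for granted by working directly in the star-tree limit; that extra care is welcome but does not change the argument.
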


\begin{proof}
    Given $N_I = n$, the probability that $N_A = 0$ is $\left[\frac{\mu}{\lambda}q(h)\right]^{n}$.  That $N_A,N_B,N_C,N_D$ are independent conditioned on $N_I$ implies the probability of survival given $N_I  = n$ is $$\left(1 - \left[\frac{\mu}{\lambda}q(h)\right]^{n}\right)^{4}.$$ By the law of total probability, we have the unconditional probability of survival is \begin{align*}
        \sum_{n=1}^{\infty} \left(1 - \left[\frac{\mu}{\lambda}q(h)\right]^{n}\right)^{4} \mathbf{P}(N_I = n) = e^{-(\lambda - \mu)s}(1-p_0(s))^2 \sum_{n=1}^{\infty} \left(1 - \left[\frac{\mu}{\lambda}q(h)\right]^{n}\right)^{4} q(s)^{n-1}.
    \end{align*} Inside the series on the right-hand side, we expand to obtain summands of the form \begin{align*}
    \beta^{-1}\left\{(\alpha^4 \beta)^n - 4(\alpha^3 \beta)^n + 6(\alpha^2 \beta)^n - 4 (\alpha \beta)^n + \beta^n \right\}.
    \end{align*}  The result follows by summing the many geometric series.
\end{proof}

Finally, we give a proof of Theorem \ref{thm:caterpillar}.  It amounts to computing the expectations of $k_1 - k_2$ and $k_2 - k_3$ using measure $\mathbf{P}'$.  The unconditional probability found in Proposition \ref{prop:18} does appear in the exact expression for these expectations, and this expression was provided for completeness.  However, the specific form is not needed to show that $\mathbf{E}'[k_1 - k_2]$ can be negative, as required.

\begin{proof}[Proof of Theorem \ref{thm:caterpillar}]
The difference in probabilities is $$\frac{e^{-(\lambda - \mu)s}(1-p_0(s))^2}{\mathbf{P}(N_A,N_B,N_C,N_D > 0)} \sum_{n=1}^{\infty}(k_1 - k_j)(n) (1 - \alpha^n)^4 \beta^{n-1}, j = 2,3.$$ As $\lambda \rightarrow +\infty$, we find that $\alpha \rightarrow 0$ and $\beta \rightarrow 1$, so $\mathbf{P}(N_A,N_B,N_C,N_D > 0) \rightarrow 1$.  It remains to check whether the series is negative for sufficiently large $\lambda$. For all $n \geq 4$, we have $$\frac{1}{n^3} + \frac{1}{6n^2} - \frac{1}{9n} - \frac{1}{18} \leq -\frac{11}{192},$$ so the series is bounded above by \begin{equation} \label{eqn:k1-k2UB}(1-\alpha^4) + \frac{3}{8}(1 - \alpha^2)\beta + \frac{1}{27}(1-\alpha^3)^4 \beta^2 - \frac{11}{192}\sum_{n=4}^{\infty}(1-\alpha^n)^4\beta^{n-1}. \end{equation} Similarly to the proof of Proposition \ref{prop:18}, the series sums to $$\frac{\alpha^{16}\beta^3}{1-\alpha^{4} \beta} - 4 \frac{\alpha^{12}\beta^3}{1-\alpha^3 \beta} + 6 \frac{\alpha^{8}\beta^3}{1 - \alpha^2\beta}- 4 \frac{\alpha^{4}\beta^3}{1 - \alpha \beta} + \frac{\beta^3}{1-\beta}.$$ As $\lambda$ goes to $+\infty$, the expression in \eqref{eqn:k1-k2UB} converges to $-11/192$.  It follows then that $\mathbf{E}'[k_1-k_2]$ is bounded above by a function that is negative for sufficiently large $\lambda$, finding a zone where $((a,b),(c,d))$ has greater probability than the species topology.

For $k_1 - k_3$ given $N_I$, we note that $k_1 - k_3$ remains positive until $N_I$ is at least $6$.  For $n \geq 6$, we have $$\frac{2}{n^3} + \frac{1}{6n^2} + \frac{2}{9n} - \frac{1}{18} \leq -\frac{7}{648}.$$ Repeating the steps in the previous case shows there is $\lambda$ sufficiently large to provide a zone where $((a,c),(b,d))$ also has greater probability than the species topology.  This completes the proof of the Theorem. 
\end{proof}

\subsection{Computational results}

We plot the \textit{expected} values of $k_1-k_2$ and $k_1 - k_3$ for critical $\lambda$ in two cases of $\mu$ in the case where $\eta_I = \eta_J = 0$ with specific settings for the weights of the root and pendant edges, see Figure \ref{fig:boundary}.  Proposition \ref{prop:16} and Theorem \ref{thm:caterpillar} are both apparent from these results.  Notably, $\lambda$ must be quite large relative to $\mu$ to obtain an anomaly zone.

The results in Figure \ref{fig:boundary} show that anomaly zones of types (II) and (III) can be found for some settings of branch lengths in the case where $\mathcal{T}$ has the rooted caterpillar topology.  Case (I) of no anomaly zone occurs in the lower left region (left and below the left curve); Case (II) where $((a,b),(c,d))$ has maximal probability occurs in the middle region but the species tree has the second highest probability; and Case (III) where all balanced topologies have higher probability than the species tree.  Relative to the net speciation rate $\lambda - \mu$, the width of the regions associated to (II) are quite narrow.

\begin{figure}%
    \centering
    \subfloat[\centering]{{\includegraphics[width=6cm]{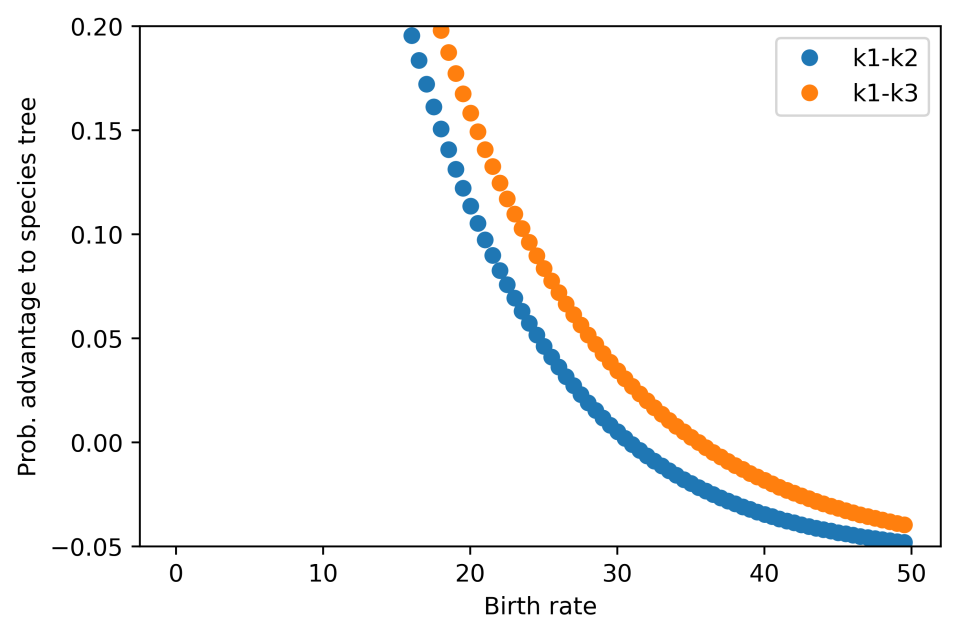} }}%
    \qquad
    \subfloat[\centering]{{\includegraphics[width=6cm]{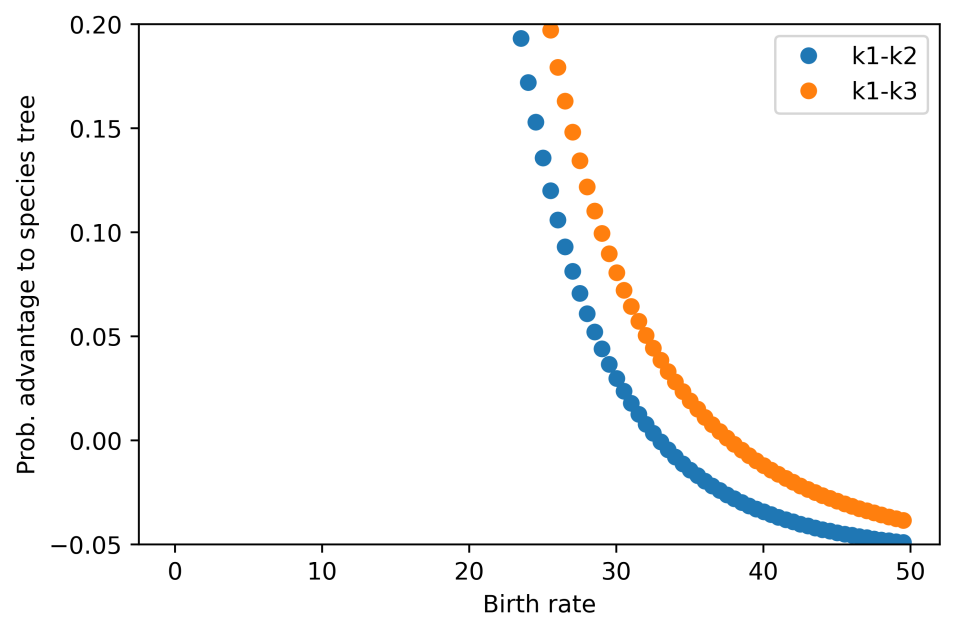} }}%
    \caption{(a) Expected values of $k_1-k_2$ and $k_2 - k_3$ in the case of death rate $\mu = 0.01$, root edge of length $0.01$, and leaf edges all length $0.05$.  (b) Same settings as in (a), except the death rate is $\mu = 3$. }%
    \label{fig:boundary}%
\end{figure}

}

\section{Conclusion}

Through more careful counting and bounding compared to previous efforts in this area, we showed that rooted balanced species quartets have no anomaly zones.  \rev{We also showed that} if anomaly zones exist, we have shown the respective anomalous gene trees must be balanced quartets.   The statements of various Propositions also provide a partial ranking of uniformly sampled gene tree topologies, in results somewhat analogous to \cite{allman2011identifying}.\rev{Moreover, the rooted caterpillar topology on four leaves has branch length settings, provided the birth rate is sufficiently large.  As with the MSC, anomalous gene trees occur when both the interior branch lengths approach $0$. It is not clear what maximal branch lengths provide existence of anomaly zones.  Because the anomaly zones in Section 5.2 required a much larger birth rate $\lambda$ than the death rate, we can hypothesize that GDL anomaly zones may exist, but they are fairly remote.  The utilized birth and death parameters in the fungal data set of \cite{rasmussen2012unified} and the simulation study of \cite{yan2022} are comparable to each other, i.e. the birth rate is taken to be close if not equal to the death rate.  So, it might be expected that anomaly zones for the caterpillar tree are not an important confounding factor, given their remoteness.}

This analysis shows that GDL could have similar issues with gene tree discordance as has been observed with MSC, but the structural information contained in the species tree might be more easily ascertained.  This is because the branching events at speciation points must always occur.  Because of possible anomalous gene trees provided in Theorem 3, it should be expected that this problem only becomes harder as more species are added to the tree.  However, the analysis gets out of hand quickly as more gene trees are possible.  It should be expected that only experimental or simulation evidence is practical to obtain in the case of larger trees.

Because the bounds obtained in Sections 5 and 6 utilize only the \rev{numbers of copies} at particular points in the tree, it should be expected that the results of this paper and those of, e.g. \cite{legriedGDL2021,eulenstein2020unified}, etc. generalize to the case where birth and death rates are taken to vary over time or across edges in the species tree.  There are practical issues raised about how to estimate the rate parameters from data \cite{louca2020extant}, \cite{legried2021birthdeath}, and \cite{legried2021IdentifiabilityInference}. The difficulties raised there do not translate here:  \rev{error-free gene trees are sufficient to recover the species tree.}

\rev{Moreover, the reliance on population sizes rather than specific sequences suggests that there are potential generalizations of these results to other models.  Even if the specific gene sequences engage in \textit{concerted evolution} (see \cite{velandia2016}) where sequences may not evolve independently of each other, the species tree may still be recoverable without estimating the gene tree.  In this setting, the gene tree is not easy to estimate because copies of a given species look so closely related to make distance-based reconstruction difficult.  However, the method of ASTRAL-one really only requires a single copy of each species, so there may be hope of at least reconstructing the species tree.  One could then attempt to perform gene tree reconciliation such as in \cite{rasmussen2012unified} to characterize branching events, though a full characterization of the gene tree branching events may still be impossible.}

\rev{A related unresolved question is the estimation of internal branch lengths of the species tree.  For the MSC, \cite{liu2010} found a method of estimating the species tree divergence time for a rooted triple using a maximum pseudo-likelihood.  By counting how frequently each gene tree topology appears, the percentage of gene trees $p$ following the species tree coincides with the theoretical probability $p = 1 - \frac{2}{3}e^{-\omega}$ where $\omega$ is the interior branch length.  Solving for $p$ gives a consistent estimator of the interior branch length.  A similar result could be proved for GDL, but we expect the inverse problem is not tractable.  One could seek a numerical solution and give a formal proof of the existence of a unique solution to such an equation.  The question becomes harder when there is gene tree estimation error, though one could attempt to extend the results of \cite{roch2015} to GDL with methods used here.}

The analysis of uniformly sampled gene trees as inputs to ASTRAL-one is already complicated, but that does not imply similar difficulties will appear with ASTRAL-multi, \cite{rabiee2019multi}.  In ASTRAL-multi, the multi-labelled gene tree is instead replaced with a singly-labelled gene tree for every choice of species in the gene tree.  The dependencies between same-labelled individuals in a gene tree make analysis seem daunting, but one advantage of ASTRAL-multi is that there is no need to condition on survival of all species.

A disadvantage of conditioning on a present-day observation is that it induces a survivorship bias that can be difficult to model.  In this paper, we engaged in analysis that avoids having to consider this bias.  \rev{In particular, the anomaly zone results in this paper does not require an explicit calculation of the probability of coalescence of two copies from different species, given survival.} Another possibility is to analyze the distribution of pseudoorthologs only, see \cite{smith2022}.  This approach would require new methods, as one conditions \rev{on an explicit pattern of duplications and losses that yields a single-copy gene tree.  This conditional distribution seems more difficult to work with.}

Lastly, the results of this paper suggest but do not give proof of existence \rev{or} non-existence of anomaly zones under the DLCoal model \cite{rasmussen2012unified}.  Rigorous proof would come through a more extensive analysis of the branching patterns than we can reasonably do here.  \rev{Even if balanced quartets have no anomaly zones due to GDL, there are still anomaly zones under MSC.  It is not clear whether the anomaly zone for DLCoal is identical to that of MSC.  For instance}, the resulting gene tree could have so many branches in it that the anomaly zone for the coalescence portion of the model is even larger than that of the simple MSC rooted quartet.  

\section{Acknowledgments}

This work was completed in part while the author was a postdoctoral fellow at the Department of Statistics at the University of Michigan-Ann Arbor, in addition to their current position at the Georgia Institute of Technology.  BL was supported by NSF grant DMS-1646108 and NSF-Simons grant for the Southeast Center for Mathematics and Biology DMS-1764406.

\bibliography{arXivV3.bib}

\rev{

\section*{Appendix}

\subsection*{Numbers of copies}

The first two raw moments are easily computed by summing the geometric series.  The first moment (the mean) is \begin{align*}
    \EE[N_s] &= e^{-(\lambda - \mu)s}\left(1-p_0(s)\right)^2 \sum_{i=1}^{\infty}i q(s)^{i-1} \\
    &= e^{-(\lambda - \mu)s}\left(1-p_0(s)\right)^2 \frac{1}{\left(1-q(s)\right)^2} \\
    &= e^{-(\lambda - \mu)s} \left(\frac{\lambda - \mu q(s)}{\lambda - \lambda q(s)} \right)^2.
\end{align*} The second moment is \begin{align*}
    \EE[N_s^2] &= e^{-(\lambda - \mu)s}\left(1 - p_0(s)\right)^2 \sum_{i=1}^{\infty}i^2 q(s)^{i-1} \\
    &= e^{-(\lambda - \mu)s}\left(1-p_0(s)\right)^2 \frac{1+q(s)}{\left(1-q(s)\right)^3} \\
    &= e^{-(\lambda - \mu)s} \frac{1+q(s)}{1-q(s)} \left(\frac{\lambda - \mu q(s)}{\lambda - \lambda q(s)}\right)^2.
\end{align*} The variance is then computed as $$\Var[N_s] = \EE[N_s^2] - \left(\EE[N_s]\right)^2.$$

One may be interested in the conditional distribution of $N_s$, conditioned on non-extinction by time $s$, i.e. $N_s > 0$.  Because $$\PP(N_s > 0) = 1 - \PP(N_s = 0) = 1 - p_0(s),$$ it follows for any $i > 0$ that $$\PP(N_s = i|N_s > 0) = \frac{\PP(N_s = i)}{\PP(N_s > 0)} = 
e^{-(\lambda - \mu)s}\left(1-p_0(s)\right)q(s)^{i-1}.$$ So $N_s|N_s > 0$ is a geometric random variable with success probability $1-q(s)$.  The mean is $$\EE[N_s|N_s > 0] = \frac{1}{1-q(s)}$$ and the variance is $$\Var[N_s|N_s > 0] = \frac{1+q(s)}{\left(1-q(s)\right)^2} -\frac{1}{\left(1-q(s)\right)^2} = \frac{q(s)}{\left(1-q(s)\right)^2}.$$

Next, the moment generating function can be used to compute some useful statistics. We have \begin{align*}
    M_s(\tau) = \EE[e^{\tau N_s}] &= \frac{\mu}{\lambda}q(s) + e^{-(\lambda - \mu)s}\left(1-p_0(s)\right)^2 \sum_{i=0}^{\infty}e^{\tau i} q(s)^{i-1} \\
    &= \frac{\mu}{\lambda}q(s) + e^{-(\lambda - \mu)s}\left(1 - p_0(s)\right)^2 \frac{e^\tau}{ 1 - q(s)e^{\tau}}
\end{align*} This function is defined for all $x$ such that $1 - e^\tau q(s) > 0$, which is an open neighborhood containing the origin.

The negative raw moments can also be computed.  More generally, consider $N_s^{-P}e^{\tau N_s}$ for integers $P \geq 1$, conditioned on survival.  The $P=1$ case has a closed-form expression: \begin{align*}
    \EE\left[\frac{1}{N_s} e^{\tau N_s} | N_s > 0\right] &= e^{-(\lambda - \mu)s} \left(1 - p_0(s)\right) \sum_{i=1}^{\infty}\frac{1}{i} q(s)^{i-1} \\
    &= e^{-(\lambda - \mu)s} \frac{1-p_0(s)}{q(s)} \int \sum_{i=1}^{\infty}q(s)^{i} \ dq(s) \\
    &= e^{-(\lambda - \mu)s} \frac{1-p_0(s)}{q(s)} \log\left(\frac{1}{1 - q(s)e^\tau}\right).
\end{align*} For greater integers $P$, there is no closed form expression, but the well-known polylogarithm function can be used.  It is defined as $$\Li_{P}(\zeta) = \sum_{i=1}^{\infty} \frac{\zeta^i}{i^P}.$$ Then $$\EE\left[\frac{1}{N_s^P} e^{\tau N_s}| N_s > 0\right] = e^{-(\lambda - \mu)s} \frac{1-p_0(s)}{q(s)}\Li_{P}\left(q(s)e^{\tau}\right).$$

\begin{proof}
For the denominator on the left-hand side, we first start with the fact that $P^{-2} = \int_{\tau=0}^{\infty}\tau e^{-\tau P} \ d\tau$.  Then $$\frac{1}{\left(\sum_{j=1}^{N_I}N_{I,j}\right)^2} = \int_{\tau=0}^{\infty}\tau e^{-\tau\sum_{j=1}^{N_I}N_{I,j}} \ d\tau.$$ The $N_{I,j}$ are identically distributed, so $$\mathbf{E}'\left[\frac{\sum_{j=1}^{N_I}N_{I,j}^2}{\left(\sum_{j=1}^{N_I}N_{I,j}\right)^2} \bigg| N_I \right] = N_I \mathbf{E}'\left[\frac{N_{I,1}^2}{\left(\sum_{j=1}^{N_I}N_{I,j}\right)^2} \bigg |N_I \right].$$ To utilize independence, we separate the $N_{I,1}$ factors from the others.  We have \begin{equation*}
            \mathbf{E}'\left[\frac{N_{I,1}^2}{\left(\sum_{j=1}^{N_I}N_{I,j}\right)^2} \bigg |N_I \right] = \int_{\tau=0}^{\infty} \tau \EE' \left[N_{I,1}^2 e^{-\tau N_{I,1}} | N_I \right] \mathbf{E}'\left[e^{-\tau \sum_{j=2}^{N_{I}} N_{I,j}}| N_I \right] \ dx.
\end{equation*} A couple basic properties of the moment generating function are that $M_{N,1}''(-\tau) = \EE'\left[N_{I,1}^2 e^{-\tau N_{I,1}}\right]$ and $M_{N_{I,1}}(-\tau)^{N_I - 1} = \prod_{j=2}^{N_I}\EE'[e^{-\tau N_{I,1}}|N_I] = \EE'[e^{-\tau \sum_{j=2}^{N_I}N_{I,j}}|N_I]$.  Plugging these into the integral expression completes the proof. 
\end{proof}

}

\subsection*{Other calculations for species trees on three leaves}

Throughout this section, the species tree $\sigma = (\mathcal{T},\mathbf{f})$ is assumed to have three leaf species $A,B$, and $C$ and the topology has $A,B$ are siblings with $C$ as the outgroup.  This rooted topology can be expressed using the Newick tree format:  $((A,B),C)$.  The edge lengths are $s$ for the stem edge, $f$ for the interior edge, and $g$ for the pendant edges incident to $A$ and $B$.  The tree is ultrametric, so the pendant edge incident to $C$ has length $f+g$.  The root vertex is labeled $I$, and the parent vertex to $A$ and $B$ is labeled $J$.  For any vertex $v$ in the species tree, let $N_v$ be the size of the population.  Our problem setting requires us to assume every contemporary species has at least one copy in a given gene tree, meaning $N_A, N_B,$ and $N_C$ are positive.  An example of a possible gene tree is given in Figure 1.  Let $\PP'$ be the probability measure subject to this conditioning.  Expectations computed under the respective measures are denoted $\mathbf{E}$ and $\mathbf{E}'$.

In Proposition \ref{prop:1}, we give the transition probability for the continuous-time process $\{N_s\}_{s \geq 0}$.

\begin{prop}\label{prop:1}
Let $s,f > 0$ so that $s$ and $s+f$ are successive times.  Then $$\PP(N_{s+f} = j|N_s = i) = p_{j}(f)^{(\ast i)},$$ where $p_j(\cdot)^{(\ast i)}$ is the $i$-fold convolution of the function $p_j$.
\end{prop}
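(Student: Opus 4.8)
The plan is to exploit the branching structure of the linear birth--death process underlying GDL. First I would invoke the Markov property of the continuous-time process $\{N_s\}_{s \geq 0}$: conditional on $N_s = i$, the law of $(N_{s+t})_{t \geq 0}$ depends only on the value $i$ and not on the trajectory up to time $s$. This reduces the claim to a statement about the process started from $i$ copies and run for an additional time $f$.

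Next I would use the defining feature of the model, namely that on each edge every gene copy evolves independently and identically according to a single-lineage linear birth--death process with rates $\lambda$ and $\mu$. Label the $i$ copies present at time $s$ by $k = 1, \dots, i$, and let $X_k$ denote the number of descendants of copy $k$ that are alive at time $s+f$, with $X_k = 0$ if that lineage has died out. By the branching property the $X_k$ are mutually independent, and by time-homogeneity each $X_k$ is distributed as $N_f$ started from a single copy, i.e. $\PP(X_k = j) = p_j(f)$, the modified-geometric mass function recalled above. Since $N_{s+f} = \sum_{k=1}^{i} X_k$, the conditional distribution of $N_{s+f}$ given $N_s = i$ is the distribution of a sum of $i$ i.i.d.\ random variables with common mass function $p_\cdot(f)$, which is precisely the $i$-fold convolution $p_j(f)^{(\ast i)}$. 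The degenerate case $i = 0$ is the empty sum and gives a point mass at $0$, consistent with the convention $p_j(f)^{(\ast 0)} = \mathbf{1}(j = 0)$.

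The only genuine subtlety is justifying the branching (independence-of-lineages) property itself, rather than treating it as part of the model definition. This can be handled by the standard coupling in which the process from $i$ copies is realized as the superposition of $i$ independent single-lineage processes; this superposition has the correct generator because the per-copy birth and death rates do not depend on the total population size, so there is no interaction between lineages. Once that is in place the remaining steps are routine bookkeeping, and I expect the write-up to be short.
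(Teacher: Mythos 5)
Your argument is correct and follows essentially the same route as the paper's proof: conditional on $N_s = i$, the $i$ lineages evolve independently and identically, so $N_{s+f}$ is distributed as a sum of $i$ i.i.d.\ copies of $N_f$, giving the $i$-fold convolution $p_j(f)^{(\ast i)}$. The extra justification of the branching property (which the paper simply treats as part of the model definition) and the remark on the $i=0$ convention are fine but not substantively different.
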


\begin{proof}
The progenies of the $i$ individuals at time $s$ are independent and identically distributed random variables, and $N_{s+f}$ is the sum.  So $N_{s+f}$ has the same distribution as $\sum_{j=1}^{i}N_f$.  The distribution of the independent and identically distributed summands is known to be the discrete convolution of $i$ copies of $p_{j}(\cdot)$ evaluated at $f$.  
\end{proof}

In Proposition \ref{prop:2}, we write the survival probability as an expectation over the numbers of copies at the internal nodes of $\mathcal{T}$.

\begin{prop}\label{prop:2}
We have \begin{align*}\PP(N_A,N_B,N_C > 0) &= \EE\left[\left(1 - p_0(g)^{N_J}\right)^2 \left(1 - p_0(f+g)^{N_I}\right)\right]\\
&= \EE\left[\EE\left[\left(1 - p_0(g)^{N_J}\right)^2 | N_I\right] \left(1 - p_0(f+g)^{N_I}\right)\right].\end{align*}
\end{prop}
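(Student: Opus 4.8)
The plan is to condition on the numbers of copies at the two internal vertices $I$ and $J$, exploit the branching structure of the GDL process to factor the survival probability, and then collapse one layer of conditioning with the tower property.

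First I would record the one-edge fact already built into the model: a single gene copy placed at the top of an edge of length $t$ leaves no surviving descendant at the bottom with probability $p_0(t)$, and distinct copies evolve independently. At the speciation vertex $J$, each of the $N_J$ surviving copies bifurcates and contributes exactly one child to the pendant edge leading to $A$ (length $g$) and one to the pendant edge leading to $B$ (also length $g$, by ultrametricity). Hence, conditioned on $N_J$, the event $\{N_A = 0\}$ is an intersection of $N_J$ independent extinction events, so $\PP(N_A = 0 \mid N_J) = p_0(g)^{N_J}$, and likewise $\PP(N_B = 0 \mid N_J) = p_0(g)^{N_J}$, with $\{N_A > 0\}$ and $\{N_B > 0\}$ conditionally independent given $N_J$. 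Similarly, at the root vertex $I$ each of the $N_I$ copies contributes one child to the edge leading to the leaf $C$, which has length $f+g$, so $\PP(N_C = 0 \mid N_I) = p_0(f+g)^{N_I}$.

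Next I would make the key conditional-independence statement precise: given $(N_I, N_J)$, the three events $\{N_A > 0\}$, $\{N_B > 0\}$, $\{N_C > 0\}$ are independent. Conditioned on $N_I$, the process on the $C$-pendant edge is driven only by the $N_I$ initial copies there and is independent of everything on the $I$-to-$J$ edge and below it (in particular of $N_J$); conditioned on $N_J$, the processes on the two pendant edges at $J$ are independent of each other and of the history that produced $N_J$. Multiplying the conditional probabilities from the previous step gives
$$\PP(N_A, N_B, N_C > 0 \mid N_I, N_J) = \left(1 - p_0(g)^{N_J}\right)^2 \left(1 - p_0(f+g)^{N_I}\right),$$
and taking expectations over $(N_I, N_J)$ yields the first displayed equality. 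The second equality is the tower property: conditioning first on $N_I$ and noting that $1 - p_0(f+g)^{N_I}$ is a function of $N_I$, it factors out of the inner conditional expectation.

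The only delicate point will be phrasing the conditional-independence claim cleanly from the construction of the process; the rest is bookkeeping. I would also note that the nested form is the one worth isolating, since the inner expectation $\EE\left[\left(1 - p_0(g)^{N_J}\right)^2 \mid N_I\right]$ depends only on the conditional law of $N_J$ given $N_I$, which by Proposition \ref{prop:1} is the $N_I$-fold convolution of $p_{\cdot}(f)$ — so this proposition reduces the survival probability to a quantity amenable to the generating-function computations for $N_s$.
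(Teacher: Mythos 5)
Your argument is correct and is essentially the paper's own proof: you condition on $(N_I,N_J)$, use the branching/Markov structure to get $\PP(N_A,N_B,N_C>0 \mid N_I,N_J) = \left(1-p_0(g)^{N_J}\right)^2\left(1-p_0(f+g)^{N_I}\right)$, and then average, with the second equality being the tower property. The paper carries out the same computation, only written as explicit double sums over the values of $N_I$ and $N_J$ using the transition law from Proposition \ref{prop:1}.
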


\begin{proof}
The second equality is immediate by the definition of conditional expectation. %and the Markov property.  
The rest of the proof is for the first equality.  Condition on the specified values of $N_I$ and $N_J$.  We first have $\PP(N_I = i) = p_i(t)$ and $\PP(N_J = j|N_I = i) = p_{\bullet}(f)$.  The Markov property and conditional independence imply \begin{align*}&\PP(N_A,N_B,N_C > 0) \\
&= \sum_{i=1}^{\infty}\PP(N_A,N_B,N_C > 0|N_I = i)\PP(N_I = i) \\
&= \sum_{i=1}^{\infty}\PP(N_A, N_B > 0|N_I = i)\PP(N_C > 0|N_i = i) \PP(N_I = i)\\
&= \sum_{i=1}^{\infty} \sum_{j=1}^{\infty}\PP(N_A, N_B > 0|N_J = j, N_I = i) \\
&\ \ \ \ \ \ \ \ \ \ \ \ \ \times \PP(N_J = j|N_I = i)\PP(N_C > 0|N_I = i) \PP(N_I = i).\end{align*}  Using the modified geometric mass function and the transition probability from Proposition \ref{prop:1}, the right-hand side equals \begin{align*}
&\sum_{i=1}^{\infty}\sum_{j=1}^{\infty}\PP(N_A, N_B > 0|N_J = j)\PP(N_J = j|N_I = i) \PP(N_C > 0 |N_I = i)\PP(N_I = i) \\
&= \sum_{i=1}^{\infty} \sum_{j=1}^{\infty}\PP(N_A > 0|N_J = j)\PP(N_B > 0|N_J = j)\PP(N_C > 0|N_I = i)p_i(t)p_{j}(f)^{(\ast i)} \\
&=\sum_{i=1}^{\infty}\sum_{j=1}^{\infty} \left(1 - p_0(g)^j\right)^2 \left(1 - p_0(f+g)^i\right) p_i(t)p_{j}(f)^{(\ast i)} \\
&= \EE\left[\left(1 - p_0(g)^{N_J}\right)^2 \left(1 - p_0(f+g)^{N_I}\right) \right]. %\\
%&= \EE\left[\left[\left(1 - p_0(g)^{N_J}\right)^2 | N_I\right] \left(1 - p_0(f+g)^{N_I}\right)\right]
\end{align*}
\end{proof}

In Proposition \ref{prop:3}, we compute the conditional expectation of any functional of the population size at $I$.

\begin{prop}\label{prop:3}
We have $$\EE'[h(N_I)] = \frac{\EE\left[h(N_I) \EE\left[\left(1 - p_0(g)^{N_J}\right)^2 |N_I\right] \left(1 - p_0(f+g)^{N_I}\right) \right]}{\PP(N_A, N_B, N_C > 0)}.$$ In the numerator, the outside expectation is subject to the distribution $p_i(t)$ and the inside expectation is subject to the distribution $p_j(f)^{(\ast N_I)}$.
\end{prop}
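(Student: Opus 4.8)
The plan is to treat $\EE'[h(N_I)]$ as an expectation under the conditional law given the survival event $\{N_A,N_B,N_C>0\}$, so that by definition
\[
\EE'[h(N_I)] = \frac{\EE\!\left[h(N_I)\,\mathbf{1}(N_A,N_B,N_C>0)\right]}{\PP(N_A,N_B,N_C>0)},
\]
where the denominator has already been evaluated in Proposition \ref{prop:2}. Thus the whole task reduces to showing that the numerator equals $\EE\big[h(N_I)\,\EE[(1-p_0(g)^{N_J})^2\mid N_I]\,(1-p_0(f+g)^{N_I})\big]$, which is exactly the computation carried out in Proposition \ref{prop:2}, now with the extra $N_I$-measurable factor $h(N_I)$ carried along.

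First I would condition on $N_I$ and then on $N_J$, writing the numerator as the double sum $\sum_{i\ge 1}\sum_{j\ge 1} h(i)\,\PP(N_A,N_B,N_C>0\mid N_J=j,N_I=i)\,\PP(N_J=j\mid N_I=i)\,\PP(N_I=i)$. Next I would invoke the branching/Markov structure of the GDL process: given $N_I=i$, the $i$ lineages present at $I$ evolve independently, so the subtree feeding $C$ and the subtree feeding $\{A,B\}$ through $J$ are conditionally independent and survival on the $C$-side depends on $N_I$ only; and given $N_J=j$, the lineages leading to $A$ and to $B$ are conditionally independent with survival depending on $N_J$ only. This lets me replace the joint survival probability by $\PP(N_A>0\mid N_J=j)\,\PP(N_B>0\mid N_J=j)\,\PP(N_C>0\mid N_I=i)$, where each factor is explicit: $\PP(N_C>0\mid N_I=i)=1-p_0(f+g)^i$ since $C$ receives $i$ independent copies down an edge of length $f+g$, and $\PP(N_A>0\mid N_J=j)=\PP(N_B>0\mid N_J=j)=1-p_0(g)^j$.

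Then I would use Proposition \ref{prop:1}, namely $\PP(N_J=j\mid N_I=i)=p_j(f)^{(\ast i)}$, to recognize the inner sum over $j$ as a conditional expectation, $\sum_{j\ge 1}(1-p_0(g)^j)^2\,p_j(f)^{(\ast i)} = \EE[(1-p_0(g)^{N_J})^2\mid N_I=i]$, this inner expectation being taken under the $i$-fold convolution law. Summing over $i$ against $h(i)\,p_i(t)$ reassembles $\EE\big[h(N_I)\,\EE[(1-p_0(g)^{N_J})^2\mid N_I]\,(1-p_0(f+g)^{N_I})\big]$, and dividing by $\PP(N_A,N_B,N_C>0)$ gives the claim.

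I do not expect a genuine obstacle here: the content is bookkeeping with the tower property and the conditional-independence structure already exploited in Proposition \ref{prop:2}. The only points needing care are (i) keeping straight which law each expectation is taken under --- the outer one against $p_i(t)$ for $N_I$ and the inner one against $p_j(f)^{(\ast N_I)}$ for $N_J$ --- and (ii) checking that the Markov property genuinely lets survival of the descendants of $I$ factor through $N_J$ for $\{A,B\}$ and directly through $N_I$ for $C$. Since $h(N_I)$ is measurable with respect to $N_I$, it simply rides through every conditioning step untouched, which is why the proof is a near-verbatim extension of Proposition \ref{prop:2}.
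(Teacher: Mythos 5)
Your proposal is correct and follows essentially the same route as the paper: the paper applies Bayes' rule to $\PP(N_I = i \mid N_A,N_B,N_C>0)$, factors the survival probability via conditional independence given $N_I$, and then reuses the Proposition \ref{prop:2} computation to express $\PP(N_A,N_B>0\mid N_I=i)$ as the inner expectation against $p_j(f)^{(\ast i)}$ --- exactly the bookkeeping you describe with $h(N_I)$ carried along as an $N_I$-measurable factor. No substantive difference to report.
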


\begin{proof}
We use Bayes' Rule to unpack the conditional expectation.  Observe that $\PP(N_I = 0|N_A, N_B, N_C > 0) = 0$.  Then \begin{align*}
    \EE'[h(N_I)] &= \sum_{i=1}^{\infty}h(i) \PP(N_I = i|N_A, N_B, N_C > 0) \\
    &= \sum_{i=1}^{\infty} h(i)\frac{\PP(N_A, N_B, N_C > 0|N_I = i)}{\PP(N_A,N_B,N_C > 0)}\PP(N_I = i) \\
    &= \sum_{i=1}^{\infty} h(i)\frac{\PP(N_A, N_B > 0|N_I = i)\PP(N_C > 0|N_I = i)}{\PP(N_A, N_B, N_C > 0)}\PP(N_I = i) \\
    %&= \sum_{i=1}^{\infty}h(i) \frac{\sum_{j=1}^{\infty}\PP(N_A > 0 |N_J = j)\PP(N_B > 0|N_J = j)\PP(N_J = j|N_I = i)}{\PP(N_A, N_B, N_C > 0)}\PP(N_C > 0|N_I = i)\PP(N_I = i)
    &= \frac{\EE\left[h(N_I) \EE\left[\left(1 - p_0(g)^{N_J}\right)^2 |N_I\right] \left(1 - p_0(f+g)^{N_I}\right) \right]}{\PP(N_A, N_B, N_C > 0)}.
\end{align*}
\end{proof} 

Conditioned on $N_I$, for any $j \in \{1,2,...,N_I\}$ let $N_{I,j} \geq 0$ be the size of the progeny of individual $i$ that survives to $J$.  We now expand on the terms developed in Propositions \ref{prop:2} and \ref{prop:3}.

\begin{prop} \label{prop:4}
We have $$\EE\left[\left(1 - p_0(g)^{N_J}\right)^2 \bigg|N_I \right] = 1 - 2 \left\{M_f\left[\log(p_0(g) )\right]\right\}^{N_I} + \left\{M_f \left[2 \log(p_0(g) ) \right]\right\}^{N_I}.$$
\end{prop}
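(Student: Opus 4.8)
The plan is to expand the square inside the conditional expectation and recognize each resulting term as a power of the progeny moment generating function. Writing
\[
\left(1 - p_0(g)^{N_J}\right)^2 = 1 - 2\,p_0(g)^{N_J} + p_0(g)^{2N_J}
\]
and using linearity of conditional expectation, the claim reduces to showing that for $c \in \{1,2\}$,
\[
\EE\left[p_0(g)^{c\,N_J} \,\middle|\, N_I\right] = \left\{M_f\left(c\log p_0(g)\right)\right\}^{N_I},
\]
since then the three terms become $1$, $-2\,M_f(\log p_0(g))^{N_I}$, and $M_f(2\log p_0(g))^{N_I}$, which is exactly the asserted identity.

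To establish this reduced identity, I would invoke the description of the conditional law of $N_J$ used in Proposition \ref{prop:3}: given $N_I$, the variable $N_J$ is distributed as the $N_I$-fold convolution $p_\cdot(f)^{(\ast N_I)}$, that is, as a sum $N_{I,1} + \dots + N_{I,N_I}$ of independent copies $N_{I,j} \ge 0$ of the single-lineage progeny over a branch of length $f$ (these are the $N_{I,j}$ introduced just before the Proposition). Then, writing $p_0(g)^{c N_J} = e^{(c\log p_0(g)) N_J} = \prod_{j=1}^{N_I} e^{(c\log p_0(g)) N_{I,j}}$ and using independence of the $N_{I,j}$,
\[
\EE\left[p_0(g)^{c N_J} \,\middle|\, N_I\right] = \prod_{j=1}^{N_I} \EE\left[e^{(c\log p_0(g)) N_{I,j}}\right] = \left\{M_f\left(c\log p_0(g)\right)\right\}^{N_I},
\]
where $M_f(\tau) = \EE[e^{\tau N_f}]$ is the progeny moment generating function recorded in the Appendix. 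Substituting $c = 1$ and $c = 2$ and collecting terms completes the argument.

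The only point requiring genuine care is the domain of $M_f$. Since $p_0(g) = \tfrac{\mu}{\lambda} q(g) \in [0,1)$, both $\log p_0(g)$ and $2\log p_0(g)$ are at most $0$, hence lie in the interval $(-\infty,\,-\log q(f))$ on which the series defining $M_f(\tau) = \tfrac{\mu}{\lambda} q(f) + e^{-(\lambda-\mu)f}(1-p_0(f))^2 \tfrac{e^{\tau}}{1 - q(f) e^{\tau}}$ converges, the condition $1 - q(f) e^{\tau} > 0$ being automatic because $q(f) < 1$ and $e^{\tau} \le 1$. In the degenerate case $\mu = 0$ one has $p_0(g) = 0$; then $p_0(g)^{N_J}$ is read with the convention $0^0 = 1$, so $p_0(g)^{N_J} = \mathbf{1}(N_J = 0)$, and the identity persists with $M_f(\log p_0(g))$ interpreted as $\lim_{\tau \to -\infty} M_f(\tau) = p_0(f) = \PP(N_f = 0)$. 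I do not expect any real obstacle: the substance is entirely the convolution/independence structure of the birth–death progeny inherited from Proposition \ref{prop:3}, and expanding the square is routine.
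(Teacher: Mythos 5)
Your proposal is correct and follows essentially the same route as the paper's proof: expand the square and use that, conditionally on $N_I$, $N_J$ is a sum of $N_I$ i.i.d.\ copies of $N_f$, so each term becomes a power of $M_f$ evaluated at $\log p_0(g)$ or $2\log p_0(g)$. Your added remarks on the domain of convergence of $M_f$ and the degenerate case $\mu=0$ are fine but not needed beyond what the paper does.
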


\begin{proof}
Given $N_I$, the random variable $N_J$ is the sum of $N_I$ iid random variables $N_{I,j}$ taking the same modified geometric distribution with branch length $f$.  Recall the distribution of $N_{I,j}$ is the same as $N_f$, for the number of progeny of a single individual at the end of a branch of length $f$. The left-hand side then equals \begin{align*}&1 - 2 \left( \EE\left[p_0(g)^{N_f}\right] \right)^{N_I} + \left(\EE\left[ \{p_0(g)^2\}^{N_f}\right] \right)^{N_I} \\
&= 1 - 2 \left\{ M_f \left[ \log(p_0(g)) \right] \right\}^{N_I} + \left\{M_f \left[ \log(p_0(g)^2)\right]\right\}^{N_I}.\end{align*}
\end{proof}

From here, a more explicit expression for $\PP(N_A, N_B, N_C > 0)$ is revealed.  We let $a_3 = M_f\left(\log(p_0(g))\right)$, $b_3 = M_f\left(2 \log(p_0(g)) \right)$ and $c_3 = p_0(f+g)$.

\begin{prop} \label{prop:5}
We have \begin{align*}\PP(N_A, N_B, N_C > 0) &= 1 - 2 M_t(\log a_3) + M_t(\log b_3) \\
&- M_t(\log c_3) + 2 M_t(\log a_3 + \log c_3) \\
&- M_t(\log b_3 + \log c_3).\end{align*}
\end{prop}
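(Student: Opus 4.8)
The plan is to obtain Proposition \ref{prop:5} by algebraically recombining Propositions \ref{prop:2} and \ref{prop:4}, with no new probabilistic input. First I would start from the conditional form of Proposition \ref{prop:2},
\[
\PP(N_A,N_B,N_C>0) = \EE\left[\EE\left[\left(1-p_0(g)^{N_J}\right)^2\,\middle|\,N_I\right]\left(1-p_0(f+g)^{N_I}\right)\right],
\]
and substitute the inner conditional expectation computed in Proposition \ref{prop:4}, namely $\EE[(1-p_0(g)^{N_J})^2\mid N_I] = 1 - 2a_3^{N_I} + b_3^{N_I}$ with $a_3 = M_f(\log p_0(g))$ and $b_3 = M_f(2\log p_0(g))$. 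Writing $c_3 = p_0(f+g)$, this gives
\[
\PP(N_A,N_B,N_C>0) = \EE\left[\left(1 - 2a_3^{N_I} + b_3^{N_I}\right)\left(1 - c_3^{N_I}\right)\right].
\]

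Next I would expand the product inside the expectation into six terms and apply linearity of expectation, using $a_3^{N_I}c_3^{N_I} = (a_3c_3)^{N_I}$ and $b_3^{N_I}c_3^{N_I} = (b_3c_3)^{N_I}$:
\[
\PP(N_A,N_B,N_C>0) = 1 - 2\,\EE[a_3^{N_I}] + \EE[b_3^{N_I}] - \EE[c_3^{N_I}] + 2\,\EE[(a_3c_3)^{N_I}] - \EE[(b_3c_3)^{N_I}].
\]
The key observation is that for any constant $r>0$ one has $\EE[r^{N_I}] = \EE[e^{N_I\log r}] = M_t(\log r)$ by the very definition of the moment generating function of the root population $N_I = N_t$, taken under the unconditioned mass function $p_i(t)$ (the same outer measure that appears in Propositions \ref{prop:2} and \ref{prop:3}). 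Applying this with $r$ equal to $a_3$, $b_3$, $c_3$, $a_3c_3$, and $b_3c_3$, and using $\log(a_3c_3)=\log a_3 + \log c_3$ together with $\log(b_3c_3)=\log b_3 + \log c_3$, yields exactly the claimed identity.

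I do not expect a genuine obstacle: the statement is a bookkeeping consequence of the two preceding propositions. The only point deserving a word of care is that $M_t$ must be finite at the arguments that appear. Since $N_I$ is a nonnegative integer-valued random variable, $M_t(\tau)$ is finite for every $\tau\le 0$, and each of $a_3,b_3,c_3$ lies in $(0,1]$ (with the conventions $\log 0 = -\infty$ and $0^0=1$ in the degenerate case $\mu=0$, where $p_0(g)=0$ forces $a_3=b_3=0$), so all the arguments $\log a_3$, $\log b_3$, $\log c_3$, $\log a_3 + \log c_3$, $\log b_3 + \log c_3$ are nonpositive and the right-hand side is well-defined.
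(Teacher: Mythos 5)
Your proposal is correct and follows essentially the same route as the paper: substitute Proposition \ref{prop:4} into Proposition \ref{prop:2}, expand the product, and identify each term $\EE[r^{N_I}]=\EE[e^{N_I\log r}]$ with $M_t(\log r)$ since $N_I$ has the distribution of $N_t$. The added remark on finiteness of $M_t$ at nonpositive arguments is a harmless extra precaution not present in the paper.
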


\begin{proof}
Combining Propositions \ref{prop:2} and \ref{prop:4}, we have \begin{align*}
    \PP(N_A,N_B,N_C > 0) &= \EE \left[ (1 - 2a_3^{N_I} + b_3^{N_I}) \left(1 - c_3^{N_I}\right) \right].
\end{align*}   Expanding the right-hand side implies \begin{align*}&1 - 2 \EE[a_3^{N_I}] + \EE[b_3^{N_I}] - \EE[c_3^{N_I}] + 2 \EE[ \left(a_3 c_3\right)^{N_I}] - \EE[(b_3 c_3)^{N_I}] \\
&= 1 - 2 \EE[e^{N_I \log a_3}] + \EE[e^{N_I \log b_3}] - \EE[e^{N_I \log c_3}] + 2 \EE[e^{N_I \log(a_3 c_3)}] - \EE[e^{N_I \log(b_3 c_3)}].\end{align*} Finally, $N_I$ has the same distribution as $N_t$, so it has the same moment generating function.
\end{proof}

In the next Proposition, we expand the numerator of $\EE'[h(N_I)]$.

\begin{prop} \label{prop:6}
The numerator in the statement of Proposition \ref{prop:3} simplifies to $$\EE\left[h(N_I) \left\{ (1 - 2 \{M_f [\log(p_0(g))] \}^{N_I} + \left\{M_f [2 \log(p_0(g)) ] \right\}^{N_I})(1 - p_0(f+g)^{N_I}) \right\}\right].$$
\end{prop}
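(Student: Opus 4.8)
The plan is to obtain this identity as an immediate consequence of Propositions \ref{prop:3} and \ref{prop:4}, so the argument is essentially a bookkeeping step recorded here for later use. First I would write down the numerator exactly as it appears in Proposition \ref{prop:3}, namely $\EE\!\left[h(N_I)\,\EE\!\left[\left(1-p_0(g)^{N_J}\right)^2 \mid N_I\right]\left(1-p_0(f+g)^{N_I}\right)\right]$, where the outer expectation is taken with respect to the law of $N_I$ (the modified geometric with branch length $t$) and the inner conditional expectation is taken with respect to $N_J \mid N_I$.

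Next I would invoke Proposition \ref{prop:4}, which supplies the closed form
$$\EE\!\left[\left(1-p_0(g)^{N_J}\right)^2 \mid N_I\right] = 1 - 2\left\{M_f\!\left[\log(p_0(g))\right]\right\}^{N_I} + \left\{M_f\!\left[2\log(p_0(g))\right]\right\}^{N_I}.$$
The right-hand side is a deterministic function of $N_I$, as are $h(N_I)$ and $1-p_0(f+g)^{N_I}$; substituting it into the numerator and collecting the three factors under a single outer expectation yields precisely the stated expression. The one point worth a sentence of justification is that this substitution is legitimate, i.e. that one may replace $\left(1-p_0(g)^{N_J}\right)^2$ by its conditional expectation given $N_I$ inside the product with $h(N_I)$ and $1-p_0(f+g)^{N_I}$; this is just the tower property together with the fact that functions of $N_I$ pull out of $\EE[\,\cdot \mid N_I]$.

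There is no genuine obstacle here: the analytic content was already discharged in computing the conditional expectation of Proposition \ref{prop:4} and in the Bayes-rule manipulation of Proposition \ref{prop:3}, so this Proposition merely packages their combination. If anything, the only thing to watch is notational consistency — e.g. writing $\log(p_0(g)^2) = 2\log p_0(g)$ in the argument of $M_f$ and keeping straight which law each expectation is taken against — which matters when the formula is later combined with the expansion of $\PP(N_A,N_B,N_C>0)$ from Proposition \ref{prop:5} to produce the implicit expression for $\EE'[h(N_I)]$.
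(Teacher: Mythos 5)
Your proposal is correct and matches the paper's argument, which likewise obtains the identity by substituting the closed form from Proposition \ref{prop:4} into the numerator of Proposition \ref{prop:3}; the paper simply states this is immediate, while you spell out the tower-property justification. No further comment is needed.
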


\begin{proof}
This is immediate by using Proposition \ref{prop:4}.
\end{proof}

These Propositions can be applied to any non-negative function $h$, though the most natural choice is to use $$h(N_I) = \EE'\left[ \frac{\sum_{j=1}^{N_I}N_{I,j}^2}{\left(\sum_{j=1}^{N_I}N_{I,j}\right)^2} \bigg| N_I \right]$$ from the statement of Theorem 1. However, in \cite{legriedGDL2021}, interesting results were obtained in the case where $h(N_I) = 1/N_I^2$, which is a lower bound on the excessive weight associated to the species tree topology.  Recall $\mathcal{T}$ is the species tree topology, and let $\mathcal{T}'$ be either of the two alternative topologies.  Then we have $$\PP'(u(t) = \mathcal{T}) - \PP'(u(t) = \mathcal{T}') \geq \EE'\left[\frac{1}{N_I^2}\right],$$ by Lemma 1 of \cite{legriedGDL2021}.  This observation was important to show that the rooted species tree on three leaves is identifiable from gene trees generated by GDL, the unrooted species tree on four leaves is identifiable from unrooted gene trees generated by gene trees generated by GDL, and that ASTRAL-one is a statistically consistent estimator of any unrooted species tree on any fixed number of leaves as the number of gene trees goes to infinity.  However, understanding the distribution of $u(t)$ better requires computing $\EE'[h(N_I)]$ explicitly.  This work is left open here.

\end{document}